\newtheorem{theorem}{Theorem}[section]
\newtheorem{lemma}[theorem]{Lemma}
\newtheorem{cor}[theorem]{Corollary}
\newtheorem{prop}{Proposition}[section]
\theoremstyle{definition}
\newtheorem{exm}{Example}[section]
\newtheorem{rem}{Remark}[section]
\newtheorem{asm}{Assumption}[section]
\DeclareMathOperator*{\argmin}{arg\,min}
\newcommand{\R}{\mathbb{R}}
\newcommand{\N}{\mathbb{N}}
\newcommand{\C}{\mathbb{C}}
\newcommand{\ve}{\varepsilon}
\newcommand{\parr}[2]{\frac{\partial #1}{\partial #2}}
\DeclarePairedDelimiterX{\infdivx}[2]{(}{)}{%
  #1\;\delimsize\|\;#2%
}
\newcommand{\inner}[1]{\left\langle #1 \right\rangle}
\DeclarePairedDelimiter{\norm}{\lVert}{\rVert}
\newcommand{\E}[1]{\mathrm{E}\left[ #1 \right]}
\renewcommand{\hat}{\widehat}
\renewcommand{\P}[1]{\mathrm{P}\left( #1 \right)}
\newcommand{\one}{\mathbf{1}}
\newcommand{\cp}{\overset{p}{\rightarrow}}
\newcommand{\ca}{\overset{\text{a.s.}}{\rightarrow}}
\newcommand{\htheta}{\hat{\theta}}
\newcommand{\EE}[2]{\mathrm{E}_{#1}\left[ #2 \right]}
\newcommand{\ttheta}{\theta'}
\newcommand{\supp}[1]{\mathrm{supp}\left( #1 \right)}
\newcommand{\PP}[2]{\mathrm{P}_{#1}\left( #2 \right) }
\renewcommand{\mod}[1]{{\ifmmode\text{\rm\ (mod~$#1$)}\else\discretionary{}{}{\hbox{ }}\rm(mod~$#1$)\fi}}
\newcommand{\ra}{\rightarrow}
\renewcommand{\d}{\mathrm{d}}
\numberwithin{equation}{section}
\numberwithin{figure}{section}
\numberwithin{table}{section}
\newcommand{\rs}{\rightsquigarrow}
\newcommand{\hpsi}{\hat{\psi}}
\newcommand{\tmu}{\tilde{\mu}}
\setlist{  
listparindent=\parindent,
parsep=0pt,
}
\begin{document}

\title{Inference under partial identification with minimax test statistics}

\author{Isaac Loh}
\address{Department of Economics and Finance, UNC Wilmington, Wilmington North Carolina 28403}
\email{lohi@uncw.edu}
\subjclass[2000]{Primary 62G10; Secondary 62G20}
\keywords{Nonparametric inference; partial identification}
\thanks{\href{https://uncw4-my.sharepoint.com/personal/lohi_uncw_edu/_layouts/15/onedrive.aspx?id=\%2Fpersonal\%2Flohi\%5Funcw\%5Fedu\%2FDocuments\%2FDesktop\%2FErgApprox\%2Fc\%5Fpoly\%5F3\%2Epdf&parent=\%2Fpersonal\%2Flohi\%5Funcw\%5Fedu\%2FDocuments\%2FDesktop\%2FErgApprox&ga=1}{\textcolor{blue}{Link to latest version}.} }

\begin{abstract}
We provide a means of computing and estimating the asymptotic distributions of statistics based on an outer minimization of an inner maximization. Such test statistics, which arise frequently in moment models, are of special interest in providing hypothesis tests under partial identification. Under general conditions, we provide an asymptotic characterization of such test statistics using the minimax theorem, and a means of computing critical values using the bootstrap. Making some light regularity assumptions, our results augment several asymptotic approximations that have been provided for partially identified hypothesis tests, and extend them by mitigating their dependence on local linear approximations of the parameter space. These asymptotic results are generally simple to state and straightforward to compute (esp.\ adversarially). 
\end{abstract}

\maketitle

\section{Introduction}

This paper is concerned with the computation and estimation of the asymptotic distribution of statistics that are based on minimax values. Such statistics are often of interest when one is interested in testing for the existence of a nonempty identified set, i.e.\ when hypothesis testing is conducted under the assumption of partial identification. When an identified set is characterized as the solution to a system of moment equations, the existence of a nonempty identified set can be consistently tested with adequate knowledge of how the moment functions will asymptotically behave thereon. We illustrate that test statistics designed to capture the behavior of these identifying relations often have a minimax formulation whose limiting distribution can be systematically computed, or at least bounded above, to form a hypothesis test. 

Our results occur in the setting where one is concerned with minimizing a real-valued criterion function $\ell(\theta)$ over a parameter space $\Theta$. If $\ell$ can be estimated, consideration of the asymptotic distribution of the optimal value of the estimate in finite samples has been of interest since, at least, the $J$-test was proposed for overidentified GMM (\cite{Sargan1958}, \cite{H1982}). Lately, there has been extensive interest in extending this methodology to the partially identified setting, and fruitful work in both describing hypothesis tests (see \cite{S2012}, \cite{CNS2023}) and confidence regions (\cite{Tao2015}, \cite{Zhu2020}, \cite{Fan2023}) therein. In practice, in order to test a given hypothesis, one often takes $\Theta$ to be a subset  containing points of a larger, fixed parameter space which conform to that hypothesis. Our results can flexibly accommodate a range of parameter spaces, and can thereby facilitate tests of a range of analogous hypotheses. As an auxiliary exercise, we also analytically characterize the distance of certain random vectors from convex sets, such as those representing shape restrictions enforced on the random element (c.f.\ \cite{Fang2021}). 

As a point of departure, we note that $\ell$ can oftentimes be written as a supremum of a class of test functions $v$ indexed by $t \in \mathcal{T}$:
\[
\ell(\theta) = \sup_{t \in \mathcal{T}} v(\theta, t).
\]
For instance, when $\ell$ is a norm or seminorm of a moment function of parameter $\theta$ over some Banach space, it always admits such a representation with linear $t$. This is true more generally if $\ell$ is any convex function of a moment function of $\theta$. Helpfully, such a decomposition will generally extend to empirical analogues of $\ell$, which will follow the corresponding form:
\[
\ell_n(\theta) = \sup_{t \in \mathcal{T}} v_n(\theta, t),
\]
where $v_n(\theta, t)$ is an empirical analogue of test function $v$. 

In order to compute the asymptotic distribution of a test statistic formed around the minimum value of $\ell_n(\theta)$ over $\Theta$, we take advantage of the fact that such a statistic will necessarily adopt a minimax formulation---the outer minimum being taken over parameter space $\Theta$, and the inner maximization being over its dual space $\mathcal{T}$. Such a representation is especially amenable to treatment by familiar tools arising from convex analysis. In particular, we show that an application of Sion's celebrated minimax theorem (\cite{Sion1958}) leads to a streamlined way of bounding the distribution of such minimax test statistics from above, with asymptotic equality under certain rate or convexity conditions. Under some reasonable regularity conditions, these minimax bounds coincide with asymptotic approximations arising from \cite{S2012}, \cite{hong2017}, and \cite{CNS2023}, and are closely tied to convex-analytical approaches to hypothesis testing (e.g.\ the test of shape restrictions in \cite{Fang2021}). They provide a systematic basis for regarding such results which connects them back to the seminal work of \cite{Sargan1958} and \cite{H1982}. 

The intuition for our results in a simple, finite-dimensional setting can be stated as follows. If, say, $\sqrt{n}(v_n(\theta, t) -v(\theta, t))$ is an empirical process $\mathbb{G}_n(\theta, t)$ over $\Theta \times \mathcal{T}$ that converges to a tight limit $\mathbb{G}$, one can show by Taylor's theorem that the minimized value of $\ell$ over $\Theta$ may be rewritten as:
\begin{align} \nonumber
\sqrt{n}\min_{\theta \in \Theta} \ell_n(\theta) &=   \min_{\theta \in \Theta} \sup_{t \in \mathcal{T}} \sqrt{n}v_n(\theta, t) \\
& \approx \min_{\theta \in \Theta_0} \min_{\norm{h} \le 1}  \sup_{t \in \mathcal{T}} ( \mathbb{G}_n(\theta, t) + c_n \parr{v(\theta, t)}{\theta} h) \label{E:intuit}
\end{align}
where we have employed in the last line the fact that a typical solution to the optimization problem will converge in probability towards $\Theta_0$. In the previous display, $(c_n)$ is a sequence diverging to $\infty$. 

Equipped with \eqref{E:intuit}, previous work has provided means for estimating the image of the Jacobian $\parr{v}{\theta}$, which may be substituted with a Fr\'{e}chet derivative in infinite dimensional settings (\cite{S2012}, \cite{hong2017}, \cite{Fan2023}, \cite{CNS2023}). As the limiting distribution of $\mathbb{G}_n(\theta, t)$ can usually be consistently estimated with, say, the bootstrap, this provides a convenient means of estimating the asymptotic distribution of a minimax test statistic. However, this approach is disadvantaged by its reliance on the local linearity of $v$ in neighborhoods of the identified set. For instance, \cite{S2012} studies a setting in which $v$ is linear in the choice of parameter $\theta$. This assumption is relaxed in \cite{hong2017} and \cite{CNS2023}, but finite sample analysis of the derivatives of $v$ is still e.g.\ generally dependent upon the use of linear sieve approximations of $\Theta$. Thus, the flexible employment of neural networks and associated nonlinear estimated schemes, which is highly desirable in nonparametric settings, is precluded. 

This paper observes that there is a preponderance of $\ell$ and $v$ in the literature which satisfy the necessary convexity properties to rewrite \eqref{E:intuit} as 
\begin{align} 
\min_{\theta \in \Theta_0} \sup_{t \in \mathcal{T}} \min_{\norm{h} \le 1} (\mathbb{G}_n(\theta, t) + c_n \parr{v(\theta, t)}{\theta} h)  = \min_{\theta \in \Theta_0} \sup_{t \in \mathcal{T}} \mathbb{G}_n(\theta, t) - c_n \norm{ \parr{v(\theta, t)}{\theta}}. \label{E:intuit2}
\end{align}
As $c_n$ is a diverging sequence, asymptotically, the parameter $t$ in the supremum of \eqref{E:intuit2} will be taken to satisfy $\parr{v(\theta, t)}{\theta} =0$. As a consequence, we show that under general conditions, \eqref{E:intuit} may be rewritten as 
\begin{align}
\sqrt{n}\min_{\theta \in \Theta_0} \ell_n(\theta) = \min_{\theta \in \Theta_0} \sup_{t: \parr{v(\theta, t)}{\theta} = 0} \mathbb{G}_n(\theta,t) + o_p(1). \label{E:intuit3}
\end{align}

\eqref{E:intuit3} compares favorably to \eqref{E:intuit} in several ways. In many popular Hilbert space settings, it shows that a test statistic based on \eqref{E:intuit} has a projection interpretation which extends the $\chi^2$-limiting distribution of the $J$-test. Moreover, we show that it is quite straightforward to obtain estimates for critical values using \eqref{E:intuit} without any direct computation of the derivatives of $\parr{v}{\theta}$ at all, even in settings with potentially nonlinear sieves. The minimax approach also allows for approximation methods devised for seminorm-based criterion functions to be extended quite broadly and systematically to, say, criterion functions involving general convex functions. Finally, our inference results accommodate the adversarial employment of nonconvex sieve spaces of test functions, such as neural networks. Thus, they blend a growing literature on adversarial econometric estimation (see \cite{K2023} and references therin) with inference in the partially identified setting.  

The structure of the paper is as follows. Section \ref{S:model} formalizes our model and provides a method for obtaining the asymptotic distribution, or upper bounds thereof, of minimax test statistics. Section \ref{S:cvs} shows that, under general conditions, critical values can be obtained for those distributions using the bootstrap. The main focus of the paper is on single hypothesis tests, but Section \ref{S:uniform} of the appendix extends our distributional results uniformly over a class of parameter spaces and underlying probability distributions. Proofs are relegated to Section \ref{S:proofs} in the appendix. 

\section{Model and Asymptotic Distribution}
\label{S:model}

Given a parameter space $\Theta$, our basic model is one where an identified set $\Theta_0 \subset \Theta$ is characterized as the set of all $\theta \in \Theta$ satisfying 
\begin{align} \label{E:theta0}
\ell(\theta) = 0,
\end{align}
where $\ell: \Theta \ra \R$ is some function. We impose the additional restriction that $\ell$ takes the form 
\begin{align} \label{E:ell}
\ell(\theta) = \sup_{t \in \mathcal{T}} v(\theta, t),
\end{align}
where $\{v(\cdot, t) : t \in \mathcal{T}\}$ is a set of test functions, and $\mathcal{T}$ is some set dual to $\Theta$, motivated by the idea that one often wants to investigate sets identified by criterion functions of this form:
\begin{exm} \label{Exm:one}
\begin{enumerate}
\item With GMM, one investigates an identified set which can be expressed as the set of all $\theta$ which solve
\begin{align*}
\ell(\theta) \equiv \norm{\E{g(Y, \theta)}}_W = 0, 
\end{align*}
where $\rho$ takes on values in $\R^m$, $W$ is a weighting matrix, and $\norm{v}_W = v' W v$. In this case, one can write 
\begin{align*}
\ell(\theta) = \sup_{t \in \mathcal{T}} v(\theta, t)
\end{align*}
by setting $v(\theta, t) = \inner{\E{ g(Y, \theta), t } }$ and $\mathcal{T} =\{t: \norm{t}_W \le 1\}$. 

\item In a general formulation of GMM, one can consider sets of $\theta$ identified by the relation 
\begin{align*}
\ell(\theta) = q(m(\theta)) = 0,
\end{align*}
where $m$ takes values in a vector space $\mathfrak{X}$ over $\R$ (e.g.\ a Banach space) and $q: \mathfrak{X}  \ra \R$ is a sublinear functional, such as a norm or a seminorm (see \cite{C1994}). In this case, the Hahn-Banach theorem implies that we may write
\begin{align*}
\ell(\theta) = \sup_{t \in \mathcal{T}} v(\theta, t),
\end{align*}
where $v(\theta, t): (t, \theta) \mapsto  t(m(\theta))$ and $\mathcal{T}$ is the set of linear functionals over $\mathfrak{X}$ which are continuous and bounded above by $q$, i.e.\
\begin{align*}
\mathcal{T} = \{t: \mathfrak{X} \ra \R: \, t \text{ is linear, }t(x) \le q(x)\text{ for all }x\in \mathfrak{X}\}
\end{align*}

\item A range of models identify parameters $\theta \in \Theta_0$ by virtue of the fact that they satisfy a conditional moment equality
\begin{align}\label{E:npiv}
\E{g(Y, \theta)|Z} = 0.
\end{align} 
for e.g.\ an instrumental variable $Z$.
For instance, in a standard nonparametric instrumental variables model (NPIV, see e.g.\ \cite{NP2003}, \cite{S2012}), points $\theta$ in $\Theta_0$ are precisely which satisfy
\begin{align*} 
\E{ Y_2 - \theta(Y_1) | Z } = 0,
\end{align*}
where $Y_2$ is an outcome variable, $Y_1$ is a regressor, and $Z$ is an instrumental variable complete for $Y_1$. To verify the identifying relation \eqref{E:npiv}, one can let $\mathcal{T}$ be a set of test functions in variable $Z$ (\cite{S2012} takes these to be a compact set of complex exponential functions, and \cite{CNS2023} allows more general choices) and set 
\begin{align}
v(\theta, t) = \E{ g(Y, \theta) t(Z)}.
\end{align}
Note that, in the NPIV case, the function $v(\theta, t)$ is linear in both $\theta$ and $t$. 

Let $\mathrm{P}$ be the law of variable $Z$. Let $m(\theta)$ be the function $z \mapsto \E{g(Y, \theta)|Z}$, and suppose that this function is in the Banach space $L^p(\mathrm{P})$, $p \in [1, \infty]$. If $q$ is conjugate to $p$ in the sense that $\frac{1}{p} + \frac{1}{q} = 1$,  and $t$ (with abuse of notation) signifies the linear mapping $t(\psi)= \E{\psi(Z) t(Z) }$ for $\psi \in L^p(\mathrm{P})$, then the relation
\begin{align*}
v(\theta, t) = t(m(\theta))
\end{align*}
is true and is consistent with the following examples. 

Our standing model encompasses generalizations of \eqref{E:npiv} employing moment \textit{inequalities}, as we show next.  

\item If $m$ maps $\Theta$ to a subset of a Hausdorff and locally convex vector space $\mathfrak{X}$ over $\R$ and $\ell = \gamma \circ m$ where $\gamma: \mathfrak{X} \ra \R$ is any proper convex and lower semi-continuous function, then the Fenchel-Moreau theorem (\cite{Convex2005}, Theorem 4.2.1) implies that we may write 
\begin{align*}
\ell(\theta) = \sup_{t \in \mathcal{T}} v(\theta,t)
\end{align*}
where $v: (t,\theta) \mapsto t(m(\theta))$, and $\mathcal{T}$ is the set of affine functions from $\Theta$ to $\R$ that are dominated by $\gamma$, i.e.\
\begin{align*}
\mathcal{T} = \{ t: \mathcal{X} \ra \R: \, t\text{ affine, }t(x) \le \gamma(x) \text{ for all }x \in \mathfrak{X}\}.
\end{align*}
Lemma \ref{C:duality} below shows that this formulation of our inference problem is consistent with the main assumptions of the paper. 
	\begin{itemize}
	\item For instance, let $\mathfrak{X}$ be a separable real Hilbert space and $C\subset \mathfrak{X}$ a convex set. Then the distance 
	\begin{align*}
	\ell(\theta) = d(m(\theta), C) = \inf_{c \in C} \norm{m(\theta) - c}
\end{align*}	 
can be expressed as 
\begin{align*}
\ell(\theta) = \sup_{\norm{t} \le 1} (\inner{m(\theta), t} - \sup_{c \in C} \inner{c, t}). 
\end{align*}
(e.g.\ \cite{Deutsch2012}, \S 7.). One may equivalently express the inner supremum in the previous display as being over the convex set of affine functions which are nonpositive over $C$, omitting the $\sup_{c \in C}$ term. When $C$ is additionally a convex cone, we have the relation 
\begin{align} \label{E:convexcone}
\ell(\theta) = \sup_{\substack{\norm{t} \le 1  \\ \sup_{c \in C} \inner{c,t} \le 0}} \inner{m(\theta), t}.
\end{align}
\cite{Fang2021} propose a method for inference on the distance from a noisily observed parameter to a convex cone and also apply \eqref{E:convexcone}, albeit under point identification. Section \ref{S:convexset} shortly motivates our main results in this setting. 

	\item Suppose $\mathfrak{X}$ is a vector space of functions over a domain $\mathcal{Z}$ and $C$ is the subset of $\mathfrak{X}$ consisting of (a.s.) positive functions. For instance, $\mathfrak{X}$ may be the space $L^p(\mathrm{P})$ of $L^p$-integrable functions in variable $Z$, containing functions of the form $m(\theta) = \E{g(Y, \theta)|Z = z}$, as in \eqref{E:npiv}. In this case, one may set $\ell(\theta) = d(m(\theta), C)$, and $\Theta_0$ is the set of $\theta$ which satisfy the identifying relation $\E{g(Y, \theta)|Z} \overset{\mathrm{a.s.}}{\ge} 0$, which extends \eqref{E:npiv}.  
The problem of conducting inference in models identified by conditional moment inequalities has been studied by \cite{andrews2013}, \cite{csr2013}, \cite{armstrong2016}, and \cite{chet2018}, among others. Our model allows for such conditional inequalities to be generalized to arbitrary convex restrictions. 
	\end{itemize}

\item Suppose that $g$ maps $\Theta$ to a random distribution $m(\theta)$ supported on measurable space $(\mathfrak{X}, \mathcal{F})$, which admits some reference measure $\mu$. Then, one can metrize a host of notions of convergence for probability measures using \eqref{E:ell}. 
\begin{itemize}
\item  If one takes $\mathcal{T} = \{f: \mathfrak{X} \ra [-1,1]\}$ and 
\begin{align}
v(\theta, t) = t(m(\theta)) - t(\mu), \label{E:probmeas}
\end{align}
then $\ell$ is the total variation distance between $m(\theta)$ and $\mu$ 
\begin{align*}
\ell(\theta) = \sup_{t \in \mathcal{T}} \int t \, \mathrm{d}m(\theta) - \int t \, \mathrm{d}\mu = \norm{m(\theta) - \mu}_{\mathrm{TV}}.
\end{align*}
\item If $\mathfrak{X}$ is the Euclidean space $\R^m$, then letting $\mathcal{T}$ be a collection of maps $t: \nu \mapsto \int e^{isx} \, \d \nu$ and $v$ be as in \eqref{E:probmeas}, one can check for the distance between $m(\theta)$ and $\mu$ in a weak topology (see \cite{S2012}, \cite{D2010} \S 3.3.2) using L\'{e}vy's continuity theorem. 
\item If $(\mathfrak{X},d)$ is a metric space and $\mu$ is a probability measure, it is also possible to metrize the Wasserstein-$1$ distance between $m(\theta)$ and $\mu$ using \eqref{E:ell} and the Kantorovich-Rubinstein identity (\cite{Ar2017}). To do this one, lets $\mathcal{T}$ be the set of $1$-Lipschitz maps $t: \mathfrak{X} \ra \R$ and again applies \eqref{E:probmeas}.
\end{itemize}
\end{enumerate}
\end{exm}
Note that, in every case of example \ref{Exm:one}, all of the functions $v$ are linear in the parameter $t$. 

Suppose that $\ell$ cannot be directly observed, but a researcher has access to a noisy estimate $\ell_n$ of $\ell$ which satisfies
\begin{align*}
\ell_n(\theta) = \sup_{t \in \mathcal{T}} v_n(\theta, t),
\end{align*}
where $v_n$ is a noisy estimate of $v$. 
Our approach to inference will consist of finding estimates for the asymptotic distribution of test statistic 
\begin{align} \label{E:teststatistic}
T_n = r_n\inf_{\theta \in \Theta} \ell_n(\theta) = r_n\inf_{\theta \in \Theta} \sup_{t \in \mathcal{T}} v_n(\theta, t),
\end{align}
for an appropriate normalizing sequence $r_n$ (most often $\sqrt{n}$),
and then determining critical values for this distribution by bootstrapping.

\subsection{Motivating example: distances to convex sets in Banach spaces} \label{S:convexset}

Our minimax approach allows for the straightforward calculation of asymptotic distributions of distances to convex sets. \cite{Fang2021} characterize and estimate such asymptotic distributions when the sets are convex cones in Hilbert spaces with the intent of testing shape restrictions. The distributions which arise are continuous functions of (usually) tight Borel measures which may be directly estimated with the bootstrap. 

In a Banach space $\mathfrak{X}$, the function that assigns to every point its distance to some fixed convex set $C$ is convex, and Example \ref{Exm:one} suggests that it is compatible with the setting of the paper. Indeed, let $m(\theta)$ be an element of a Banach space $\mathfrak{X}$ for every $\theta \in \Theta$ and $C \subset \mathfrak{X}$ a convex set. Then, letting $\ell(\theta)$ denote the distance of $m(\theta)$ to $C$ and $\mathcal{T}$ the unit ball of the dual $\mathfrak{X}^*$ of $\mathfrak{X}$, equipped with the weak-* topology, one can write
\begin{align*}
	\ell(\theta) = \inf_{c \in C} \norm{ m(\theta) - c }_{\mathfrak{X}} = \inf_{c \in C} \sup_{t \in \mathcal{T}} \inner{ m(\theta) - c, t}.
\end{align*}
Suppose that $m_n(\theta)$ is a random estimate of $m(\theta)$ satisfying that $r_n(m_n(\theta) - m(\theta)) \rs \mathbb{G}(\theta)$ for a sequence $(r_n)$, where `$\rs$' signifies weak convergence to a random element $\mathbb{G}(\theta)$. Then, an empirical analogue of $\ell(\theta)$ is 
\begin{align}
\ell_n(\theta) = \inf_{c \in C} \norm{ m_n(\theta) - c }_{\mathfrak{X}} = \inf_{c \in C} \sup_{t \in \mathcal{T}} \inner{ m_n(\theta) - c, t}. \label{E:convex1}
\end{align}
\cite{Fang2021} principally provide means of computing the limiting distribution of $r_n \ell_n(\theta)$ for a fixed $\theta$. To motivate the results of the paper, we show that we can provide such computations in a very broad range of cases. 

The easiest case to consider is the one where $C$ is a convex cone, closed under dilations of its elements by positive multiples. Let $\delta^*(\cdot|C):t \mapsto \sup_{c \in C} t(c) $ denote the support function of $C$ (see \cite{Rock1970}, \S13). For a convex cone $C$ in a real Banach space $\mathfrak{X}$, define the \textit{polar} $C^\circ \subset \mathfrak{X}^*$ following \cite{fabian2011} as 
\begin{align*}
C^\circ &= \{y \in \mathfrak{X}^*: y(x) \le 1 \text{ for all }x \in C\} \\
& = \{y \in \mathfrak{X}^*: y(x) \le 0 \text{ for all }x \in C\}
\end{align*}
(as opposed to the \textit{absolute polar}, e.g.\ \cite{C1994}). Then, using weak-* continuity of the maps $t \in \mathcal{T}$ and weak-* compactness of $\mathfrak{X}^*$, the Sion minimax theorem (\cite{Sion1958}, see also the proof of Lemma \ref{C:three} below) allows us to rewrite \eqref{E:convex1} as
\begin{align}
	\ell_n(\theta) &= \sup_{t \in \mathcal{T}} \inf_{c \in C} \inner{m_n(\theta) - m(\theta) + m(\theta) - c, t} \nonumber \\
	& = \sup_{t \in \mathcal{T}} (\inner{m_n(\theta) - m(\theta), t} + \inner{m(\theta), t} - \delta^*(t|C)) \nonumber \\
	& = \sup_{\substack{t \in C^\circ \\ \norm{t}_{\mathfrak{X}^*} \le 1}}  (\inner{m_n(\theta) - m(\theta), t} + \inner{m(\theta), t}).++
	 \label{E:convex2}
\end{align}  
Note that, if $m(\theta)$ is actually an element of $C$, the last line of \eqref{E:convex2} is bounded above by 
\begin{align} \label{E:convex3}
\sup_{\substack{t \in C^\circ \\ \norm{t}_{\mathfrak{X}^*} \le 1}} \inner{m_n(\theta) - m(\theta), t} = \inf_{c \in C} \norm{m_n(\theta) - m(\theta) - c }_{\mathfrak{X}} = d_{\mathfrak{X}}(m_n(\theta)- m(\theta), C)
\end{align}
(see Lemma \ref{L:positive} in the appendix; throughout, we will find it convenient to use e.g.\ $d_{\mathfrak{X}}$ as the Hausdorff distance in $\mathfrak{X}$ between a point and a set). \eqref{E:convex2} and \eqref{E:convex3} hold irrespective of the underlying probability distribution $P$ of $m_n$ or of $\theta$. Therefore, from these bounds arises our first motivating result:
\begin{prop} \label{P:motivate1} Let $\mathcal{P}$ index a family of probability distributions $P$, $C \subset \mathfrak{X}$ be some convex cone, and $\Theta$ some parameter space. Suppose that $\mathfrak{X}$-valued random elements $\mathbb{G}_{n,P}(\theta) \equiv r_n(m_{n,P}(\theta) - m_P(\theta))$ converge uniformly in $P$ to limiting distributions $\mathbb{G}_P(\theta)$ in the sense\footnote{This is uniform convergence in the bounded Lipschitz metric; see \cite{VW1996} \S2.8 and also Section \ref{S:cvs} below.} that $\limsup_{n \ra \infty} \sup_{\substack{ P \in \mathcal{P} \\ h \in \mathrm{BL}_1}} \EE{P}{ h( \mathbb{G}_{n,P}) } - \E{h( \mathbb{G}_{P}) } = 0$, where $\mathrm{BL}_1$ is the set of maps bounded by $1$ that are Lipschitz from the set of maps from $\Theta$ to $\mathfrak{X}$, equipped with the uniform norm, to $\R$. Suppose that $m_P(\theta) \in C$ for all $P$ and $\theta$. Then, 
\begin{align}
	\inf_{\theta \in \Theta} r_n d_{\mathfrak{X}}(m_{n,P}(\theta), C) &=  \inf_{\theta \in \Theta} \sup_{\substack{t \in C^\circ \\ \norm{t}_{\mathfrak{X}^*} \le 1}}  (\inner{\mathbb{G}_{n,P}(\theta), t} + r_n\inner{m_P(\theta), t})  \le \inf_{\theta \in \Theta} d_{\mathfrak{X}}(\mathbb{G}_{n,P}(\theta), C ) \nonumber \\
	& \rs \inf_{\theta \in \Theta} d_{\mathfrak{X}}(\mathbb{G}_{P}(\theta), C ),  \label{E:motivate1}
\end{align}
where `$\rs$' signifies uniform convergence over $P$ in the bounded Lipschitz sense. 
\end{prop}
\begin{proof}
The result is a direct consequence of \eqref{E:convex2} and \eqref{E:convex3}, and the observation that $G \mapsto \inf_{\theta \in \Theta} \sup_{\substack{t \in C^\circ \\ \norm{t}_{\mathfrak{X}^*} \le 1}} \inner{G(\theta), t}$ is bounded and Lipschitz from the space of maps from $\Theta$ to $\mathfrak{X}$ (equipped with the uniform norm) to $\R$ (of course, any other Lipschitz continuous map in $\theta$ may be employed in place of $\inf_{\theta \in \Theta}$). 
\end{proof}

Proposition \ref{P:motivate1} complements and extends Theorem 3.1 of \cite{Fang2021} in a Banach space setting featuring arbitrary parameter spaces. In particular, note that the inequality in the first line of \eqref{E:motivate1} becomes an equality precisely in the `least favorable' case discussed therein. This paper further develops the minimax rearrangement displayed in \eqref{E:convex2} and provides asymptotic characterizations of a much broader class of test statistics, as well as means of estimating those limiting distributions. For instance, Theorem \ref{T:one} below allows \eqref{E:motivate1} to be further decomposed using the structure of $\Theta$ as it relates to $C$.

The next proposition provides a similar characterization when $C$ is more generally held to be a convex set in $\mathfrak{X}$. For such a $C$ and a point $c \in C$, define the \textit{tangent cone} $T_cC$ of $C$ at $c$ to be the subset of $\mathfrak{X}$ consisting of all points taking the form $\alpha (c' - c)$, $c' \in C$, $\alpha > 0$ (c.f.\ \cite{Rock1970}, Theorem I.2.6.3). Similarly, define the \textit{normal cone} $N_cC$ of $C$ at $c$ to be the subset $(T_cC)^\circ$ of $\mathfrak{X}^*$. For conciseness, we say that a class of random sequences $(X(\theta, P))$ indexed by $\theta \in \Theta, P \in \mathcal{P}$ is \textit{uniformly pre-tight in $\mathfrak{X}$} if, for every $\ve > 0$, there is some totally bounded and measurable subset $S$ of $\mathfrak{X}$ which satisfies $\inf_{P \in \mathcal{P}} \PP{P}{X(\theta, P) \in S, \, \forall \theta \in \Theta} > 1 - \ve$. Sufficient conditions for uniform pre-tightness are given in \cite{VW1996} (e.g.\ Problem 1.12.1 therein). 

\begin{prop}\label{P:motivate2}
Make the Assumptions of Proposition \ref{P:motivate1} with $C$ a convex set. Suppose that the collection $\{\mathbb{G}_{P}(\theta):  \theta \in \Theta, P \in \mathcal{P}\}$ is uniformly pre-tight in $\mathfrak{X}$. Impose that there is some topology $\mathcal{W}$ on $\mathcal{P} \times \Theta$ such that $(\mathcal{P}\times \Theta, \mathcal{W})$ is compact, $(P, \theta) \mapsto \mathcal{T} \cap N_{m_P(\theta)}C $ is $\mathcal{W}$-lower hemicontinuous, and $(P, \theta, t) \mapsto \inner{m_P(\theta), t}$ is $\mathcal{W} \times \mathcal{U}$-upper semicontinuous. Then, uniformly for $P \in \mathcal{P}$,
\begin{align} \label{E:motivate2}
r_n \inf_{\theta \in \Theta} d_{\mathfrak{X}} (m_{n,P}(\theta), C) \rs \inf_{\theta \in \Theta} \sup_{\substack{t \in N_{m_P(\theta)} C \\ \norm{t}_{\mathfrak{X}^*} \le 1}} \inner{\mathbb{G}_P(\theta), t} = \inf_{\theta \in \Theta} d_{\mathfrak{X}} (\mathbb{G}_P(\theta), T_{m_p(\theta)}C).
\end{align}
\end{prop}
In the statement of Proposition \ref{P:motivate2}, the map $\inf_{\theta \in \Theta}$ may be replaced by any other function that is Lipschitz continuous with respect to the uniform norm. When $\Theta$ is a singleton and the conditions of Proposition \ref{P:motivate2} apply, \eqref{E:motivate2} forms a tighter lower bound than the right hand side of \eqref{E:motivate1}, because $N_{m_P}C$ is the subset of $C^\circ$ consisting of $t$ for which $\inner{m_P, t} = 0$ when $C$ is a cone. 

The lower hemicontinuity condition imposed by Proposition \ref{P:motivate2} is important for a portion of our bootstrap consistency arguments and is further discussed in Section \ref{S:hemi}. Lemma \ref{L:duality} therein shows that the lower hemicontinuity condition is fulfilled if the correspondence sending pairs $(P,\theta)$ to the tangent cone $T_{m_P(\theta)}C \subset \mathfrak{X}$ is merely upper hemicontinuous with respect to the weak topology on $\mathfrak{X}$. On the other hand, by boundedness of the maps $t \in \mathcal{T}$, continuity of the map $(P, \theta, t) \mapsto \inner{m_P(\theta), t}$ may be ascertained by verifying norm-continuity of the map $(P,\theta, t) \mapsto m_P(\theta)$, or by confining the points $m_P(\theta)$ to lie in a norm-compact set and requiring $(P,\theta, t) \mapsto m_P(\theta)$ to be merely weakly continuous.\footnote{Sufficiency of the latter conditions can be verified using the Arzel\`{a} Ascoli theorem; see the discussion following Assumption \ref{A:three}.} 

\subsection{Asymptotic distribution}

To discuss inference around $T_n$, it is necessary to impose some structure on the spaces $\Theta$ and $\mathcal{T}$. We continue to let $\Theta_0 \subset \Theta$ be the set of $\theta$ satisfying \eqref{E:theta0}, and say that $\Theta$ is locally convex in a neighborhood of $\Theta_0$ if it is a subset of a real vector space and, for all $\theta \in \Theta_0$, there is some $\ve(\theta) > 0$ such that the intersection of $\Theta$ with an $\ve(\theta)$-ball around $\theta$ is convex. All Banach spaces concerned in this paper are presumed to be over the real numbers, although it would be straightforward to extend our results to vector spaces over $\C$. 
\begin{asm}\label{A:zero} $\Theta$ is a compact subset of a real Banach space $\mathfrak{B}$ with norm $\norm{\cdot}_{\mathfrak{B}}$, and is locally convex in a neighborhood of $\Theta_0$. $\mathcal{T}$ is a convex subset of a linear space. 

The function $\ell$ follows \eqref{E:ell}, maps $\Theta \ra \R_+$, and is lower semicontinuous.

\end{asm}

Assumption \ref{A:zero} imposes some regularity on $\Theta$ and $\mathcal{T}$ by requiring that they are convex subsets of linear topological spaces (at least in a neighborhood of the identified set). We have remarked that, when the asymptotic distribution of a test statistic following \eqref{E:teststatistic} is of interest, the set $\Theta$ is often chosen relative to a hypothesis that is imposed on a larger parameter space. Therefore, it is essential that Assumption \ref{A:zero} should be made as accommodating in its choice of $\Theta$ as possible. Firstly, we note that what is actually required is that $\Theta$ is star-shaped in a neighborhood of all points in the identified set, so that derivatives can be defined and considered on those neighborhoods (but convexity is hardly more onerous an assumption). The requirement for local convexity can also be achieved by choosing an appropriate embedding of $\Theta$ into a Banach space, as the following example illustrates: 
\begin{exm}
Suppose that $\Theta$ is a compact subset of a Banach manifold $M$ (\cite{Abraham2012}). By definition, there is a collection of charts $(U_i, \phi_i)$, indexed by $i \in I$ for which one can write $\Theta = \bigcup_{i \in I} U_i$. Here, each $U_i$ is open and $\phi_i: U_i \ra E_i$ is a homeomorphism from $\Theta$ to Banach space $E_i$ for each $i$. By compactness, there is a finite subset $I_0 \subset I$ of charts satisfying that $\Theta$ is contained in the union of $U_i, i \in I_0$. We may also assume that $\phi(U_i)$ is bounded for every $i$, and therefore that $\phi_i(U_i)$ does not contain the origin for all $i \in I_0$. Let $P_i$ be the projection map from $E_i$ to the Banach space $\mathfrak{B} = \sum_{j \in I_0} E_j$ which sends $x$ to the point in having $x$ in its $i^\text{th}$ coordinate and $0$ elsewhere.\footnote{Here, $\sum_{j \in I_0} E_j$ signifies the direct sum of the Banach spaces $E_j$, which is a Banach space under e.g.\ the $L^1$ norm given by the sum of the individual $E_i$-norms.}

We claim that a sufficient condition for the local convexity condition to hold for a suitable analogue of $\Theta$ in a Banach space is that the charts $\phi_i$ can be chosen in such a way that $\phi_i (\Theta \cap U_i)$ is locally convex at every point of $\phi_i(\Theta_0 \cap U_i)$ for all $i \in I_0$. Indeed, if this is the case, we may identify $\Theta$ with the set $\tilde{\Theta} = \bigsqcup_{i \in I_0} P_i \phi_i(\Theta \cap U_i) \subset \mathfrak{B}$. If we let $\rho: \tilde{\Theta} \ra \Theta$ be a map which satisfies that $\rho \circ \phi_i = \mathrm{Id}_{U_i}$ for all $i$, then \eqref{E:teststatistic} can be rewritten 
\begin{align*}
T_n = r_n \inf_{\theta \in \tilde{\Theta}} \ell_n(\rho(\theta)).
\end{align*}
Moreover, the set identified by $\ell(\rho(\theta)) = 0$ is precisely $\bigsqcup_{i \in I_0} \phi_i (\Theta_0 \cap U_i)$, every point of which is contained in a ball on which $\phi_i(\Theta \cap U_i)$ is convex. 
\end{exm}

We also require that our parameter space $\Theta$ is a compact subset of a Banach space, over which we can eventually define derivatives. See \cite{FM2019} for a discussion of compactness in Banach spaces.  Note that, if $\Theta$ is not necessarily convex, the closed convex hull of $\Theta$ is still compact (\cite{A1999}, Theorem 5.35), so it may suffice to consider the convex closure of $\Theta$. It is also often the case that the map $v(\theta, \cdot)$ is at least quasi-concave in $t$ (in fact, every case of Example \ref{Exm:one} featured $v$ that were linear over parameter $t$), in which case it is equivalent to work with the convex hull of $\mathcal{T}$ in \eqref{E:ell}. Note that the Donsker properties we require in Assumption \ref{A:one} are preserved under such set operations (\cite{VW1996} \S2.10). 

It would likely be possible to extend the results of this paper to noncompact settings using bounded entropy conditions and finite sample concentration inequalities coming from empirical process theory (e.g.\ \cite{CNS2023}, Assumption 3.3 and references therein). However, imposing compactness greatly streamlines the exposition and proofs of this paper, and is congruous with assumptions that are commonly imposed for the purposes of estimation and inference. 

Throughout, we will let $d_{\mathfrak{B}}$ denote the metric induced by $\norm{\cdot}_{\mathfrak{B}}$ on $\mathfrak{B}$.  
The requirements that $\ell$ is lower semicontinuous and nonnegative are not very restrictive. In many examples, $\ell$ can be expected to be continuous in parameter $\theta$. For instance, this is the case when $\ell(\theta)$ is some seminorm of a continuous function $m$ of $\theta$. 

In order to estimate the identified set $\Theta_0$, one must introduce an estimator $\ell_n$ of the criterion function $\ell$. Suppose that this is accomplished by introducing an estimator $v_n$ of the functions $v$, which obeys a central limit theorem in the sense of empirical process theory (see \cite{VW1996}, Section 1.5):

\begin{asm} \label{A:one}

For some sequence $(r_n) \ra \infty$, the empirical process $\mathbb{G}_n(\theta, t) \equiv r_n(v_n(\theta, t) - v(\theta, t))$ converges weakly to a tight Borel measurable element $\mathbb{G}$ in $L^\infty(\Theta, \mathcal{T})$, i.e.\ 
\begin{align*}
\mathbb{G}_n(\theta, t) \rs \mathbb{G}(\theta, t),
\end{align*}
Moreover, $\mathbb{G}_n$ is asymptotically equicontinuous with respect to a pseudometric $\rho: (\Theta \times \mathcal{T})^2 \ra \R_+$.
\end{asm}
Most often, one takes $r_n = \sqrt{n}$, and $\mathbb{G}$ is a Gaussian process. There are a number of ways of guaranteeing the convergence of Assumption \ref{A:one}. For instance, if one has $v(\theta, t) = \E{g(Y,\theta,t)}$ for some function $g$   and $v_n(\theta, t)$ is an empirical analogue of $v$, then Assumption \ref{A:one} holds if the class of maps $g$ is $P$-Donsker over $\Theta \times \mathcal{T}$ (\cite{VW1996}, \S2). The delta-method for empirical processes extends such results to more general settings where $v$ is a nonlinear function of population moments, and $v_n$ its empirical analogue. In this setting, $\rho$ can be taken to be the pseudometric induced by seminorm $(\theta, t) \mapsto \E{(g(Y, \theta, t) - \E{g(Y, \theta, t)})^p}^{1/p}$, for $p \ge 1$ (\cite{VW1996} Sections 1.5, 2.1). 

We now wish to approximate the functions $v(\cdot, t)$ with linear approximations taken at points in the identified $\Theta_0$. A standard tool for this purpose is the Fr\'{e}chet derivative $D$, although we allow for a slight relaxation of Fr\'{e}chet differentiability in the following assumption. We say that a function $D$ defined on a real vector space $\mathfrak{B}$ is \textit{positive homogeneous} if $D(\lambda h) = \lambda D(h)$, for all $h \in \mathfrak{B}$ and $\lambda \ge 0$. 
\begin{asm}\label{A:two}
 For all $(\theta, t) \in \Theta_0 \times \mathcal{T}$ there is a lower semicontinuous, convex, and positive homogeneous functional $D_{\theta; t}v: \mathfrak{B} \ra \R $ and function $f_v(\delta) = o(\delta)$ which satisfy
\begin{enumerate}
\item  
\begin{align}
\sup_{\substack{\theta \in \Theta_0, \theta' \in \Theta\\ t \in \mathcal{T} \\ \norm{ \theta' - \theta}_{\mathfrak{B}} \le \delta}} |v(\theta',t) - v(\theta, t) - D_{\theta; t}v (\theta' - \theta)| = O(f_v(\delta)), \label{E:fdeltaeqn}
\end{align}
\item For all $n \in \N$, $\theta \in \Theta_0$, and $\theta' \in \Theta$, $t \mapsto v_n(\theta, t) + D_{\theta; t}v (\theta')$ is almost surely quasi-concave
\end{enumerate}
\end{asm}
The first part of Assumption \ref{A:two} requires that the functions $v(\cdot, t)$ can be approximated locally by positive homogeneous functionals $D_{\theta; t}$ in neighborhoods of points $\theta$ in the identified set. When $D_{\theta; t}$ is required to be bounded and linear, Assumption \ref{A:two} is principally the requirement that function $v$ is Fr\'{e}chet differentiable for all $\theta \in \Theta_0$ and $t \in \mathcal{T}$, with a uniform bound on the linear approximation provided by the derivative. Note that, if the left hand side of \eqref{E:fdeltaeqn} is $o(\delta)$, then \eqref{E:fdeltaeqn} trivially holds by defining $f_v(\delta)$ to be equal to its left hand side.

By Taylor's theorem, if $\Theta$ and $\mathcal{T}$ are subsets of Euclidean space and $v$ is smooth, Assumption \ref{A:two}.1 may be established by uniformly bounding the second derivatives of $v(\cdot, t)$ over $\Theta_0$ and $\mathcal{T}$. In this case, one may take $f_v(\delta) = \delta^2$. 
Note that, if $\Theta_0$ is a singleton $\{\theta_0\}$ (point identification holds), Assumption \ref{A:two}.1 is implied by Fr\'{e}chet differentiability at $\theta_0$ of the map $\theta \mapsto v(\theta, \cdot)$ from $\Theta$ to the set of bounded maps from $\mathcal{T}$ to $\R$ equipped with uniform norm.

The second part of Assumption \ref{A:two} appears onerous, but we stress that in all of the situations discussed in Example \ref{Exm:one}, the function $v$ is linear over the parameter $t$ for fixed $\theta$. For $t, t' \in \mathcal{T}$ and $\alpha \in [0,1]$, passing to limits with Assumption \ref{A:two}.1 and linearity of $v(\theta, \cdot)$ in hand allows one to write
\begin{align*}
D_{\theta; \alpha t + (1-\alpha) t'}v = \alpha D_{\theta; t}v + (1-\alpha) D_{\theta; t'}v
\end{align*} 
Hence, it may be reasoned that, in many settings, $D_{\theta; t}v (h)$ is actually linear in parameter $t$, whence concave. The same applies to $v_n(\theta,t)$, which is often itself linear in $t$ (especially when $v(\theta, t)$ is linear in $t$). In particular, if we are in one of the situations delineated by Example \ref{Exm:one}:
\begin{align*}
v(\theta, t) = t(m(\theta))
\end{align*}
for some function $m$, then we may take $v_n(\theta, t) = t(m_n(\theta))$, where $m_n$ converges to $m$ as an empirical process. This makes $v_n$, and the whole expression $v_n(\theta, t) + D_{\theta; t} v(\theta')$, linear (whence, concave) in $t$. 

Our final main assumption is placed on the derivative invoked in Assumption \ref{A:two}. We are interested in the magnitude of the derivative, measured as how much it dilates points close to $0$. One may impose the usual dual norm on the functional $D_{\theta; t}v$ by letting $\norm{D_{\theta; t}v} = \sup_{\norm{h}_{\mathfrak{B}} \le 1} D_{\theta; t}v (h)$, although this norm might be stronger than necessary. Instead, we are concerned with the behavior of $D_{\theta; t}v$ only on the space tangent to $\Theta$ at $\theta$, and only in the negative direction. Formally, for each $\theta$ and $t$, we define 
\begin{align} \label{E:gamma}
\gamma(\theta, t) = \liminf_{\delta \ra 0} \inf_{\substack{\theta' \in \Theta \\ \norm{\theta' - \theta}_{\mathfrak{B}} \le \delta }} \frac{D_{\theta; t}v (\theta' - \theta)}{\delta}.   
\end{align}
If $\Theta$ is convex in a neighborhood of $\theta$, it is straightforward to show that the term inside the limit is nonincreasing as $\delta \ra 0$, so the limit must exist. Indeed, under the preceding assumptions it is immediate (see the proof of Theorem \ref{T:one}) that $\gamma(\theta, t)$ is the decreasing limit of $\inf_{\substack{\theta' \in \Theta \\ \norm{\theta' - \theta}_{\mathfrak{B}} \le \delta }} \delta^{-1} D_{\theta; t}v (\theta' - \theta)$ for $\delta < \underline{\delta}$, where $\underline{\delta} > 0$ can be chosen uniformly for $\theta$ and $t$, so that the $\liminf$ in \eqref{E:gamma} can be replaced with a $\lim$. 

Evidently, one has $\gamma \le 0$ and $0 \le |\gamma(\theta, t)| \le \norm{D_{\theta; t}v}_{\mathfrak{B}^*}$.\footnote{We impose the convention $\frac{0}{0} = 0$.} Also, when $\Theta$ is such that the ball $B(\theta, \delta) \subset \mathfrak{B}$ is contained in $\Theta$ for some $\delta > 0$, one immediately has by definition of the dual norm that $\gamma(\theta, t) = - \norm{D_{\theta; t}v}_{\mathfrak{B}^*}$.

In the statement of the next assumption, we write that pseudometric $\rho$ is continuous with respect to a topology $\mathcal{V}$ on $\mathcal{X}$ if, whenever $x_\alpha \ra x$ is a convergent net in $\mathcal{X}$, one has $\rho(x_\alpha, x) \ra 0$. 
\begin{asm} \label{A:three}
$\mathcal{T}$ is equipped with a topology $\mathcal{U}$ which makes $\rho: \Theta \times \mathcal{T} \text{ (product topology)} \ra \R$ continuous, $\mathcal{T}$ compact, and the maps 
\begin{align}
&t \mapsto D_{\theta; t}v(\theta' - \theta)\nonumber \\
&t \mapsto v_n(\theta, t) + D_{\theta; t }v(\theta' - \theta)  \label{E:remarkedone}
\end{align} 
almost surely $\mathcal{U}$-upper semicontinuous for all $\theta \in \Theta_0, \theta' \in \Theta$.
\end{asm}
Assumption \ref{A:three} requires the existence of a topology on $\mathcal{T}$ that is strong enough that $t \mapsto v_n(\theta, t) + D_{\theta; t }v(\theta' - \theta)$ becomes upper semicontinuous for every $\theta$, but also weak enough such that $\mathcal{T}$ is compact. In the proof of Theorem \ref{T:one}, it is shown that an essential implication of Assumption \ref{A:three} is that the map $t \mapsto \gamma(\theta, \cdot)$ also is $\mathcal{U}$-continuous. 

We now follow Example \ref{Exm:one} in outlining some instances in which Assumptions \ref{A:three} is met:  
\begin{enumerate}
\item Suppose that 
\begin{align}
&v(\theta, t) = t(m(\theta)) \text{ and } v_n(\theta, t) =  t(m_n(\theta)), \text{ where} \nonumber \\
&m(\theta) = \E{g(Y, \theta)} \text{ and }m_n(\theta) = \EE{n}{g(Y, \theta)} \label{E:ps123}
\end{align}
for some moment function $g: \Theta \mapsto \mathfrak{X}$, $\mathfrak{X}$ some Banach space. Make the additional assumption that $\mathcal{T}$ is a convex set of affine functions mapping $\mathfrak{X}$ to $\R$ which are uniformly bounded on some neighborhood of $0$ in $\mathfrak{X}$.\footnote{Note that, if $\mathcal{T}$ is regarded as a subset of a topological vector space with induced topology $\mathcal{U}$, then an affine $t$ is continuous if and only if there exists a neighborhood of $0$ in $\mathfrak{X}$ on which $t - t(0)$ is bounded (\cite{baggett1991}, Theorem 3.5)} In addition, impose that the Fr\'{e}chet derivative $\nabla m(\cdot)$ exists as a mapping from $\mathfrak{B}$ to $\mathfrak{X}$. This setting is a general formulation of cases 1, 2, and 4 of Example \ref{Exm:one}. 

If $t$ is an element of $\mathcal{T}$, then $t(\cdot) - t(0)$ is linear. By continuity and linearity, the derivative invoked in Assumption \ref{A:two} satisfies
\begin{align} \label{E:continuousderiv}
D_{\theta; t} v(\cdot) = t(\nabla m(\theta)(\cdot)) - t(0), 
\end{align}
where $\nabla$ signifies a Fr\'{e}chet derivative taken with respect to $\theta$. 

The main task of Assumption \ref{A:three} is to provide a topology $\mathcal{U}$ on $\mathcal{T}$ with respect to which both lines of \eqref{E:remarkedone} are upper continuous and $\mathcal{T}$ is compact. We deal with the stronger notion of $\mathcal{U}$-continuity. 
A straightforward choice of such a $\mathcal{U}$ for this task is the weak topology (the topology of pointwise convergence, see\ \cite{fabian2011} \S3). As long as $\mathcal{T}$ consists of only bounded affine maps, this choice of topology automatically makes $\mathcal{T}$ precompact (note that a set of affine maps over $\mathfrak{X}$ can be viewed as a subset of the dual of $\R \times \mathfrak{X}$, whose closed unit ball is compact by the Banach-Alaoglu theorem). By \eqref{E:ps123} and \eqref{E:continuousderiv}, the topology of pointwise convergence makes the maps in \eqref{E:remarkedone} continuous. 
Of course, some more restrictive choices of test functions $\mathcal{T}$ would be compact under stronger choices of $L^p$ or Sobolev topologies (e.g.\ \cite{FM2019}). 

The remainder of Assumption \ref{A:three} rests on continuity of $\rho$ with respect to the product of the norm topology on $\Theta$ with $\mathcal{U}$. Under \eqref{E:ps123}, a very typical seminorm $\rho$ which verifies Assumption \ref{A:one} is the one which has 
\begin{align}
\rho ((\theta, t), (\theta', t')) &= \E{( t(g(Y, \theta)) - t'(g(Y, \theta')) - \E{t(g(Y, \theta)) - t'(g(Y, \theta')}  ) )^2 }^{1/2} \nonumber \\
& \le \E{( t(g(Y, \theta)) - t'(g(Y, \theta')  ) )^2 }^{1/2} \nonumber \\
& \le \E{( (t - t')(g(Y, \theta) ) )^2 }^{1/2} + \E{( t'(g(Y, \theta)- g(Y, \theta') )  - t'(0) )^2 }^{1/2} \label{E:utopology}
\end{align} 
(e.g.\ \cite{VW1996} \S2.1). Continuity of $\rho$ can be deduced from \eqref{E:utopology} in a number of ways. Under uniform boundedness of maps in $\mathcal{T}$, continuity of the second summand in \eqref{E:utopology} is a consequence of continuity of the pseudometric $(\theta, \theta') \mapsto \E{\norm{ g(Y, \theta) - g(Y, \theta')}_\mathfrak{X}^2}^{1/2}$ with respect to the norm topology on $\mathfrak{B}$, so we focus on the first summand.

To establish continuity of the first summand with respect to the weak topology on $\mathcal{T}$ at $(\theta, t)$, we claim that it is sufficient to impose the relatively innocuous assumptions that $Y$ is a tight random variable (\cite{VW1996}, \S1.3), the map $y \mapsto g(y, \theta)$ is continuous from $\supp{Y}$ to $\mathfrak{X}$, and $\E{ \norm{g(Y, \theta)}_{\mathfrak{X}}^2}$ exists.\footnote{Actually, Lemma 1.3.2 of \cite{VW1996} implies that the first two assumptions can be replaced with mere separability of the distribution of $\mathfrak{X}$-valued random variable $g(Y, \theta)$, which is satisfied whenever e.g.\ $\mathfrak{X}$ is itself separable.} Under these restrictions, there exists some sequence $(K_n)$ of nested compact subsets of $\supp{Y}$ increasing to the whole set for which $\{g(y, \theta): y \in K_n\}$ is compact for all $n$. Let $t_\alpha \ra t$ be a convergent net in the weak topology $\mathcal{U}$. Then, by uniform equicontinuity of the functions in $\mathcal{T}$, one has $\E{((t - t_\alpha)(g(Y, \theta)))^2 \one_{Y \in K_n}} \ra 0$ for all $n$, and by consequence that 
\begin{align*}
\limsup_\alpha \E{((t - t_\alpha)(g(Y, \theta)))^2 } \le \E{ ((t - t_\alpha)(g(Y, \theta)))^2 \one_{K_n^c} }.
\end{align*}
By dominated convergence, $K_n$ can be chosen to make the $\limsup$ above arbitrarily small, so it must actually equal $0$. This establishes the claim.
\item We now turn to a general case of the conditional moment restriction model described in case 3 of Example \ref{Exm:one} in which one has 
\begin{align}
&v(\theta, t) = \E{ g(Y, \theta) t(Z)} \text{ and } v_n(\theta, t) = \EE{n}{g(Y, \theta)t(Z)}\label{E:ps124}
\end{align}
for $t$ in some class of test functions $\mathcal{T}$, and 
\begin{align} 
\rho ((\theta, t), (\theta', t'))& = \E{ (g(Y, \theta) t(Z) - g(Y, \theta')t'(Z)  - \E{g(Y, \theta) t(Z) - g(Y, \theta')t'(Z) })^2}^{1/2} \nonumber \\
& \le \E{ (g(Y, \theta)(t - t')(Z))^2 }^{1/2} + \E{ ((g(Y, \theta) - g(Y, \theta'))t'(Z))^2 }^{1/2}. \label{E:rhotwo}
\end{align}
With sufficient regularity, one has $D_{\theta; t}v (\theta' - \theta) = \E{\nabla g(Y, \theta)(\theta' - \theta) t(Z)}$. A topology satisfies the continuity requirements of \eqref{E:remarkedone} if this map is continuous with respect to the choice of $t$ for all $\theta'$ and $v_n(\theta, t)$ is also continuous with respect to $t$. A minimal choice of topology for which is feasibly true is the topology of pointwise convergence which makes all of the maps $t \mapsto t(z)$, $z \in \supp{Z}$ continuous. By the Tychonoff theorem, $\mathcal{T}$ is automatically precompact in this topology if its constituent functions are uniformly bounded over $\supp{Z}$. 

Even this weak choice of topology $\mathcal{U}$ can be employed to establish the continuity properties of the expectations involved in Assumption \ref{A:three}. Suppose that the distribution of $Z$ is pre-tight on some metric space and that the functions $t \in \mathcal{T}$ are uniformly bounded and uniformly equicontinuous on bounded subsets of $\supp{Z}$.\footnote{Again, \S1.3 of \cite{VW1996} implies that it is enough that $Z$ is Borel measurable and takes on values in a separable metric space.} For instance, \cite{S2012} uses a class of complex exponential functions $t: z \mapsto \exp(i \zeta_t z)$ as test functions $t$, which satisfy the equicontinuity property over bounded sets of $z$. Then, under existence of the integral $\E{|g(Y, \theta)|^2}$, an argument exactly like the one that proved continuity of the first summand in \eqref{E:utopology} implies the continuity of the first summand of \eqref{E:rhotwo} with respect to the topology of pointwise convergence. 

\item 
We have already exploited the fact that the topology of pointwise convergence is as strong as the topology of uniform convergence for uniformly equicontinuous sets of functions over compact domains. This leads to the observation that, when the domain of the functions $t$ is compact, one can often take $\mathcal{U}$ to be a pseudometric (or, indeed, metric) topology on $\mathcal{T}$ generated by a pseudometric $d_{\mathcal{T}}$. For instance, if one has 
\begin{align} \label{E:ps126}
v(\theta, t) = \E{g(Y, \theta, t)} \text{ and }v_n(\theta, t) = \EE{n}{g(Y, \theta, t)}
\end{align}
for some moment function $g(\cdot, \theta, t)$ (note that this nests models \eqref{E:ps123} and \eqref{E:ps124}), $\supp{Y} \times \Theta$ is compact, and the set of functions $\{(y,\theta) \mapsto g(y, \theta, t): t \in \mathcal{T}\}$ is uniformly equicontinuous, then the Arzel\`{a}-Ascoli theorem implies that $\mathcal{T}$ is at least precompact under pseudometric $d_{\mathcal{T}}$:
\begin{align} \label{E:pseudo1}
d_{\mathcal{T}}(t,t') = \sup_{(y,\theta) \in \supp{Y} \times \Theta} |g(y, \theta, t) - g(y, \theta, t')|.
\end{align} 
Letting $\mathcal{U}$ be the associated pseudometric topology on $\mathcal{T}$ has the helpful property of forcing $(\theta, t) \mapsto g(y, \theta, t)$ to be product topology-continuous for each $y$, which is useful for proving continuity of the pseudometric $\rho$ as in the arguments treating \eqref{E:utopology} and  \eqref{E:rhotwo}. 
Of course, it is simple to extend this principle to larger classes of functions. A requirement of \eqref{E:remarkedone} is $\mathcal{U}$-continuity of the derivatives of $v$ with respect to $\theta$. If the maps $\{\theta \mapsto \nabla \E{g(Y, \theta, t)}: t \in \mathcal{T}\}$ are also uniformly equicontinuous with respect to (say) the operator norm topology, this can be accomplished by adding a term $\sup_{\theta \in \Theta} \norm{\nabla \E{g(Y, \theta, t)} - \nabla \E{g(Y, \theta, t') }}_{\mathrm{op}}$ to the definition of $d_{\mathcal{T}}$ in \eqref{E:pseudo1}. The weak operator topology may be employed with similar effect when it is metrizable. 
 
\begin{exm} \label{Exm:two}
Consider a version of \eqref{E:ps123} wherein one has
\begin{align*}
v_n(\theta, t) = t(m_n(\theta)) - t(0) = \EE{n}{t(g(Y, \theta)) - t(0)}, \mathcal{t \in \mathcal{T}}
\end{align*}
where the collection of functions $\mathcal{T}$ is uniformly equicontinuous and bounded over the domain $g(\supp{Y} \times \Theta)$. For instance, when $g$ is continuous and $\supp{Y} \times \Theta$ is compact, these two regularity properties are satisfied in general cases of parts 1, 2, and 4 of example \eqref{Exm:one} (Lemma \ref{C:duality} below handles part 4). Then, the Arzel\`{a}-Ascoli theorem implies that the the collection of maps $(y, \theta) \mapsto t(g(y, \theta))$ indexed by $t \in \mathcal{T}$ is precompact with respect to the uniform norm metric. As it is generally no inconvenience to substitute in \eqref{E:ell} between $\mathcal{T}$ and its uniform norm closure, we may typically also assume that $\mathcal{T}$ is compact in the uniform norm sense.

When $\mathcal{U}$ is taken to be the uniform norm topology imposed on $t$ over $g(\supp{Y} \times \Theta)$ and $g$ is continuous, the pseudometric $((\theta, t) , (\theta', t')) \mapsto \sup_{y \in \supp{Y}} |t(g(y, \theta)) - t'(g(y, \theta'))|$) becomes continuous with respect to the product topology on $\Theta \times \mathcal{T}$. As this pseudometric is an upper bound for $\rho$ (as it is defined in \eqref{E:utopology}), the first continuity requirement of Assumption \ref{A:three} is met. The subsequent continuity requirements may be addressed with a similar treatment of the derivatives of $g$ with respect to parameter $\theta$. 

When the supports of the measures addressed in part 5 of Example \ref{Exm:one} are compact and the space $\mathcal{T}$ is a uniformly equicontinuous space of test functions, then one can reason similarly for parameters identified by certain relations between their induced measures. 
\end{exm}
\end{enumerate}

The preceding assumptions are sufficient to provide an upper bound for the asymptotic distribution of $T_n$. We now state a condition that turns the upper bound into an equality, which references the function $f$ introduced in Assumption \ref{A:two}. 
\begin{asm}
\label{A:four}
The following both hold:
\begin{enumerate}
\item $v(\theta, \cdot )$ is $\mathcal{U}$-upper semicontinuous for all $\theta \in \Theta_0$ 
\item $f_v$ is right-continuous, there exist random variables $\htheta_n$ which satisfy $f_v(d_{\mathfrak{B}}(\htheta_n , \Theta_0))= o_p(r_n^{-1})$, and $\sup_{t \in \mathcal{T}} v_n(\htheta_n, t) \le \inf_{\theta \in \Theta} \sup_{t \in \mathcal{T}} v_n(\theta, t)+ o_p(r_n^{-1})$ \textbf{or} $v(\cdot, t)$ is convex in $\theta$ in some neighborhood of $\Theta_0 \times \mathcal{T}$.\footnote{The proof of Theorem \ref{T:one} shows that it is possible to weaken this assumption slightly to convexity on a neighborhood of $\bigsqcup_{\theta \in \Theta_0}\{\theta \} \times \tilde{K}(\theta) \subset \Theta_0 \times \mathcal{T}$.}
\end{enumerate}
\end{asm}
In some cases, $v(\theta, \cdot)$ vanishes for $\theta \in \Theta_0$, and Assumption \ref{A:four}.1 is vacuously true. For instance, consider the first part of Example \ref{Exm:one}, where $v(\theta, t) = t(m(\theta))$ for some linear functional $t$, and $\ell(\theta)$ is some seminorm of $m(\theta)$. Suppose that $\ell(\theta)$ is a norm, and vanishes if and only if $m(\theta)$ does, as is the case with GMM estimation. Then $\theta$ is in $\Theta_0$ if and only if $m(\theta) = 0$, which is true if and only if $v(\theta, \cdot) = t(m(\theta)) = 0$ for all linear functionals $t$. The discussion following Assumption \ref{A:three} provides a number of other situations in which Assumption \ref{A:four}.1 is satisfied. 

Assumption \ref{A:four}.2 may be fulfilled in two ways. The first is by supplying a convergence rate for an approximate minimizer $\htheta_n$ of the criterion function $\sup_{t \in \mathcal{T}} v_n(\htheta_n, t)$ to the identified set $\Theta_0$. If, for instance, $r_n = \sqrt{n}$ and the function $f$ bounding the approximation error given in Assumption \ref{A:two} is the map $f_v(\delta) = \delta^2$, the first case of Assumption \ref{A:four}.2 is fulfilled if one has $d_{\mathfrak{B}} (\htheta_n, \Theta_0) = o_p(n^{-1/4})$. If $\Theta_0 = \{\theta_0\}$ is a singleton and $\htheta_n$ signifies an $M$-estimator of $\theta_0$, then $\sqrt{n}$-consistency of $\htheta_n$ (or consistency at any rate faster than $n^{-1/4}$) is sufficient to meet this first case.  \cite{S2012}, \cite{hong2017}, and \cite{CNS2023} use similar rate conditions to ensure that the parameter space local to $\Theta_0$ can be adequately approximated with linear expansions around points in the identified set.

The second case of Assumption \ref{A:four}.2 is most simply fulfilled when $v(\theta,t)$ is convex in $\theta$ for all $t \in \mathcal{T}$. In some settings, $v(\theta, t)$ is linear in both $\theta$ and $t$---take for instance  the NPIV model and related formulations discussed in Example \ref{Exm:one}. Note that, when $v$ is bilinear, its Fr\'{e}chet derivative $D$ satisfies $D_{\theta; t}v(\theta' - \theta) = v(\theta' , t) - v(\theta, t)$, so one can take $f = 0$ in Assumption \ref{A:three}. This automatically satisfies the rate condition of the first case of Assumption \ref{A:four}.2. 

We now state a result which bounds the asymptotic distribution of $T_n = r_n\inf_{\theta \in \Theta} \sup_{t \in \mathcal{T}} v_n(\theta, t)$, and gives an exact limiting distribution under the additional imposition of Assumption \ref{A:four}
\begin{theorem} \label{T:one}
Let Assumptions \ref{A:zero}, \ref{A:one}, \ref{A:two}, and \ref{A:three} hold. For $\theta \in \Theta$, let $K(\theta)$ and $\tilde{K}(\theta)$ be defined as follows:
\begin{align*}
K(\theta) & = \{t \in \mathcal{T}: \gamma(\theta, t) = 0\}\\
\tilde{K}(\theta) & = \{t \in \mathcal{T}: \gamma(\theta,t) = v(\theta, t) = 0\}
\end{align*}
Then, if $\Theta_0$ is nonempty, $K(\theta)$ is nonempty for all $\theta \in \Theta_0$ and 
\begin{align} 
T_n = r_n\inf_{\theta \in \Theta} \sup_{t \in \mathcal{T}} v_n(\theta, t) &\le \inf_{\theta \in \Theta_0} \sup_{t \in K(\theta)} \mathbb{G}_n(\theta,t) + o_p(1)  \nonumber\\
& \rs \inf_{\theta \in \Theta_0} \sup_{t \in K(\theta)} \mathbb{G}(\theta, t). \label{E:thm}
\end{align}
If $\Theta_0$ is empty, then $r_n = O_p(T_n)$.  

If Assumption \ref{A:four}.1 also holds, then the preceding also holds with $K(\theta)$ replaced by $\tilde{K}(\theta)$. If all of Assumption \ref{A:four} also holds, then the preceding is an equality with $K(\theta)$ replaced by $\tilde{K}(\theta)$. 
\end{theorem}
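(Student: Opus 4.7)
The plan is to combine a localized Taylor expansion, Sion's minimax theorem, and a penalty-style localization argument. Assume first that $\Theta_0 \ne \emptyset$. For fixed $\theta_0 \in \Theta_0$ and $\delta > 0$ small enough that $\Theta \cap B(\theta_0, \delta)$ is convex (by Assumption \ref{A:zero}), I restrict the outer infimum defining $T_n$ to this ball and write $h = \theta - \theta_0$. Using $v(\theta_0, t) \le \ell(\theta_0) = 0$, the Taylor estimate from Assumption \ref{A:two}.1, and asymptotic equicontinuity from Assumption \ref{A:one} to replace $\mathbb{G}_n(\theta_0 + h, t)$ by $\mathbb{G}_n(\theta_0, t)$ uniformly over $h$, I obtain
\begin{align*}
T_n \le \inf_{h \in (\Theta - \theta_0) \cap B(0, \delta)} \sup_{t \in \mathcal{T}} \bigl[ r_n v(\theta_0, t) + r_n D_{\theta_0; t} v(h) + \mathbb{G}_n(\theta_0, t) \bigr] + r_n O(f_v(\delta)) + o_p(1).
\end{align*}

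I next apply Sion's theorem to swap the outer $\inf_h$ and inner $\sup_t$: the inner domain $\mathcal{T}$ is convex and compact under Assumption \ref{A:three}, quasi-concavity and upper semicontinuity in $t$ are provided by Assumptions \ref{A:two}.2 and \ref{A:three}, and the outer feasible set is convex with the summand linear (hence convex) in $h$. After swapping, the nonincreasing-limit definition \eqref{E:gamma} of $\gamma$ yields $\inf_h r_n D_{\theta_0; t} v(h) = c_n \gamma(\theta_0, t) + o(c_n)$ for $c_n := r_n \delta$. Choosing $\delta_n \downarrow 0$ with $c_n \to \infty$ while $r_n f_v(\delta_n) \to 0$ (possible because $f_v(\delta) = o(\delta)$) gives
\begin{align*}
T_n \le \sup_{t \in \mathcal{T}} \bigl[ r_n v(\theta_0, t) + c_n \gamma(\theta_0, t) + \mathbb{G}_n(\theta_0, t) \bigr] + o_p(1).
\end{align*}
Both $r_n v(\theta_0, t) \le 0$ and $c_n \gamma(\theta_0, t) \le 0$ act as diverging penalties; since $\mathbb{G}_n$ is asymptotically tight in $L^\infty(\Theta \times \mathcal{T})$ while $\gamma(\theta_0, \cdot)$ is $\mathcal{U}$-continuous (a consequence of Assumption \ref{A:three}), the supremum asymptotically localizes to $K(\theta_0)$, forcing it to be nonempty and giving $T_n \le \sup_{t \in K(\theta_0)} \mathbb{G}_n(\theta_0, t) + o_p(1)$. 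Taking the infimum over $\theta_0 \in \Theta_0$ (uniformly by compactness of $\Theta_0$) and invoking $\mathbb{G}_n \rs \mathbb{G}$ via the continuous mapping theorem for inf--sup functionals delivers the stated weak limit. Assumption \ref{A:four}.1 lets the penalty $r_n v(\theta_0, t)$ further restrict the localization to $\tilde K(\theta_0)$ by the identical argument.

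For the matching lower bound under all of Assumption \ref{A:four}, the rate-condition sub-case uses the approximate minimizer $\htheta_n$ and a projection $\htheta_n^0 \in \Theta_0$. For $t \in \tilde K(\htheta_n^0)$, the monotonicity noted below \eqref{E:gamma} gives $D_{\htheta_n^0; t} v(\htheta_n - \htheta_n^0) \ge \gamma(\htheta_n^0, t) \|\htheta_n - \htheta_n^0\|_{\mathfrak{B}} = 0$; combined with a reverse-direction Taylor expansion, the rate bound $r_n f_v(d_{\mathfrak{B}}(\htheta_n, \Theta_0)) = o_p(1)$, and empirical equicontinuity, this yields $T_n \ge \sup_{t \in \tilde K(\htheta_n^0)} \mathbb{G}_n(\htheta_n^0, t) + o_p(1) \ge \inf_{\theta_0 \in \Theta_0} \sup_{t \in \tilde K(\theta_0)} \mathbb{G}_n(\theta_0, t) + o_p(1)$. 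In the convexity sub-case, convexity of $v(\cdot, t)$ near $\Theta_0 \times \mathcal{T}$ permits a second invocation of Sion's theorem that directly produces the matching lower bound without any separate approximate minimizer. Finally, when $\Theta_0 = \emptyset$, lower semicontinuity of $\ell$ on the compact $\Theta$ gives $\inf_\Theta \ell > 0$; uniform convergence of $r_n^{-1} \mathbb{G}_n$ then makes $r_n^{-1} T_n$ bounded below in probability, so $r_n = O_p(T_n)$.

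The main obstacle will be the Sion swap combined with the penalty localization: Sion applies to the linearized minimax, so the approximation error $r_n O(f_v(\delta))$ must be absorbed into $o_p(1)$ simultaneously with the requirement $c_n \to \infty$, forcing a careful calibration of $\delta_n$. A secondary difficulty is making the remainders and the localization uniform over $\theta_0 \in \Theta_0$ and the set-valued map $\theta_0 \mapsto K(\theta_0)$; the natural route is to lean on tightness of $\mathbb{G}_n$ over the compact parameter space together with an appropriate upper semicontinuity of $\theta_0 \mapsto \sup_{t \in K(\theta_0)} \mathbb{G}(\theta_0, t)$.
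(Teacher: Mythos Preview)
Your plan tracks the paper's proof closely: localize near $\Theta_0$, Taylor expand via Assumption~\ref{A:two}, apply Sion's theorem to swap $\inf_h$ and $\sup_t$, then use the diverging penalty $c_n \gamma(\theta_0,\cdot)$ to force the supremum onto $K(\theta_0)$. The lower-bound arguments under the two sub-cases of Assumption~\ref{A:four}.2 are also in the right spirit.

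Two points deserve correction or elaboration. First, you assert that ``$\gamma(\theta_0,\cdot)$ is $\mathcal{U}$-continuous (a consequence of Assumption~\ref{A:three}).'' That is not what the assumptions give: Assumption~\ref{A:three} only guarantees upper semicontinuity of $t\mapsto D_{\theta;t}v(\theta'-\theta)$, and since $\gamma(\theta,\cdot)$ is the decreasing limit of infima $\psi_\delta(\theta,\cdot)$ of such maps, it is only \emph{upper} semicontinuous (this is the paper's Lemma~\ref{L:gsc}). Upper semicontinuity is what you actually need for the penalty localization, so the argument survives, but the claim as stated is wrong.

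Second, and more substantively, the step ``taking the infimum over $\theta_0\in\Theta_0$ (uniformly by compactness)'' is where the real work lies, and compactness of $\Theta_0$ alone does not deliver it. Your proposed route---upper semicontinuity of $\theta_0\mapsto\sup_{t\in K(\theta_0)}\mathbb{G}(\theta_0,t)$---is not obviously available, because the correspondence $\theta_0\mapsto K(\theta_0)$ is defined through $\gamma$, and the assumptions give no continuity of $\gamma$ in $\theta$. The paper instead exploits $\rho$-total boundedness of $\Theta\times\mathcal{T}$ (Lemma~\ref{L:conv0}) to reduce to finitely many slices $\{\theta_0\}\times K(\theta_0)$ via a Hausdorff-distance argument (Lemma~\ref{L:conv1}), and separately shows that for each fixed $\theta$ the level sets $\{t:|\psi_{b_n}(\theta,t)|\le c_n\}$ collapse to $K(\theta)$ in $\rho_{\mathcal{T}}$-Hausdorff distance (Lemma~\ref{L:monotonedecreasing}). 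Asymptotic $\rho$-equicontinuity of $\mathbb{G}_n$ then transfers the finite-$\theta$ bound to the full infimum. You should expect to need something along these lines rather than a direct semicontinuity property of the value function.
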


Examples \ref{Exm:one} and \ref{Exm:two} have discussed several instances in which one has $v(\theta, t) = t(m(\theta))$ for $m$ a mapping from $\Theta$ to a Banach space $\mathfrak{X}$, and $t$ a continuous affine function from $\mathfrak{X}$ to $\R$. When $t$ is continuous and affine, the map $x \mapsto t(x) - t(0)$ is linear (\cite{Rock1970}, \S 1). Suppose $m$ is Fr\'{e}chet differentiable, and let $\nabla m$ denote the Fr\'{e}chet derivative of $m$ at $\theta$. Then, a straightforward calculation with \eqref{E:fdeltaeqn} shows that 
\begin{align} \label{E:ps19}
D_{\theta; t}v(\cdot) = t(\nabla m(\cdot)) - t(0)
\end{align}
for all $\theta \in \Theta_0$ and $t \in \mathcal{T}$. Therefore, we may rewrite the defining equation for $\gamma$ and infer that $K(\theta)$ is the set of $t$ which satisfy
\begin{align}
t (\nabla m(\theta' - \theta)) \ge  t(0)   \label{E:hilbert}
\end{align}
for all $\theta'$ which are close enough to $\theta$ in the $d_\mathfrak{B}$-metric (the notion of ``close enough" may be defined uniformly for $\theta \in \Theta_0$---see the proof of Theorem \ref{T:one}). \eqref{E:hilbert} has applications to several hypothesis testing scenarios of interest. An interesting case arises when $\mathfrak{X}$ is a Hilbert space, and the moment functions $m_n(\theta)$ converges weakly to a tight $L^\infty(\Theta)$-valued element $\mathbb{W}_n(\theta)$:
\begin{align} \label{E:ps139}
r_n (m_n(\theta) - m(\theta)) \rs \mathbb{W}_n(\theta) \in L^\infty(\Theta).
\end{align}
\begin{cor} \label{C:hilbert1}
Suppose that $v(\theta, t) = t(m(\theta))$ and $v_n(\theta, t) = t(m_n(\theta))$, where $m$ is Fr\'{e}chet differentiable and maps $\Theta$ to a Hilbert space $\mathfrak{X}$, and $\mathcal{T}$ is the unit ball in $\mathfrak{X}^*$ (so that $\ell(\theta) = \norm{m(\theta)}_{\mathfrak{X}}$). 

Make Assumptions \ref{A:zero}, \ref{A:one}, \ref{A:two}, and \ref{A:three}. Then, if $\Theta_0$ is nonempty and contained in the interior of $\Theta$, one has 
\[
K(\theta) = \{t \in \mathfrak{X}^*: t(\nabla m ) = 0\}
\]
and, by consequence,
\begin{align*}
T_n &\le \inf_{\theta \in \Theta_0} \norm{M_{ \nabla m} r_n(m_n (\theta ) - m(\theta)) }_{\mathfrak{X}} + o_p(1) \\
& = \inf_{\theta \in \Theta_0} \norm{M_{\nabla m} \mathbb{W}_n(\theta)}_{\mathfrak{X}} + o_p(1),
\end{align*}
where $M_{\nabla m}$ is the projection to the orthogonal complement of the closure of the image of $\nabla m: \mathfrak{B} \ra \mathfrak{X}$.
\end{cor}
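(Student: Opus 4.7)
The plan is to apply Theorem \ref{T:one} and then carry out two explicit Hilbert-space computations: identifying $K(\theta)$ as the annihilator of $\mathrm{image}(\nabla m)$, and evaluating the inner supremum via Riesz representation.

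First, since each $t \in \mathcal{T} = \{t \in \mathfrak{X}^* : \|t\| \le 1\}$ is linear (so $t(0) = 0$), formula \eqref{E:ps19} simplifies to $D_{\theta;t}v(h) = t(\nabla m(h))$. Because $\Theta_0$ lies in the interior of $\Theta$, for each $\theta \in \Theta_0$ there is $\delta_\theta > 0$ with the $\mathfrak{B}$-ball $B(\theta, \delta_\theta)$ contained in $\Theta$, so for all sufficiently small $\delta$ the infimum defining $\gamma(\theta, t)$ in \eqref{E:gamma} runs over the entire $\delta$-ball in $\mathfrak{B}$. Linearity of $h \mapsto t(\nabla m(h))$ then gives
\[
\gamma(\theta, t) = \inf_{\|h\|_{\mathfrak{B}} \le 1} t(\nabla m(h)) = -\|t \circ \nabla m\|_{\mathfrak{B}^*},
\]
so $\gamma(\theta, t) = 0$ if and only if $t \circ \nabla m \equiv 0$, which is the stated description of $K(\theta)$. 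In particular $K(\theta)$ contains the zero functional and is therefore nonempty.

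Next, linearity of $v_n(\theta, \cdot)$ gives $\mathbb{G}_n(\theta, t) = t\bigl(r_n(m_n(\theta) - m(\theta))\bigr)$. Applying Theorem \ref{T:one},
\[
T_n \le \inf_{\theta \in \Theta_0} \sup_{\substack{t \in \mathfrak{X}^*,\, \|t\| \le 1 \\ t(\nabla m) = 0}} t\bigl(r_n(m_n(\theta) - m(\theta))\bigr) + o_p(1).
\]
Identifying $\mathfrak{X}^*$ with $\mathfrak{X}$ by Riesz representation, the constraint becomes orthogonality to $\overline{\mathrm{image}(\nabla m)}$, and the inner supremum equals the $\mathfrak{X}$-norm of the projection of $r_n(m_n(\theta) - m(\theta))$ onto that orthogonal complement, i.e.\ $\|M_{\nabla m}\, r_n(m_n(\theta) - m(\theta))\|_{\mathfrak{X}}$. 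The final equality follows from \eqref{E:ps139} by the continuous mapping theorem (or a Skorokhod coupling) applied to the functional $f \mapsto \inf_{\theta \in \Theta_0} \|M_{\nabla m(\theta)} f(\theta)\|_{\mathfrak{X}}$ on $L^\infty(\Theta_0, \mathfrak{X})$.

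I expect the main subtlety to lie in this last step: justifying the continuous mapping requires $\theta \mapsto M_{\nabla m(\theta)}$ to be sufficiently regular over $\Theta_0$, which typically follows from continuity of $\nabla m$ together with closedness (or a uniform gap condition on) $\mathrm{image}(\nabla m(\theta))$. Everything else is routine manipulation of the Hilbert-space geometry and of the linearity of $v, v_n$ in $t$.
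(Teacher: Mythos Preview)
Your proof is correct and follows essentially the same route as the paper: use the interior assumption to identify $K(\theta)$ with the annihilator of $\mathrm{image}(\nabla m)$, then apply Riesz representation to rewrite $\sup_{t\in K(\theta)} t(x)$ as $\|M_{\nabla m}x\|_{\mathfrak{X}}$, and invoke Theorem~\ref{T:one}. One small point: you overwork the final equality. In the paper $\mathbb{W}_n(\theta)$ is simply \emph{notation} for $r_n(m_n(\theta)-m(\theta))$ (cf.\ the proof of Corollary~\ref{C:two}, where $t(\mathbb{W}_n(\theta)) = t(r_n(m_n(\theta)-m(\theta)))$ is used as an identity), so the last displayed equality is a substitution, not a weak-convergence claim, and no continuous-mapping or regularity argument for $\theta\mapsto M_{\nabla m(\theta)}$ is needed.
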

\begin{proof}
By the preceding discussion, $t$ is in $K(\theta)$ if and only if $t(\nabla m(\theta' - \theta))$ for $\theta'\in \Theta$ which are in a $\underline{\delta}$-ball around $\theta$. Because $\Theta_0$ is contained in the interior of $\Theta$, this is true if and only if $t$ vanishes on the image of $\nabla m$, when the latter is viewed as a map from $\mathfrak{B}$ to $\mathfrak{X}$. By continuity of $t$, this is true if and only if $t$ vanishes on the closure $\overline{\mathrm{im}(\nabla m)}$. Let $P_{\nabla m}$ denote the projection onto this closed subspace (\cite{C1994}), let $M_{\nabla m}$ denote its complement $I - P_{\nabla m}$, and with some abuse of notation let $t$ denote its own Riesz representor. Then, for all $x \in \mathfrak{X}$, 
\begin{align*}
\sup_{t \in K(\theta)} t(x) & = \sup_{\substack{\norm{t}_{\mathfrak{X}} \le 1 \\ P_{\nabla m } t = 0}} M_{\nabla m} t(x)   = \sup_{\norm{M_{\nabla m} t}_{\mathfrak{X}} \le 1} \inner{ M_{\nabla m} t, M_{\nabla m}  x} \\
& = \norm{M_{\nabla m}x }_{\mathfrak{X}}.
\end{align*}
Theorem \ref{T:one} then concludes. 
\end{proof}

Corollary \ref{C:hilbert1} complements tests of overidentifying restrictions provided under the assumption of point identification with method of moments estimators (e.g.\ \cite{Sargan1958} and \cite{H1982} for a generalization). In particular, if $\mathfrak{X}$ is some Euclidean space $\R^k$, and $r_n(m_n(\theta) - m(\theta))$ has a limiting $N(0, I_{k \times k})$ distribution (as is the case with efficiently weighted GMM), then the upper bound implied by Corollary \ref{C:hilbert1} has an asymptotic distribution characterized by writing
\begin{align}
\inf_{\theta \in \Theta_0} \norm{M_{ \nabla m} r_n(m_n (\theta ) - m(\theta)) }_{\mathfrak{X}}& \rs \inf_{\theta \in \Theta_0} \norm{ M_{ \nabla m}\mathbf{Z}} \nonumber \\
& \overset{\mathrm{d}}{=} \inf_{\theta \in \Theta_0} \sqrt{\chi^2_{k - \mathrm{rank}(\nabla m)}}, \label{E:chisqdist}
\end{align}
where $\mathbf{Z}$ is a standard normal random vector. Under point identification and a full-rank condition on $\nabla m$, \eqref{E:chisqdist} states the limiting $\chi^2$ distribution of the $J$-statistic of overidentifying relations. We note that, under the usual assumptions of GMM, one typically verifies that Assumption \ref{A:four} holds, so that the asymptotic distribution of $T_n$ is precisely \eqref{E:chisqdist}.  

It is also fruitful to examine the result of Corollary \ref{C:hilbert1} when $\mathfrak{X}$ is more generally held to be a Banach space and $\ell$ the composition of a moment function with some convex function over the Banach space (like a norm). Continue to suppose that $v(\theta, t) = t(m(\theta))$ for $t$ in the dual $\mathfrak{X}^*$ of some Banach space $\mathfrak{X}$. \cite{S2012}, \cite{hong2017}, \cite{CNS2023}, and \cite{Fan2023}, among other papers, provide Banach space bounds for the distribution of $T_n$ which have in common the structure: 
\begin{align} \label{E:upperboundpaper}
U_n & = \inf_{\theta \in \Theta_0} \inf_{h \in S_\theta} \sup_{t \in \mathcal{T}} t(\mathbb{W}_n(\theta) + \nabla m (h)),
\end{align}
where $S_\theta$ is some subset of the tangent space $T_\theta \Theta$ of $\Theta$ at $\theta$ (or its closure), which for our purposes may best be defined as the convex cone:
\begin{align*}
T_\theta \Theta & = \bigcap_{\delta > 0} \bigcup_{\substack{\lambda > 0  \\ \norm{\theta' - \theta}_{\mathfrak{B}} \le \delta }} \lambda (\theta' - \theta)
\end{align*}
(convexity follows under local convexity of $\Theta$ and \cite{Rock1970}, Theorem 2.5; note that $T_\theta \Theta$ may not be a subspace if, for instance, $\theta$ is contained in the boundary of $\Theta$). In the aforementioned results, one typically defines $S_\theta$ as the closed limit of linear subspaces formed by linear sieve approximations to $\Theta$. 

Let $h \in \overline{T_\theta \Theta}$ be a nonzero element of the closure of the tangent space of $\Theta$ at $\theta$. Then, there exists a sequence of $\theta'_k \in \Theta$ converging to $\theta$, and a diverging sequence of $\lambda_k$, satisfying $\lambda_k (\theta'_k - \theta) \ra h$ in $\mathfrak{B}$. If one lets $t$ be an affine function over $\mathfrak{X}$ and an element of $K(\theta)$, then by \eqref{E:ps19} and \eqref{E:gamma},
\begin{align}
t(\nabla m (h)) -t(0) & = D_{\theta; t} v (h) = \lim_{k \ra \infty} \lambda_k D_{\theta; t} v(\theta'_k - \theta) \nonumber \\
& = \lim_{k \ra \infty} \lambda_k \norm{\theta_k' - \theta}_{\mathfrak{B}}\frac{D_{\theta; t}v (\theta_k' - \theta)}{\norm{\theta_k' - \theta}_{\mathfrak{B}}} \nonumber \\
& \ge \lim_{k \ra \infty} \lambda_k \norm{\theta_k' - \theta}_{\mathfrak{B}} (\liminf_{k \ra \infty} \frac{D_{\theta; t}v (\theta_k' - \theta)}{\norm{\theta_k' - \theta}_{\mathfrak{B}}}) \nonumber \\
& \ge \norm{h} \gamma(\theta, t) = 0. \label{E:banachbound}
\end{align}
From \eqref{E:banachbound}, one can readily show that the upper bound in \eqref{E:thm} is a lower bound for bounds of the type \eqref{E:upperboundpaper}, whence less conservative, especially under the assumptions of Theorem \ref{T:one}:
\begin{lemma}
\label{C:two}
Let $v(\theta, t) = t(m(\theta)) -t(0)$ and $v_n(\theta, t) = t(m_n(\theta))$, where $m$ is Fr\'{e}chet differentiable and maps $\Theta$ to a Banach space $\mathfrak{X}$, $\mathcal{T}$ is some set of continuous affine functions mapping $\mathfrak{X} \ra \R$, and for all $\theta \in \Theta_0$, $S_\theta$ is some subset of the closure $\overline{T_\theta \Theta}$ of the tangent space of $\Theta$ at $\theta$.

Then, under Assumption \ref{A:zero}, one has the bound
\begin{align} \label{E:corr2eq0}
\inf_{\theta \in \Theta_0} \sup_{t \in K(\theta)} \mathbb{G}_n(\theta, t) = \inf_{\theta \in \Theta_0} \sup_{t \in K(\theta)} t(\mathbb{W}_n(\theta)) \le \inf_{\theta \in \Theta_0} \inf_{h \in S_\theta} \sup_{t \in \mathcal{T}} t(\mathbb{W}_n(\theta) + \nabla m(h)).
\end{align}
In particular, if $\mathcal{T}$ is the unit ball in $\mathfrak{X}^*$, then one has
\begin{align}
\inf_{\theta \in \Theta_0} \sup_{t \in K(\theta)} \mathbb{G}_n(\theta, t) = \inf_{\theta \in \Theta_0} \sup_{t \in K(\theta)} t(\mathbb{W}_n(\theta)) \le U_n, \label{E:corr2eq}
\end{align}
with equality if $S_\theta \supset T_\theta \Theta, \, \forall \theta \in \Theta_0$.
\end{lemma}

\begin{rem} \label{R:ktheta}
Using the definition of $\gamma$, it is straightforward to show that, in the setting of Lemma \ref{C:two} and under Assumption \ref{A:zero}, \eqref{E:banachbound} is actually a defining property for $K(\theta)$. More precisely, an element $t$ is in $K(\theta)$ if and only if $t(\nabla m(h)) - t(0) \ge 0$ for all $h \in T_\theta \Theta$. 
\end{rem}

\section{Critical values}
\label{S:cvs}
By Theorem \ref{T:one}, critical values for the distribution of 
\[
\inf_{\theta \in \Theta_0} \sup_{t \in K(\theta)} \mathbb{G}(\theta, t)
\]
can be used to describe valid, if conservative, critical values for the test statistic $T_n$. The chief challenge in determining these critical values is in estimating the sets $\Theta_0$ and $K(\theta)$ for each point in $\Theta_0$, but this is not onerous if uniformly consistent estimators for the criterion function $\ell$ and the derivative-norm map $\gamma$ are available. The former can be estimated using $\ell_n$ under our assumptions, whereas an estimator for the latter can be obtained if one has access to an estimate $\nabla v_n$ of $v$. 

Our recommended approach for inference in this partially identified setting (see especially \cite{S2012}, \cite{hong2017}, \cite{Zhu2020}) is to apply the bootstrap. For a bootstrapped statistic $Z_n^*$ and a limit distribution $Z$, we follow \cite{VW1996} in writing that $Z_n^* \overset{\mathrm{P}}{\rs} Z$ if the convergence $d_{\mathrm{BL}_1}(Z_n^*, Z) = \sup_{h \in \mathrm{BL}_1} \mathrm{E}^*[h(Z_n^*)] - \E{ h(Z)}\cp 0$ (in outer probability) holds in the \textit{bounded lipschitz metric} $d_{\mathrm{BL}_1}$, where $\mathrm{BL}_1$ is the set of 1-Lipschitz functions bounded in magnitude by $1$, and $\mathrm{E}^*$ is an expectation conditional upon the sample. We write $Z_n^* \overset{\mathrm{a.s.}}{\rs} Z$ if the same convergence holds almost surely with regard to outer probability. With some abuse of notation, we let expectations and probabilities refer to outer measure when quantities are asymptotically measurable. 

The following assumptions and theorem demonstrate a strategy for approximating the asymptotic distribution of $T_n$ using the bootstrap. Hemicontinuity in our assumptions is relative to topology $\mathcal{U}$ on $\mathcal{T}$ and the $d_{\mathfrak{B}}$ metric on $\Theta$ (see \cite{A1999}). 
\begin{asm}[Bootstrap Consistency] The following hold:
\label{A:five}
\begin{enumerate}
\item There exists an asymptotically measurable bootstrap analogue $\mathbb{G}_n^*$ of $\mathbb{G}_n$ which satisfies 
\begin{align*}
\mathbb{G}_n^*(\theta, t) \overset{\mathrm{P}}{\rs} \mathbb{G}(\theta, t)
\end{align*}

\item 
$|\gamma|$ is bounded over $\Theta \times \mathcal{T}$ by a constant $C_\gamma < \infty$. For all $\theta \in \Theta$, there exist a sequence of nonpositive and $d_\mathcal{T}$-upper semicontinuous functions $(\psi_n(\theta, \cdot) )$ decreasing pointwise to $\gamma(\theta, \cdot)$ and estimators $(\hat{\psi}_n(\theta, \cdot))$ which satisfy 
\begin{align*}
\sup_{\substack{\theta \in \Theta \\ t \in \mathcal{T}}} |\hpsi_n(\theta, t) - \psi_n(\theta, t)| = O_p(q_n)
\end{align*}
where $(q_n)\ra 0$ is a convergent sequence of constants. 
\item There is some topology $\mathcal{V}$ at least as strong as the metric topology on $\Theta$ such that $\theta \mapsto K(\theta)$ is $\mathcal{V}$-lower hemicontinuous at every point in $\Theta_0$ and every $\mathcal{V}$-open neighborhood of $\Theta_0$ contains a $d_{\mathfrak{B}}$-open neighborhood of $\Theta_0$, \textbf{or} there is a known nondecreasing $f_\ell: \R_+ \ra \R_+$ such that one has $\sup_{\theta \in \Theta} \frac{ d_{\mathfrak{B}}(\theta, \Theta_0)}{f_\ell(\ell(\theta))} < \infty$ and $\lim_{x \downarrow 0} f_\ell(x) = 0$, and there is a $d_{\mathfrak{B}}$-open neighborhood $U$ of $\Theta_0$ on which $\gamma$ satisfies the Lipschitz condition:
\begin{align*}
\inf_{\substack{\theta \in \Theta_0, \theta' \in U \\ t \in K(\theta)}} \frac{\gamma(\theta' , t) }{d_\mathfrak{B} (\theta , \theta') } > - \infty.
\end{align*}
\end{enumerate}
\end{asm}
The first two parts of Assumption \ref{A:five} are fairly light bootstrap consistency and regularity conditions. The functions $\psi_n$ can be taken to be $\gamma$ if the latter can be directly estimated (e.g.\ by computing the Fr\'{e}chet derivatives of $v$). In other cases, it may be more appropriate to let 
\begin{align} \label{E:psin}
\psi_n(\theta, t) = \inf_{ \substack{\theta' \in \Theta  \\
\norm{\theta - \theta'}_{\mathfrak{B}} \le \delta_n }} \frac{D_{\theta; t} v(\theta' - \theta)}{\delta_n},
\end{align}
where $(\delta_n)$ is an sequence of constants converging slowly enough to $0$ so that $\psi_n$ can be adequately estimated. Such a choice of $\psi_n$ satisfies the properties indicated in Assumption \ref{A:five} and is further discussed below, along with the corresponding sequence $(q_n)$. 

The third part of Assumption \ref{A:five} consists of two possible cases. The first is satisfied under a hemicontinuity condition on the correspondence ${K}(\theta)$ with respect to a topology $\mathcal{V}$, which can be either the metric topology associated with $d_{\mathfrak{B}}$ or a stronger topology. Suppose that $\Theta$ and $\mathcal{T}$ are finite dimensional spaces, and assume an appropriate degree of smoothness for the map $(\theta, t) \mapsto D_{\theta; t} v$ (the right hand side is the Jacobian of $v(\cdot, t)$ evaluated at $\theta$). We have remarked that, as long as $\Theta_0$ is contained in the interior of $\Theta$, $K(\theta)$ is defined as the zero set of the vector-valued map $(\theta, t) \mapsto D_{\theta; t}v$. Its lower hemicontinuity may thus be established as a consequence of the implicit function theorem if the Jacobian of this map, regarded as a function in $t$, has full rank at all points $(\theta, t)$, $\theta \in \Theta_0, t \in K(\theta)$ (e.g.\ \cite{A1999}). In more general settings, lower hemicontinuity of the correspondence can be ascertained using infinite-dimensional versions of the implicit function theorem (e.g.\ \cite{Loomis1990}). Section \ref{S:hemi} below gives equivalent characterizations for the lower hemicontinuity of the correspondence $K$ in the strong (norm) topology on any Banach space. 

The second case of Assumption \ref{A:five}.3 is essentially a requirement that the function $x \mapsto \sup_{\theta: \ell(\theta) \le x} d_{\mathfrak{B}}(\theta, \Theta_0)$ can be bounded above up to some constant factor by a nondecreasing function $f_\ell$ (note that one can take $f_\ell$ to be exactly this increasing function, so that $f_\ell$ always exists). This requirement of local identification in a neighborhood of $\Theta_0$ resembles similar conditions imposed under partial identification, such as Assumption 3.4 in \cite{CNS2023} and Assumption B.3 in \cite{Zhu2020}. 

Assumption \ref{A:five} is sufficient for the estimation of an upper bound of the distribution of $T_n$ using $K(\theta)$. Theorem \ref{T:one} demonstrates that a tighter bound may potentially be obtained using the smaller set $\tilde{K}(\theta)$, so we also state a supplementary assumption which enables such an estimate to be made. We have observed that, in many settings of interest, one has $v(\theta, \cdot) = 0$ whenever $\theta \in \Theta_0$. This forces the relation $\tilde{K}(\theta) = K(\theta)$ and renders the following assumption unnecessary.

\begin{asm}[Bootstrap Consistency II] \label{A:five'}
 The correspondence $\theta \mapsto \tilde{K}(\theta)$ is $\mathcal{V}$-lower hemicontinuous at every point in $\Theta_0$, \textbf{or} there is a $d_{\mathfrak{B}}$-open neighborhood $U$ of $\Theta_0$ on which $\gamma$ and $v$ satisfy the Lipschitz conditions:
\begin{align*}
\inf_{\substack{\theta \in \Theta_0, \theta' \in U \\ t \in \tilde{K}(\theta)}} \frac{\gamma(\theta' , t) }{d_\mathfrak{B} (\theta , \theta') }, \inf_{\substack{\theta \in \Theta_0, \theta' \in U \\ t \in \tilde{K}(\theta)}} \frac{ v(\theta', t)}{d_\mathfrak{B} (\theta , \theta') } > - \infty.
\end{align*}
\end{asm}

\begin{theorem} \label{T:two}
Make Assumptions \ref{A:zero}, \ref{A:one}, \ref{A:two}, \ref{A:three}, \ref{A:four}.1, and \ref{A:five}. Let $(\lambda_n) \ra \infty$ and $(\mu_n), (\tilde{\mu}_n) \ra \infty$ be divergent sequences chosen so that $\lambda_n q_n, \mu_n r_n^{-1},\tilde{\mu}_nr_n^{-1} = o(1) $ and $\lambda_n = o(\mu_n)$. If the second case of Assumption \ref{A:five}.3 is presumed to hold, suppose also that $\lambda_n f_\ell(c_\gamma\mu_n^{-1} \lambda_n )$ and $\tilde{\mu}_n f_\ell(c_\gamma\mu_n^{-1} \lambda_n )$ are $o(1)$ for some constant $c_\gamma > C_\gamma$.  

Then, one has 
\begin{align}
\inf_{\theta \in \Theta} (\mu_n \ell_n(\theta) + (\sup_{t \in \mathcal{T}} \mathbb{G}_n^*(\theta, t) + \lambda_n \hpsi_n(\theta,t)))  \overset{\mathrm{P}}{\rs} \inf_{\theta \in \Theta_0} \sup_{t \in K(\theta)} \mathbb{G}(\theta, t). \label{E:bootstraptwo}
\end{align}

If Assumption \ref{A:five'} also holds, then
\begin{align} 
&\inf_{\theta \in \Theta} (\mu_n \ell_n(\theta) + (\sup_{t \in \mathcal{T}} \mathbb{G}_n^*(\theta, t) + \lambda_n \hpsi_n(\theta,t) + \tilde{\mu}_n v_n(\theta, t) ) ) \overset{\mathrm{P}}{\rs} \inf_{\theta \in \Theta_0} \sup_{t \in \tilde{K}(\theta)} \mathbb{G}(\theta, t). \label{E:bootstrapone}
\end{align}

\end{theorem}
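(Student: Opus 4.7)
The plan is a two-sided sandwich: bound the bootstrap criterion from above by $\inf_{\theta\in\Theta_0}\sup_{t\in K(\theta)}\mathbb{G}_n^*(\theta,t)+o_p(1)$, from below by the same quantity minus $o_p(1)$, and then invoke Assumption \ref{A:five}.1 together with an argument analogous to Theorem \ref{T:one} to transfer bootstrap weak convergence to $\inf_{\theta\in\Theta_0}\sup_{t\in K(\theta)}\mathbb{G}(\theta,t)$. Throughout, the penalties $\mu_n \ell_n(\theta)$ and $\lambda_n \hpsi_n(\theta, t)$ will act as Lagrangian barriers localizing the outer infimum on $\Theta_0$ and the inner supremum on $K(\theta)$.

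First I would localize the outer infimum. Since $\ell$ is lower semicontinuous on the compact set $\Theta$ with $\ell^{-1}(0)=\Theta_0$, one has $\inf_{d_\mathfrak{B}(\theta,\Theta_0)\geq\varepsilon}\ell(\theta)>0$ for every $\varepsilon>0$. Assumption \ref{A:one} makes $r_n\ell_n$ tight, and $\mu_n=o(r_n)$ with $\mu_n\to\infty$, so $\mu_n\ell_n(\theta)$ diverges uniformly off any fixed neighborhood of $\Theta_0$, while the remaining summand is $O_p(1)$ by bootstrap tightness together with $\lambda_n\hpsi_n(\theta,t)\leq\lambda_n\psi_n(\theta,t)+\lambda_n|\hpsi_n-\psi_n|(\theta,t)\leq 0+o_p(1)$ (using nonpositivity of $\psi_n$ and $\lambda_n q_n=o(1)$). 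Consequently the infimum is asymptotically attained on a shrinking $d_\mathfrak{B}$-neighborhood of $\Theta_0$ with probability tending to one.

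Next I would analyze the inner supremum for $\theta$ near $\Theta_0$. The key fact is that $\psi_n$ is nonpositive and decreases pointwise to $\gamma$, so $\gamma(\theta,t)=0$ forces $\psi_n(\theta,t)=0$; combined with $\lambda_n q_n=o(1)$ this gives $\lambda_n\hpsi_n(\theta,t)=o_p(1)$ uniformly over $t\in K(\theta)$, yielding the lower bound $\sup_{t\in\mathcal{T}}(\mathbb{G}_n^*+\lambda_n\hpsi_n)\geq\sup_{t\in K(\theta)}\mathbb{G}_n^*(\theta,t)-o_p(1)$. For the upper bound, since $\mathbb{G}_n^*$ is asymptotically uniformly bounded on $\mathcal{T}$, the effective supremum is over $\{t:\hpsi_n(\theta,t)\geq-B/\lambda_n\}\subset\{t:\psi_n(\theta,t)\geq-B'/\lambda_n\}$. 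By $\mathcal{U}$-upper semicontinuity of each $\psi_n$ and $\mathcal{U}$-compactness of $\mathcal{T}$, a Dini-type argument applied to the monotone sequence $-\psi_n\uparrow-\gamma$ squeezes these sets into arbitrarily small $\rho$-neighborhoods of $K(\theta)$, and asymptotic equicontinuity of $\mathbb{G}_n^*$ closes the remaining gap.

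To transfer the lower bound uniformly across $\theta$ in the shrinking neighborhood, I would apply Assumption \ref{A:five}.3. In the hemicontinuity case, any $t\in K(\theta_0)$ is approximated by some $t'\in K(\theta')$ for $\theta'$ near $\theta_0$, transferring sup values continuously in both process arguments. In the Lipschitz alternative, $\sup_\theta d_\mathfrak{B}(\theta,\Theta_0)\leq C f_\ell(\ell(\theta))$ places near-minimizers within $O_p(c_\gamma\mu_n^{-1}\lambda_n)$ of $\Theta_0$ on the $f_\ell$-scale, and the Lipschitz bound on $\gamma$ then makes $\lambda_n\hpsi_n$ along $K(\theta_0)$ of order $\lambda_n f_\ell(c_\gamma\mu_n^{-1}\lambda_n)=o(1)$ by the rate constraints. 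For \eqref{E:bootstrapone}, $v(\theta,t)\leq\ell(\theta)=0$ for $\theta\in\Theta_0$, so $\tilde\mu_n v_n(\theta,t)=\tilde\mu_n v(\theta,t)+O_p(\tilde\mu_n r_n^{-1})$ is nonpositive up to $o_p(1)$, vanishing precisely on $\tilde{K}(\theta)$; the same Lagrangian argument applied with Assumption \ref{A:five'} in place of \ref{A:five}.3 produces the $\tilde{K}(\theta)$-indexed limit. The main obstacle will be step two, converting the pointwise convergence $\psi_n\downarrow\gamma$ into a uniform statement that forces the $t$-supremum into $K(\theta)$; the parallel obstacle in step three is matching the rate $\lambda_n f_\ell(c_\gamma\mu_n^{-1}\lambda_n)=o(1)$, which is exactly why that inequality is imposed in the theorem's hypotheses.
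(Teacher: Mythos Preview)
Your outline matches the paper's proof closely: the paper also sandwiches the bootstrap criterion between $\inf_{\theta\in\Theta_0}\sup_{t\in K(\theta)}\mathbb{G}'_n(\theta,t)\pm o_p(1)$, using the $\mu_n\ell_n$ penalty to localize $\theta$ near $\Theta_0$, the $\lambda_n\hpsi_n$ penalty together with a Dini-type argument (the paper's Lemma~\ref{L:monotonedecreasing}) to force $t$ into $K(\theta)$, and the two cases of Assumption~\ref{A:five}.3 to make the lower bound uniform in $\theta$.

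One imprecision to fix in your localization step: you claim the ``remaining summand'' $\sup_{t}(\mathbb{G}_n^*+\lambda_n\hpsi_n)$ is $O_p(1)$, but you only bound it from above. From below it can be as negative as roughly $-\lambda_n C_\gamma$, since $|\psi_n|\le|\gamma|\le C_\gamma$ (Assumption~\ref{A:five}.2) and there is no guarantee that $K(\theta)$ is nonempty off $\Theta_0$. The paper handles this by writing, for any near-minimizer $\htheta_n'$, $\mu_n\ell_n(\htheta_n')\le O_p(1)+\lambda_n\sup|\hpsi_n|\le O_p(1)+\lambda_n C_\gamma$, and then invoking $\lambda_n=o(\mu_n)$ to get $\ell(\htheta_n')=O_p(\mu_n^{-1}\lambda_n)=o_p(1)$. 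This is also where the constant $c_\gamma>C_\gamma$ in the second-case rate hypothesis comes from. The paper additionally passes through a subsequence reduction (every subsequence has a further subsequence on which the bootstrap convergence is almost sure) to convert $\overset{\mathrm{P}}{\rs}$ into a deterministic weak-convergence statement before running your sandwich argument; you allude to this at the end but should make it explicit.
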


Theorem \ref{T:two} includes as a corollary a simpler way of obtaining a critical values for the asymptotic distribution of $T_n$. The simplification omits the outer minimization over $\Theta$ in the left hand sides of \eqref{E:bootstrapone} and \eqref{E:bootstraptwo} as long as an appropriate value of $\theta$ is used. When point identification holds so that $\Theta_0$ is a singleton, and some additional regularity on the functions $v, \psi$, and $\psi_n$ is imposed, it is straightforward to show that such a bootstrap procedure is asymptotically equivalent to the more computationally intensive approach indicated in Theorem \ref{T:two}. The additional regularity comes in the form of convergence requirements upon the nonstochastic functions $\psi_n(\theta, t)$ to $\gamma$ and lower semicontinuity of $\gamma$:
\begin{asm}\label{A:corr} $\Theta = \{\theta_0\}$ is a singleton, and there is a neighborhood $\mathcal{N}$ of $\{\theta_0\} \times K(\theta_0) \subset \Theta \times \mathcal{T}$ on which 
\begin{enumerate}
\item $\gamma$ is upper semicontinuous 
\item The functions $\psi_n$ converge uniformly to $\gamma$.
\end{enumerate}
\end{asm}
\begin{cor}\label{C:three}
Let $\htheta_n$ satisfy $\ell_n(\htheta_n)  \le \inf_{\theta \in \Theta} \ell_n(\theta) + O_p(r_n^{-1})$. Then, the assumptions implying \eqref{E:bootstraptwo} also imply
\begin{align} \label{E:bs1}
\sup_{t \in \mathcal{T}} \mathbb{G}_n^*(\htheta_n, t) + \lambda_n \hpsi_n(\htheta_n, t) - Z_n^* \overset{\mathrm{P}}{\rs} \inf_{\theta \in \Theta_0} \sup_{t \in K(\theta)} \mathbb{G}(\theta, t),
\end{align} 
and the assumptions implying \eqref{E:bootstrapone} also imply
\begin{align} \label{E:bs2}
\sup_{t \in \mathcal{T}} \mathbb{G}_n^*(\htheta_n, t) + \lambda_n \hpsi_n(\htheta_n, t) + \tilde{\mu}_n v_n(\htheta_n , t) - Z_n^* \overset{\mathrm{P}}{\rs} \inf_{\theta \in \Theta_0} \sup_{t \in K(\theta)} \mathbb{G}(\theta, t),
\end{align}
where $Z_n^* \ge 0$. 

Under Assumption \ref{A:corr}, one can take $Z_n^* = 0$ in \eqref{E:bs1}. Under Assumption \ref{A:corr} and upper semicontinuity of $v$ on $\mathcal{N}$, one can take $Z_n^* = 0$ in \eqref{E:bs2}. 
\end{cor}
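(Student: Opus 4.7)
The plan is to derive both claims as direct consequences of Theorem~\ref{T:two}. For the general statements \eqref{E:bs1} and \eqref{E:bs2}, Theorem~\ref{T:one} gives $r_n\inf_{\theta\in\Theta} \ell_n(\theta) = O_p(1)$, and the near-optimality of $\htheta_n$ combined with $\mu_n/r_n = o(1)$ yields $\mu_n \ell_n(\htheta_n) = o_p(1)$. Evaluating the outer infimum in \eqref{E:bootstraptwo} at $\theta = \htheta_n$ gives
\[
\inf_{\theta \in \Theta}\left(\mu_n \ell_n(\theta) + \sup_{t\in \mathcal{T}}(\mathbb{G}_n^*(\theta,t) + \lambda_n \hpsi_n(\theta,t))\right) \leq \sup_{t\in \mathcal{T}}(\mathbb{G}_n^*(\htheta_n,t) + \lambda_n \hpsi_n(\htheta_n,t)) + o_p(1).
\]
Setting $Z_n^*$ equal to the positive part of the gap between the plug-in supremum on the right and the infimum on the left yields $Z_n^* \geq 0$ together with $\sup_t(\cdots) - Z_n^* = \inf_\theta[\cdots] + o_p(1)$, so \eqref{E:bs1} follows from \eqref{E:bootstraptwo}. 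The same argument applied to \eqref{E:bootstrapone} establishes \eqref{E:bs2}.

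For the strengthened claim that $Z_n^* = 0$ suffices under Assumption~\ref{A:corr}, the task is to upgrade this inequality to an asymptotic equality: $\sup_t(\mathbb{G}_n^*(\htheta_n,t) + \lambda_n \hpsi_n(\htheta_n,t)) \overset{\mathrm{P}}{\rs} \sup_{t \in K(\theta_0)}\mathbb{G}(\theta_0,t)$. The lower bound is already in hand from the first paragraph; what remains is the matching upper bound. The idea is to show that the supremum in $t$ asymptotically concentrates in a shrinking neighborhood of $K(\theta_0)$. Since $\psi_n \leq 0$ and $\hpsi_n - \psi_n = O_p(q_n)$ with $\lambda_n q_n = o(1)$, one has $\lambda_n \hpsi_n(\htheta_n, t) \leq o_p(1)$ uniformly in $t$, so the penalty does not hurt the supremum over $K(\theta_0)$. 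Outside any neighborhood of $K(\theta_0)$, one needs $\lambda_n \hpsi_n(\htheta_n,t)$ to diverge to $-\infty$ uniformly, killing the contribution from such $t$.

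The main obstacle is establishing this uniform $-\infty$ penalization on the complement of a neighborhood of $K(\theta_0)$. The argument combines (i) $\mathcal{U}$-continuity of $\gamma(\theta_0, \cdot)$ (an implication of Assumption~\ref{A:three} shown in Theorem~\ref{T:one}'s proof) with compactness of $\mathcal{T}$ to conclude that $\{t : \gamma(\theta_0,t) \leq -\eta\}$ is compact for each $\eta > 0$; (ii) Assumption~\ref{A:corr}.2's uniform convergence $\psi_n \to \gamma$ on $\mathcal{N}$, extended via a Dini-type argument (using upper semicontinuity of $\psi_n$ from Assumption~\ref{A:five}.2 and continuity of $\gamma(\theta_0,\cdot)$) to give uniform convergence on $\mathcal{T}$; and (iii) consistency $\htheta_n \to \theta_0$, which follows from lower semicontinuity of $\ell$ and compactness of $\Theta$ (Assumption~\ref{A:zero}) together with $\ell_n \to \ell$ uniformly, combined with upper semicontinuity of $\gamma$ on $\mathcal{N}$ (Assumption~\ref{A:corr}.1) to transfer the penalization from $\psi_n(\theta_0,\cdot)$ to $\hpsi_n(\htheta_n,\cdot)$. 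Once the supremum is localized in a vanishing neighborhood of $K(\theta_0)$, asymptotic equicontinuity of $\mathbb{G}_n^*$ at $\theta_0$ (inherited from $\mathbb{G}_n$ via Assumption~\ref{A:five}.1) delivers the desired upper bound. The corresponding statement for \eqref{E:bs2} under upper semicontinuity of $v$ proceeds identically, with the added term $\tilde\mu_n v_n(\htheta_n,t)$ further localizing the supremum to $\tilde K(\theta_0)$ by the same mechanism.
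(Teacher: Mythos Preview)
Your proposal is correct and follows essentially the same route as the paper: define $Z_n^*$ as the slack between the plug-in at $\htheta_n$ and the full infimum from Theorem~\ref{T:two} (using $\mu_n\ell_n(\htheta_n)=o_p(1)$), and under Assumption~\ref{A:corr} show that the inner supremum localizes to $K(\theta_0)$ by splitting $\mathcal{T}$ into a neighborhood of $K(\theta_0)$, where asymptotic equicontinuity of $\mathbb{G}_n^*$ applies, and its complement, where the penalty $\lambda_n\hpsi_n(\htheta_n,\cdot)$ diverges to $-\infty$ via upper semicontinuity of $\gamma$ on $\mathcal{N}$ and the uniform convergence $\psi_n\to\gamma$.

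One step you gloss over is the final synthesis. Having the upper bound
\[
\P{\sup_{t\in\mathcal{T}}\bigl(\mathbb{G}_n^*(\htheta_n,t)+\lambda_n\hpsi_n(\htheta_n,t)\bigr) > \sup_{t\in K(\theta_0)}\mathbb{G}_n^*(\theta_0,t)+\ve}\cp 0
\]
together with the lower bound coming from \eqref{E:bs1} does not by itself yield $\overset{\mathrm{P}}{\rs}$ to the common limit: the two bounding sequences both converge weakly (in probability) to $\sup_{t\in K(\theta_0)}\mathbb{G}(\theta_0,t)$, but they are not shown to be $o_p(1)$-close conditionally, so a naive sandwich fails. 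The paper closes this by testing against $\ell\in\mathrm{BL}_1$ written as $\ell=\ell_0-\ell_1$ with $\ell_0,\ell_1$ bounded, nondecreasing, and $1$-Lipschitz; monotonicity lets the lower bound from \eqref{E:bs1} control $\mathrm{E}^*[\ell_0(\cdot)]$ from below and the upper bound control it from above (and vice versa for $\ell_1$), after passing to an almost-sure subsequence. This monotone decomposition is short but is the step that converts your two one-sided bounds into the bootstrap weak convergence statement.
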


It remains to show that one can construct a sequence of estimators $\hpsi_n$ of functions $\psi_n$ satisfying the second condition of Assumption \ref{A:five}. By local convexity of $\Theta \subset \mathfrak{B}$ and positive homogeneity of $D_{\theta; t}$ (Assumption \ref{A:two}), we have remarked that the maps $\psi_{\delta} :(\theta, t) \mapsto \frac{D_{\theta; t} v(\theta' - \theta)}{\delta}$ are pointwise decreasing in $\delta$ for $\delta$ small enough. Moreover, each map $\psi_\delta$ may be estimated with a sample analogue using the estimate provided in Assumption \ref{A:two}.1 with $v_n$ approximating $v$. Such a strategy provides a consistent sequence of estimators $\hpsi_n$ for the sequence $(\psi_n)$ defined in \eqref{E:psin} under the assumptions we have already introduced:
\begin{lemma}\label{L:psin}
Let $(\delta_n) \ra 0$ be a convergent sequence and $\psi_n$ be as in \eqref{E:psin}. Let $\hpsi_n(\theta, t) = \inf_{\substack{\theta' \in \Theta \\ \norm{\theta' - \theta}_{\mathfrak{B}} \le \delta_n}} \frac{v_n(\theta',t) - v_n(\theta, t)}{\delta_n}$. Make Assumptions \ref{A:one} and \ref{A:two}, and suppose that Assumption \ref{A:two}.1 holds for all $\theta \in \Theta$:
\begin{align*}
\sup_{\substack{\theta, \theta' \in \Theta \\ t \in \mathcal{T} \\ \norm{\theta' - \theta}_{\mathfrak{B}} < \delta}} | v(\theta' t) - v(\theta, t) - D_{\theta; t} v(\theta' - \theta)| = O(f_v(\delta)).
\end{align*}
Then,
\begin{align*}
\sup_{\substack{\theta \in \Theta \\t \in \mathcal{T}}} | \hpsi_n(\theta, t) - \psi_n(\theta, t)| = O_p( \frac{f_v(\delta_n) + r_n^{-1}}{\delta_n}).
\end{align*}
In particular, if $f_v(\delta) = o(\delta)$, then $\hpsi_n$ is consistent for $\psi_n$ whenever $r_n = o(\delta_n)$. 
\end{lemma}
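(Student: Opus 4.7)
The plan is a triangle-inequality argument that separates two sources of error in the approximation $v_n(\theta',t) - v_n(\theta,t) \approx D_{\theta; t}v(\theta' - \theta)$, followed by the elementary inequality $|\inf_a g(a) - \inf_a h(a)| \le \sup_a |g(a) - h(a)|$ to transfer the resulting uniform bound through the infimum operations defining $\hpsi_n$ and $\psi_n$.

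First I would extract from Assumption \ref{A:one} the uniform rate $\sup_{(\theta, t) \in \Theta \times \mathcal{T}} |v_n(\theta,t) - v(\theta,t)| = O_p(r_n^{-1})$. Because $\mathbb{G}_n = r_n(v_n - v)$ converges weakly in $L^\infty(\Theta \times \mathcal{T})$ to a tight Borel-measurable limit $\mathbb{G}$, it is asymptotically tight, so $\|\mathbb{G}_n\|_{L^\infty} = O_p(1)$. Next, the strengthened form of Assumption \ref{A:two}.1 assumed in the hypothesis gives, uniformly in $\theta, \theta' \in \Theta$ and $t \in \mathcal{T}$ with $\norm{\theta' - \theta}_{\mathfrak{B}} \le \delta_n$,
\begin{align*}
|v(\theta', t) - v(\theta, t) - D_{\theta; t}v(\theta' - \theta)| = O(f_v(\delta_n)).
\end{align*}
Combining these two displays via the triangle inequality
\begin{align*}
&|v_n(\theta', t) - v_n(\theta, t) - D_{\theta; t}v(\theta' - \theta)| \\
&\qquad \le |v_n(\theta', t) - v(\theta', t)| + |v_n(\theta, t) - v(\theta, t)| + |v(\theta', t) - v(\theta, t) - D_{\theta;t}v(\theta'-\theta)|
\end{align*}
yields that the left-hand side is $O_p(f_v(\delta_n) + r_n^{-1})$, uniformly over the specified set.

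The final step is to divide through by $\delta_n$ and apply the inf--sup inequality. For each fixed $(\theta, t) \in \Theta \times \mathcal{T}$, let $A_\theta = \{\theta' \in \Theta : \norm{\theta' - \theta}_{\mathfrak{B}} \le \delta_n\}$ and set $g(\theta') = \delta_n^{-1}(v_n(\theta', t) - v_n(\theta, t))$ and $h(\theta') = \delta_n^{-1} D_{\theta; t}v(\theta' - \theta)$, so that $\hpsi_n(\theta, t) = \inf_{A_\theta} g$ and $\psi_n(\theta, t) = \inf_{A_\theta} h$. Then $|\inf_{A_\theta} g - \inf_{A_\theta} h| \le \sup_{A_\theta} |g - h| = O_p\bigl((f_v(\delta_n) + r_n^{-1})/\delta_n\bigr)$ by the previous step, uniformly in $(\theta, t)$. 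Taking suprema over $\Theta \times \mathcal{T}$ delivers the stated rate. The consistency clause is immediate: when $f_v(\delta) = o(\delta)$ we have $f_v(\delta_n)/\delta_n \to 0$, and the assumed relation between $r_n$ and $\delta_n$ forces $r_n^{-1}/\delta_n \to 0$.

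The argument is essentially mechanical; the only subtlety worth flagging is that the uniform sup-norm bound on $v_n - v$ is obtained directly from the weak convergence and tightness in Assumption \ref{A:one}, without needing to invoke any additional empirical process machinery, and that the uniform form of Assumption \ref{A:two}.1 explicitly hypothesized in the lemma (in contrast to the version restricted to $\theta \in \Theta_0$) is what allows the linearization error bound to be combined at arbitrary $\theta \in \Theta$ rather than only near the identified set.
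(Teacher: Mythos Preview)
Your proof is correct and follows essentially the same route as the paper's: both bound $|\hpsi_n - \psi_n|$ via the elementary inequality $|\inf g - \inf h| \le \sup |g-h|$, then split the numerator $|v_n(\theta',t) - v_n(\theta,t) - D_{\theta;t}v(\theta'-\theta)|$ by the triangle inequality into the linearization error $O(f_v(\delta_n))$ from the strengthened Assumption \ref{A:two}.1 and twice the uniform estimation error $O_p(r_n^{-1})$ from Assumption \ref{A:one}. The paper just writes this chain of inequalities more compactly in a single display.
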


Lemma \ref{L:psin} suggests that a tractable way of estimating e.g.\ \eqref{E:bootstraptwo} is to solve the triple optimization problem: 
\begin{align*}
\inf_{\theta \in \Theta} (\mu_n \ell_n(\theta) + (\sup_{t \in \mathcal{T}} \mathbb{G}_n^*(\theta, t) + \lambda_n \inf_{\substack{\theta' \in \Theta \\ \norm{\theta - \theta'}_{\mathfrak{B}} \le \delta_n}} \frac{v_n(\theta',t) - v_n(\theta, t)}{\delta_n}) 
\end{align*}
where $\delta_n$ is such that $q_n \equiv \frac{f_v(\delta_n) + r_n^{-1}}{\delta_n} = o(1)$ and $\mu_n$ and $\lambda_n$ are as in Theorem \ref{T:two}. In a remark following the proof of Lemma \ref{L:psin}, we demonstrate that, under convexity of $\Theta$ and Lipschitz continuity of $v$, one can introduce a Lagrangian form of $\hpsi_n$ with negligible error: 
\[
\tilde{\psi}_n(\theta, t) = \inf_{\substack{\theta' \in \Theta  \\ \norm{\theta - \theta'}_{\mathfrak{B}}} } \frac{v_n(\theta', t) - v_n(\theta, t)}{\delta_n} + \nu_n \frac{(\norm{\theta' - \theta}_{\mathfrak{B}} - \delta_n)_+}{ \delta_n }.
\]
The imposition of lower hemicontinuity on $K$ and $\tilde{K}$ in Assumptions \ref{A:five} and \ref{A:five'} bears significance for our estimation results, inasmuch as it greatly simplifies the choice of tuning parameters for the calculation of critical values. Therefore, it is important to provide conditions under which lower hemicontinuity can be expected to hold. 

\subsection{Sufficient conditions for lower hemicontinuity} \label{S:hemi}

We turn again to the situation first described by Example \ref{Exm:one}.2, and also in Lemma \ref{C:two}, wherein $\ell = \gamma \circ m$ is the composition of a convex function $\gamma: \mathfrak{X} \ra \R$ with a moment function $m: \mathfrak{B} \ra \mathfrak{X}$ ($\mathfrak{X}$ a Banach space). By the Fenchel-Moreau theorem, the corresponding choice of $\mathcal{T}$ is a set of continuous affine maps $t : \mathfrak{X} \ra \R$ majorized by $\gamma$. One also has $D_{\theta; t} v (h)  = t(\nabla m(\theta)(h)) - t(0)$, and $t \in K(\theta)$ if and only if this quantity is nonnegative for all $h \in T_\theta \Theta$ (Remark \ref{R:ktheta}). In particular, $t \in K(\theta)$ if and only if $t(x) - t(0) \ge 0$ for all $x \in \nabla m(\theta)(T_\theta \Theta)$, where the latter is a convex cone in $\mathfrak{X}$. In Lemma \ref{L:duality} below, we characterize when exactly the set of linear functionals defined on $\mathfrak{X}$ satisfying this inequality can be expected to be lower hemicontinuous.

According to the discussion above, $-K(\theta) \subset \mathfrak{X}^*$ can be identified with the polar $(\nabla m(\theta)(T_\theta \Theta))^\circ$. Clearly, lower hemicontinuity of $-K(\theta)$ guarantees the lower hemicontinuity of $K(\theta)$. 

Conveniently, one can provide conditions under which the polar of a correspondence sending points $\theta$ to cones $C(\theta) \subset \mathfrak{X}$ is lower hemicontinuous. These conditions depend on the upper hemicontinuous of the correspondence $\theta \mapsto C(\theta)$. In fact, we can show that a weaker version of upper hemicontinuity of this correspondence, which maps a topological space into a Banach space, is both necessary and sufficient for lower hemicontinuity of the polar correspondence in the strong topology on $\mathfrak{X}^*$. The weak version of hemicontinuity is relative to the weak topology on $\mathfrak{X}$. For the purposes of this paper, we will say that a set $A$ of a topological vector space $\mathfrak{X}$ is \textit{strongly contained} in another set $V$ if there exists some open neighborhood $D$ of $0$ such that $A + D \subset V$. For instance, if $A$ is compact, strong containment of $A$ in $V$ is equivalent to containment if $V$ is open (\cite{C1994}, Lemma IV.3.8).

If $\mathfrak{X}$ is a Banach space, we will refer to a correspondence $H$ mapping a topological space $\mathcal{A}$ to $\mathfrak{X}$ as being \textit{weakly upper-hemicontinuous} at point $a$ if, whenever $H(a)$ is strongly contained in a \textit{weakly} open $V$, there exists a neighborhood $U$ of $a$ such that $a' \in U$ implies that $C(a')$ is a subset of $V$. Such a weak notion of upper hemicontinuity in the correspondence $\theta \mapsto \nabla m(\theta)(T_\theta \Theta)$ is precisely what is needed to ensure (norm-topology) lower hemicontinuity of its polar. If $H$ maps $\mathcal{A}$ to a dual space $\mathfrak{X}^*$, say that it is weak-* upper hemicontinuous if it conforms to the usual definition of upper hemicontinuity with respect to the weak-* topology on $\mathfrak{X}^*$. 

\begin{lemma} \label{L:duality}
Let $\mathcal{A}$ be a topological space and $\mathfrak{X}$ a Banach space with closed unit ball $B_1$, whose dual $\mathfrak{X}^*$ has closed unit ball $B_1^*$. Let $C: \mathcal{A} \ra \mathfrak{X}$ be a correspondence sending points $a \in \mathcal{A}$ to convex cones $C(a) \subset \mathfrak{X}$. Let $C^\circ(a): \mathcal{A} \ra \mathfrak{X}^*$ be the correspondence mapping $a \mapsto \{y \in \mathfrak{X}^*: y(x) \le 0, \text{ for all } x \in C(a)\}$. 
Then, the following are equivalent:
\begin{enumerate}
\item The correspondence $a \mapsto C(a) \cap B_1$ is weakly upper-hemicontinuous at $a$
\item $a \mapsto C^\circ(a)$ is lower hemicontinuous with respect to the strong (norm) topology on $\mathfrak{X}^*$ at $a$
\item $a \mapsto C^\circ(a) \cap B_1^*$ is lower hemicontinuous with respect to the strong topology at $a$
\end{enumerate}
as are 
\begin{enumerate}
\item[(1')] $a \mapsto C(a) \cap B_1$ is lower hemicontinuous with respect to the norm topology on $\mathfrak{X}$ at $a$
\item[(2')] $a \mapsto C^\circ(a) \cap B_1^*$ is weak-* upper hemicontinuous at $a$
\end{enumerate}
\end{lemma}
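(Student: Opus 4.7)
The plan is to exploit polar duality between $\mathfrak{X}$ and $\mathfrak{X}^{*}$, which reverses inclusions and exchanges weak-topology containment in $\mathfrak{X}$ with strong-topology distance in $\mathfrak{X}^{*}$ (and symmetrically for the strong/weak-$*$ pair). The central analytic tool is a pair of polar identities for a closed convex cone $K \subset \mathfrak{X}$: by Minkowski/gauge duality,
\[
\sup_{y \in K^{\circ} \cap B_1^{*}} y(x) = d(x, K) \quad \text{for all } x \in \mathfrak{X},
\]
and, dually,
\[
\sup_{x \in K \cap B_1} y(x) = d(y, K^{\circ}) \quad \text{for all } y \in \mathfrak{X}^{*},
\]
the second being obtained from Fenchel duality combined with the bipolar theorem in the $(\mathfrak{X}^{*}, \mathfrak{X}^{**})$ pairing and Goldstine's theorem. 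The equivalence (2)$\Leftrightarrow$(3) is then immediate from cone scaling: every $y \in C^{\circ}(a) \setminus B_1^{*}$ rescales to $y/\|y\| \in C^{\circ}(a) \cap B_1^{*}$, and strong-open neighborhoods transport under nonzero scaling.

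For (1)$\Rightarrow$(3), I would fix $y_0 \in C^{\circ}(a) \cap B_1^{*}$ and $\epsilon > 0$ and form the weakly open halfspace $V = \{x \in \mathfrak{X} : y_0(x) < \epsilon/2\}$. Since $y_0 \le 0$ on $C(a)$ and $\|y_0\| \le 1$, the set $C(a) \cap B_1$ is strongly contained in $V$ with norm-buffer $(\epsilon/4)B_1$. Assumption (1) then provides a neighborhood $U$ of $a$ on which $C(a') \cap B_1 \subset V$ for $a' \in U$, so that $\sup_{x \in C(a') \cap B_1} y_0(x) \le \epsilon/2$. The second polar identity yields $d(y_0, C^{\circ}(a')) \le \epsilon/2$, and an approximant $y' \in C^{\circ}(a')$ within $\epsilon/2$ of $y_0$ can be rescaled to $y'/\max(1, \|y'\|) \in C^{\circ}(a') \cap B_1^{*}$ within $\epsilon$ of $y_0$ (using that the cone is closed under nonnegative scaling and an elementary triangle-inequality bound on the normalization error).

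For (3)$\Rightarrow$(1), I argue by contraposition. Failure of (1) supplies a weakly open $V$ with $(C(a) \cap B_1) + \delta B_1 \subset V$ and a net $(a_U, x_U)$ indexed by the neighborhood filter of $a$ with $a_U \in U$, $x_U \in C(a_U) \cap B_1 \setminus V$, and $d(x_U, C(a) \cap B_1) \ge \delta$. A ray-projection argument---replacing an optimal $c^{*} \in C(a)$ by $c^{*}/\max(1, \|c^{*}\|) \in C(a) \cap B_1$---shows $d(x_U, C(a)) \ge \delta/2$. The first polar identity produces $\tilde y_U \in C^{\circ}(a) \cap B_1^{*}$ with $\tilde y_U(x_U) \ge \delta/3$. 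Extracting a weak-$*$ subnet $\tilde y_U \to y_{\star} \in C^{\circ}(a) \cap B_1^{*}$ and applying (3) at $y_{\star}$ produces, for $a_U$ sufficiently close to $a$, elements $y''_U \in C^{\circ}(a_U) \cap B_1^{*}$ with $\|y''_U - y_{\star}\|$ arbitrarily small; since $y''_U(x_U) \le 0$ (because $x_U \in C(a_U)$), this forces $y_{\star}(x_U)$ to be small, while $\tilde y_U(x_U) \ge \delta/3$ gives a fixed positive separation $(\tilde y_U - y_{\star})(x_U) \ge \delta/4$. A diagonal refinement of the net, using weak-$*$ compactness of $C^{\circ}(a) \cap B_1^{*}$ and a refinement of the indexing to synchronize the choice of $\tilde y_U$ with the neighborhood produced by (3) at the cluster point, then yields the desired contradiction.

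The equivalence (1')$\Leftrightarrow$(2') proceeds by a parallel argument with the strong topology on $\mathfrak{X}$ and the weak-$*$ topology on $\mathfrak{X}^{*}$ taking the roles of the weak and strong topologies above; Banach--Alaoglu weak-$*$ compactness of $B_1^{*}$ plays the organizational role that Goldstine played in the first equivalence, and the Hahn--Banach separation step is carried out inside $\mathfrak{X}$ rather than in $\mathfrak{X}^{**}$. The main obstacles I expect are (i) establishing the second polar identity carefully in non-reflexive $\mathfrak{X}$, requiring an interplay of bipolar, Banach--Steinhaus (every weak-$*$ convergent net in $\mathfrak{X}^{**}$ is norm-bounded), and Goldstine density to move the $\mathfrak{X}^{**}$-valued supremum back to one over $K \cap B_1$; and (ii) the diagonal/refinement step in (3)$\Rightarrow$(1), which must reconcile weak-$*$ convergence of the separating functionals $\tilde y_U$ with the evaluations $\tilde y_U(x_U)$ on test vectors that themselves vary along the net, and is the principal technical difficulty of the lemma.
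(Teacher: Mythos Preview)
Your framework matches the paper's: both arguments rest on the polar identities $d(x,K)=\sup_{y\in K^\circ\cap B_1^*}y(x)$ and $d(y,K^\circ)=\sup_{x\in K\cap B_1}y(x)$ (the paper proves these as a separate Lemma via Sion's minimax theorem and a limiting argument, rather than via Goldstine, but either route works). Your treatment of $(1)\Rightarrow(3)$ and $(2)\Leftrightarrow(3)$ is essentially identical to the paper's.

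The gap is in $(3)\Rightarrow(1)$. You correctly isolate the difficulty---weak-$*$ convergence $\tilde y_U\to y_\star$ says nothing about $(\tilde y_U-y_\star)(x_U)$ when the test vectors $x_U$ vary along the net---but the ``diagonal refinement'' you sketch does not close it. No refinement of the indexing can force $(\tilde y_U-y_\star)(x_U)\to 0$, because the $x_U$ need not have any cluster point in $\mathfrak X$ when $\mathfrak X$ is non-reflexive. The paper resolves this by passing the \emph{vectors} to the bidual: embed $x_U$ via $\iota:\mathfrak X\hookrightarrow\mathfrak X^{**}$, use Banach--Alaoglu in $\mathfrak X^{**}$ to extract a weak-$*$ cluster point $x$, and then apply Hahn--Banach separation in the dual pair $(\mathfrak X^*,\mathfrak X^{**})$ to separate $x$ from $\iota(C(a)\cap B_1)$. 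This produces a \emph{fixed} functional $y'\in\mathfrak X^*$, which one then approximates in norm by $y\in C^\circ(a)$ via the second polar identity. Because $y$ is fixed, $y(x_{i'})=\iota(x_{i'})(y)\to x(y)>0$ along the subnet, and the polar identity immediately gives $d(y,C^\circ(a_{i'}))$ bounded away from zero, contradicting $(3)$. The key idea you are missing is to take the limit on the $\mathfrak X$ side (in $\mathfrak X^{**}$) rather than on the $\mathfrak X^*$ side.

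A smaller point: for $(1')\Leftrightarrow(2')$ the paper's argument is not ``parallel'' but strictly simpler. Since $B_1^*$ is already weak-$*$ compact by Banach--Alaoglu, the contrapositive of $(1')\Rightarrow(2')$ extracts a weak-$*$ limit $y\notin C^\circ(a)$ directly, picks $x\in C(a)\cap B_1$ with $y(x)>0$, and shows the norm ball $B(x,\varepsilon)$ eventually misses $C(a_{i'})\cap B_1$; no separation step or bidual embedding is needed. The asymmetry with the first block of equivalences is precisely the absence of weak compactness of $B_1\subset\mathfrak X$ in the non-reflexive case.
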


Lemma \ref{L:duality} allows us to show that many choices of convex function $\gamma: \mathfrak{X} \ra \R$ actually can be written in a way which makes the correspondence $\theta \mapsto K(\theta)$ lower hemicontinuous. For instance, in the following lemma, all that is required of $\gamma$ is that it is Lipschitz continuous in a neighborhood of $m(\Theta)$. For $\mathfrak{X}$ a Banach space and $\mathcal{A}$ the vector space of continuous and affine maps $t: \mathfrak{X} \ra \R$, we define the weak topology on $\mathcal{A}$ to be the topology of pointwise convergence, and the norm topology to be the one generated by $\norm{t}_{\mathcal{A}} = \sup_{\norm{x}_{\mathfrak{X}} \le 1} |t(x)|$. For a convex function $\gamma$ defined on $\mathfrak{X}$, we also define the \textit{subgradient} $\partial \gamma(x)$ of $\gamma$ at $x$ to be the set of $y \in \mathfrak{X}^*$ satisfying that $\gamma(x') \ge \gamma(x) + \inner{x' - x, y}$ for all $x' \in \mathfrak{X}$. 

\begin{lemma}\label{C:duality}
Suppose that $\ell(\theta) = \gamma(m(\theta))$, where $m$ is a Fr\'{e}chet differentiable mapping from $\Theta$ to a Banach space $\mathfrak{X}$, and $\gamma: \mathfrak{X} \ra \R$ is Lipschitz continuous on some closed, convex, and bounded domain $D$ containing $m(\Theta)$.
Then, there is a convex and weakly compact set of affine functions $\mathcal{T} \subset \mathcal{A}$ such that $\gamma(x)  = \sup_{t \in \mathcal{T}} t(x) = \max_{t \in \mathcal{T}} t(x)$ for all $x \in D$. Under this choice of $\mathcal{T}$, if $m(\Theta_0)$ is contained in the interior of $D$ (e.g.\ if $\gamma$ is Lipschitz continuous) and $v(\theta, t) = t(m(\theta))$, then for $\theta \in \Theta_0$ one has 
\begin{align}\label{E:subgradient}
\tilde{K}(\theta) = \{y - y(m(\theta)): y \in -(\nabla m(\theta)(T_\theta(\Theta))^\circ \cap \partial \gamma (m(\theta))\}. 
\end{align}

If the correspondence $\theta \mapsto \nabla m(\theta)(T_\theta \Theta) \cap  B_1$ is also weakly upper-hemicontinuous at every point in $\Theta_0$, then the correspondence $\theta \mapsto K(\theta)$
is lower hemicontinuous with respect to the norm topology on $\mathcal{A}$ at every point in $\Theta_0$. If $\theta \mapsto \partial \gamma (m(\theta))$ is also lower hemicontinuous at every point in $\Theta_0$, then $\theta \mapsto \tilde{K}(\theta)$ is likewise hemicontinuous. 
\end{lemma}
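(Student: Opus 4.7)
\emph{Construction of $\mathcal{T}$.} By $L$-Lipschitz continuity of $\gamma$ on $D$, every subgradient $y \in \partial\gamma(x)$ at $x \in D$ satisfies $\|y\|_{\mathfrak{X}^*} \le L$, and the supporting affine minorants $z \mapsto \gamma(x) + y(z-x)$ of $\gamma|_D$ have slopes and intercepts controlled by $L$ and $\mathrm{diam}(D)$. Take $\mathcal{T}$ to be the set of continuous affine $t:\mathfrak{X} \to \R$ with $\|y_t\|_{\mathfrak{X}^*} \le L$, $t \le \gamma$ on $D$, and $t$ bounded below at a fixed reference point of $D$ (the threshold chosen large enough that $\mathcal{T}$ contains every supporting affine minorant of $\gamma|_D$). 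This set is convex, closed, and bounded in $\mathcal{A} \cong \mathfrak{X}^* \oplus \R$; Banach-Alaoglu yields weak-$*$ compactness, and the Fenchel-Moreau theorem applied to a proper convex lsc extension of $\gamma|_D$ gives $\gamma(z) = \max_{t \in \mathcal{T}} t(z)$ for every $z \in D$.

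\emph{Formula for $\tilde{K}(\theta)$.} Write a generic $t \in \mathcal{T}$ as $t(z) = y(z) + t_0$. For $\theta \in \Theta_0$ one has $\gamma(m(\theta)) = \ell(\theta) = 0$. Since $t \le \gamma$ on $D$ and $m(\theta) \in \mathrm{int}(D)$, the equation $v(\theta,t) = t(m(\theta)) = 0$ forces $t$ to meet $\gamma$ at $m(\theta)$; hence $y \in \partial\gamma(m(\theta))$ and $t_0 = -y(m(\theta))$. By Remark \ref{R:ktheta}, the condition $\gamma(\theta,t) = 0$ is equivalent to $y(\nabla m(\theta) h) \ge 0$ for every $h \in T_\theta \Theta$, i.e.\ $y \in -(\nabla m(\theta)(T_\theta \Theta))^\circ$. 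Intersecting the two conditions yields \eqref{E:subgradient}.

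\emph{Lower hemicontinuity of $K$.} Apply Lemma \ref{L:duality} with $C(\theta) := \nabla m(\theta)(T_\theta \Theta)$: the hypothesized weak upper hemicontinuity of $\theta \mapsto C(\theta) \cap B_1$ is equivalent to norm lower hemicontinuity of $\theta \mapsto C(\theta)^\circ$, and therefore of $-C(\theta)^\circ$. Given $t_\star = y_0 + c_0 \in K(\theta_0)$ and $\theta_n \to \theta_0$, select $y_n \in -C(\theta_n)^\circ$ with $\|y_n - y_0\|_{\mathfrak{X}^*} \to 0$. Because $D$ is bounded, $y_n \to y_0$ uniformly on $D$, so $\varepsilon_n := \sup_{z \in D}(y_n(z) - y_0(z))^+ \to 0$, and the affine function $t_n(z) := y_n(z) + c_0 - \varepsilon_n$ satisfies $t_n \le \gamma$ on $D$ (hence $t_n \in \mathcal{T}$), lies in $K(\theta_n)$, and converges to $t_\star$ in the $\mathcal{A}$-norm.

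\emph{Lower hemicontinuity of $\tilde{K}$ (the main obstacle).} For $\tilde{K}$, step 2 forces the intercept to equal $-y(m(\theta))$ and the slope to lie in $\partial\gamma(m(\theta))$ in addition to $-C(\theta)^\circ$. Thus, given $y_0 \in -C(\theta_0)^\circ \cap \partial\gamma(m(\theta_0))$ and $\theta_n \to \theta_0$, one must produce $y_n$ in the analogous intersection at $\theta_n$ with $y_n \to y_0$ in norm. The principal obstacle is that lower hemicontinuity is not preserved under intersection of correspondences; my plan is to invoke a Robinson-Ursescu or open-mapping style constraint-qualification argument exploiting (i) convex-valuedness of both set-valued maps, (ii) norm-LHC of each (step 3 supplies the first, the hypothesis the second), and (iii) the interior condition $m(\Theta_0) \subset \mathrm{int}(D)$ together with the weak-$*$-compactness of $\partial\gamma(m(\theta))$, which together furnish the needed regularity for a simultaneous selection. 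Once such $y_n$ is obtained, the affine function $t_n(z) := y_n(z) - y_n(m(\theta_n))$ is (up to the shift by $-\gamma(m(\theta_n)) \le 0$) a supporting affine of $\gamma$ at $m(\theta_n)$, hence an element of $\mathcal{T}$; it lies in $\tilde{K}(\theta_n)$, and converges to $t_\star$ in $\mathcal{A}$ using Fr\'echet (and hence norm) continuity of $m$.
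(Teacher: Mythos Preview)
Your first three steps are essentially correct and track the paper. The paper makes the construction of $\mathcal{T}$ explicit via the infimal convolution $\tilde\gamma=(\gamma+\delta(\cdot\mid D))\,\square\,L\norm{\cdot}_{\mathfrak{X}}$, identifying $\mathcal{T}$ with the image of the weak-$*$ compact set $\{(y,c):y\in B_L^*,\ c\in[(\gamma+\delta(\cdot\mid D))^*(y),C]\}$ under $(y,c)\mapsto(x\mapsto\inner{y,x}-c)$; your description is the same object. Your derivation of \eqref{E:subgradient} is the paper's argument. For lower hemicontinuity of $K$ the paper takes a slightly different route: instead of building $t_n$ by hand, it writes $K(\theta)$ as $\bigcup_{\lambda\in[0,1]}\{x\mapsto\inner{y,x}-(\lambda(\gamma+\delta(\cdot\mid D))^*(y)+(1-\lambda)C):y\in -(\nabla m(\theta)(T_\theta\Theta))^\circ\cap B_L^*\}$ and notes that each piece is a norm-continuous image (using norm continuity of $(\gamma+\delta(\cdot\mid D))^*$) of the lower hemicontinuous correspondence supplied by Lemma~\ref{L:duality}. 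Your direct construction also works, but you should say why $\norm{y_n}_{\mathfrak{X}^*}\le L$ (rescale within the cone when $\norm{y_0}=L$).

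The genuine gap is your step~4. You correctly flag that lower hemicontinuity is not stable under intersection and gesture at a Robinson--Ursescu argument, but you neither name the constraint qualification you intend to use nor verify it. The standard hypotheses (one factor has open graph, or an interior-point condition on the intersection) are problematic here: $\partial\gamma(m(\theta))$ is weak-$*$ compact and has empty norm interior whenever $\dim\mathfrak{X}=\infty$, and the cone $-(\nabla m(\theta)(T_\theta\Theta))^\circ$ may meet it only along a face, so it is not clear what regularity would drive the selection. For what it is worth, the paper's own proof does not close this gap either: it simply asserts that $\theta\mapsto -(\nabla m(\theta)(T_\theta\Theta))^\circ\cap\partial\gamma(m(\theta))$ is lower hemicontinuous once both factors are, and then passes to $\tilde K(\theta)$ via \eqref{E:subgradient} and norm continuity of $(\gamma+\delta(\cdot\mid D))^*$. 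So the obstacle you identify is real, and neither your sketch nor the paper supplies the missing argument.
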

Lemma \ref{C:duality} furnishes a host of examples of $\mathcal{T}$ corresponding to arbitrary convex functions defined on Banach spaces (Example \ref{Exm:one} presents several convex functions that are of particular interest). These $\mathcal{T}$ are at least compact in the topology of pointwise convergence, which the discussion following Assumption \ref{A:three} shows is generally sufficient to apply Theorem \ref{T:one}. The lemma also characterizes the correspondence $\theta \mapsto \tilde{K}(\theta) \subset \mathcal{T}$ in terms of the subgradient of $\gamma$. The topology of weak convergence automatically makes $t \mapsto v(\theta, t) = t(m(\theta))$ a continuous map satisfying Assumption \ref{A:four}.1, so that $\tilde{K}(\theta)$ is the correspondence most relevant to Theorem \ref{T:one}.

\appendix

\section{Uniform results} \label{S:uniform}

In this section, we briefly provide assumptions under which the bound of Theorem \ref{T:one} holds for a uniform class $\mathcal{P}$ of underlying probability distributions $P$ and parameter spaces $R \subset \Theta$ indexed by $R \in \mathcal{R}$. The basic definitions of the paper are slightly amended to allow for the functions $v$ to vary with respect to $P$ by writing 
\begin{align*}
\ell_P(\theta) = \sup_{t \in \mathcal{T}} v_P(\theta, t)\text{ and } \ell_{n,P}(\theta) = \sup_{t \in \mathcal{T}} v_{n,P}(\theta, t)
\end{align*}
for $P \in \mathcal{P}$ and $\theta \in R$, $R$ some element of $\mathcal{R}$ indexing subsets of $\Theta$. The identified set is then analogously modified to reflect its dependence on $R$ and $P$ by writing $\Theta_0(R,P) = \{\theta \in R : \ell_P(\theta) = 0\}$. Let $\mathcal{S} \subset \mathcal{R} \times \mathcal{P}$ denote the subset of $(P,Q)$ for which $\Theta_0(R,P)$ is nonempty. We let $\EE{P}{\cdot}$ and $\PP{P}{\cdot}$ denote expectations and probabilities taken with respect to $P \in \mathcal{P}$. A collection of random variables $\{Z_n(R,P): R \in \mathcal{R}, P \in \mathcal{P}\}$ is written to be $O_p(c_n)$ (respectively, $o_p(c_n)$) uniformly in $R$ and $P$ for a sequence $(c_n)$ if $\lim_{z \ra \infty} \limsup_{n \ra \infty} \sup_{\substack{R \in \mathcal{R} \\ P \in \mathcal{P}}} \PP{P}{| Z_n(R,P) / c_n| > z} = 0$ (respectively, $\limsup_{n \ra \infty} \sup_{\substack{R \in \mathcal{R} \\ P \in \mathcal{P}}} \PP{P}{| Z_n(R,P) / c_n| > z} = 0$ for all $z > 0$). 

The next few assumptions augment Assumptions \ref{A:zero} through \ref{A:three} in the uniform setting. 

\begin{asm}[Local identification]
\label{AU:1} 
$\Theta$ is a subset of a Banach space $\mathfrak{B}$ satisfying: 
\begin{enumerate}
\item There is some $\delta > 0$ such that, for all $(R,P) \in \mathcal{S}$, $B(\theta, \delta) \cap R$ is convex and compact whenever $\theta \in \Theta_0(R,P)$. 

\item For all $\delta > 0$, 
\begin{align*}
\inf_{(R,P) \in \mathcal{S}} \inf_{\substack{\theta \in R \\ d_{\mathfrak{B}}(\theta, \Theta_0(R,P)) > \delta}} \ell_P(\theta) > 0.
\end{align*} 
\end{enumerate}
\end{asm}
\noindent Assumption \ref{AU:1} governs the uniform local identification of $\Theta_0(R,P)$, and resembles e.g.\ Assumptions B.1-B.4 of \cite{Zhu2020} (see also the references therein), which are usually invoked for the uniformly consistent estimation of the identified set.

Next, we must use a notion of weak convergence that is modified to occur uniformly for $P \in \mathcal{P}$. \S2.8 of \cite{VW1996} provides necessary and sufficient conditions for the following uniform Donsker property to hold over $P$. 

\begin{asm}[Uniform Donskerity] \label{AU:2}
For a sequence $r_n \ra \infty$, the empirical processes $\mathbb{G}_{n,P}(\theta, t) = r_n (v_{n,P}(\theta, t) - v_P(\theta, t))$ converge (in the bounded Lipschitz metric) uniformly in $P \in \mathcal{P}$ to tight Borel measurable elements $\mathbb{G}_P$ in $L^\infty(\Theta, \mathcal{T})$ for which $\norm{\mathbb{G}_P}_{L^\infty}$ is $O_p(1)$ uniformly in $P$. The following asymptotic equicontinuity statement holds uniformly in $P \in \mathcal{P}$ with respect to a set of seminorms $\{\rho_P: P \in \mathcal{P}\}$: 
\begin{align*}
\lim_{\delta \downarrow 0} \limsup_{n \ra \infty} \sup_{P \in \mathcal{P}} \PP{P}{\sup_{\rho_P((\theta, t), (\theta', t') ) < \delta} | \mathbb{G}_{n,P}(\theta, t) - \mathbb{G}_{n,P}(\theta', t')| > \ve} = 0 
\end{align*}
for all $\ve > 0$. 
\end{asm}

\begin{asm}[Uniform differentiability] \label{AU:3} 
For all $P \in \mathcal{P}$, there is a lower semicontinuous, convex, and positive homogenous functional $D_{\theta;t}v_P: \mathfrak{B} \ra \R$ and function $f_v(\delta) = o(\delta)$ satisfying:
\begin{enumerate} 
\item 
\begin{align*}
\sup_{(R,P) \in \mathcal{S}} \sup_{\substack{\theta \in \Theta_0(R,P), \theta' \in R  \\ t \in \mathcal{T} \\ \norm{\theta' - \theta}_{\mathfrak{B}} \le \delta }} | v_P(\theta', t) - v_P(\theta, t) - D_{\theta; t} v_P(\theta' - \theta)| = O(f_v(\delta))
\end{align*}
\item For all $n$, $(R,P) \in \mathcal{S}$, $\theta \in \Theta_0(R,P)$, and $\theta' \in R$, $t \mapsto v_{n,P}(\theta, t) + D_{\theta; t} v_P(\theta')$ is almost surely quasi-concave.
\end{enumerate}
\end{asm}

The next assumption we make governs the behavior of a relativized version of $\gamma$. For each $\delta > 0$, we may define the map $\psi_{\delta; R,P}(\theta, t) = \inf_{\substack{\theta' \in R \\ \norm{\theta' - \theta}_{\mathfrak{B}} \le \delta }} \frac{D_{\theta, t} v_P(\theta' - \theta)}{\delta}$, and let
\begin{align*}
\gamma_{R,P}(\theta, t) = \lim_{\delta \downarrow 0} \psi_{\delta; R,P}(\theta, t) = \lim_{\delta \downarrow 0 } \inf_{\substack{\theta' \in R \\ \norm{\theta' - \theta}_{\mathfrak{B}} \le \delta }} \frac{D_{\theta, t} v_P(\theta' - \theta)}{\delta} 
\end{align*}
Make the corresponding definition $K_{R,P}(\theta) = \{t \in \mathcal{T} : \gamma_{R,P}(\theta,t) = 0\}$, and $\tilde{K}_{R,P}(\theta) = \{t \in \mathcal{T} : \gamma_{R,P}(\theta,t) = v_P(\theta, t) = 0\}$.

\begin{asm} \label{AU:4}
$\mathcal{T}$ is a compact and convex subset of a (Hausdorff) locally convex space with its relativized topology $\mathcal{U}$, and 
\begin{enumerate} 
\item For all $(R,P) \in \mathcal{S}$, the maps $t \mapsto D_{\theta, t} v_P(\theta' - \theta)$ and $t \mapsto v_n(\theta, t) + D_{\theta; t} v(\theta' - \theta)$ are $\mathcal{U}$-upper semicontinuous for all $\theta \in \Theta_0(R,P)$, $\theta' \in R$
\item For any $U \in \mathcal{U}$ containing $0$, one has 
\begin{align*}
\sup_{(R,P) \in \mathcal{S}} \sup_{\substack{\theta \in \Theta_0(R,P)\\ t \not\in K_{R,P}(\theta) + U}} \gamma_{R,P}(\theta, t) < 0
\end{align*}
\item For all $\ve > 0$, there exists some $\delta > 0$ and $U \in \mathcal{U}$ containing $0$ such that the pseudometrics $\{\rho_P: P \in \mathcal{P}\}$ satisfy the uniform equicontinuity condition:
\begin{align*}
\sup_{(R,P) \in \mathcal{S}} \sup_{\substack{d_{\mathfrak{B}}(\theta, \theta') < \delta \\ t - t' \in U}} \rho_P((\theta, t), (\theta', t')) < \ve.
\end{align*} 
\item The functions $\psi_{\delta; R,P}$ converge uniformly to $\gamma_{R,P}$: 
\[
\lim_{\delta \downarrow 0} \sup_{(R,P) \in \mathcal{S} } \sup_{\substack{\theta \in \Theta_0(R,P) \\ t \in \mathcal{T}}} |\psi_{\delta; R, P}(\theta, t) - \gamma_{R,P}(\theta, t)| = 0.
\]
\end{enumerate}
\end{asm}

A sufficient condition for Assumption \ref{AU:4}.2 involves lower hemicontinuity of the correspondence $(R,P, \theta) \mapsto K_{R,P}(\theta)$ (see \S\ref{S:hemi}) and lower semicontinuity of $\gamma_{R,P}(\theta, t)$, as the following lemma shows. 
\begin{lemma} \label{L:sufficientlhc} Suppose that $\mathcal{T}$ is a compact subset of a locally convex space under topology $\mathcal{U}$. 
Impose that there exists some topology $\mathcal{W}$ on $\mathcal{R} \times \mathcal{P} \times \Theta$ such that $\mathcal{R} \times\mathcal{P} \times \Theta$ is $\mathcal{W}$-compact, $(R,P,\theta) \mapsto K_{R,P}(\theta)$ is $\mathcal{W}$-lower hemicontinuous, and $(R,P,\theta,t) \mapsto \gamma_{R,P}(\theta, t)$ is $\mathcal{W} \times \mathcal{U}$-upper semicontinuous. Then, Assumption \ref{AU:4}.2 holds. 
\end{lemma}
\begin{proof}
Suppose that $\mathcal{W}$ exists and that Assumption \ref{AU:4}.2 is not true, so that 
\begin{align}
\sup_{(R,P) \in \mathcal{R} \times\mathcal{P}} \sup_{\substack{\theta \in \Theta_0(R,P)\\ t \not\in K_{R,P}(\theta) + U}} \gamma_{R,P}(\theta, t) = 0 \label{E:cross1}
\end{align}
for some $U \in \mathcal{U}$ containing $0$. By Lemma IV.1.15 of \cite{C1994}, one may assume that $U$ is convex, so that $U/2 + U/2 \subset U$. By \eqref{E:cross1}, there is a sequence $(R_i,P_i,\theta_i,t_i)$ such that $\gamma_{R_i, P_i}(\theta_i, t_i) \ra 0$ and $t_i \not\in K_{R,P}(\theta_i) + U$. By compactness, this sequence admits a subnet $(R_a,P_a,\theta_a, t_a)_{a \in A}$ (for some directed set $A$) converging to some $(R,P,\theta, t)$. By upper semicontiuity, $\gamma_{R,P}(\theta, t) = 0$, so that $t \in K_{R,P}(\theta)$. By lower hemicontinuity, $K_{R_a,P_a}(\theta_a)$ eventually intersects $t - U/2$, so that $t \in K_{R_a,P_a}(\theta_a) + U/2$. By convergence, eventually $t_a \in t + U/2$. Therefore, eventually $t_a \in K_{R_a, P_a}(\theta_a) + U$, a contradiction. 
\end{proof}
\noindent Part (3) of Assumption \ref{AU:4} may be obtained by making uniform versions of the arguments following Assumption \ref{A:three}. Part (4) of the assumption concerns the regularity of $D_{\theta; t}$ along directions defined by the tangent cone $T_\theta R$ of $R$ at $\theta$, assuring us that the norm of $D_{\theta;t}$ on this cone can be uniformly approximated by taking local approximations of $T_\theta R$. 

An auxiliary condition parallel to Assumption \ref{A:four} may also be imposed to make our upper bounds sharp.
\begin{asm} \label{AU:5} 
For $(R,P) \in \mathcal{S}$ and $\theta \in \Theta_0(R,P)$. Then,
\begin{enumerate}
\item For all $(R,P) \in \mathcal{S}$, $\theta \in \Theta_0(R,P)$, the map $t \mapsto v_P(\theta, \cdot)$ is $\mathcal{U}$-upper semicontinuous. For any $U \in \mathcal{U}$ containing $0$, one has  
\begin{align*}
\sup_{(R,P) \in \mathcal{S} }\sup_{\substack{\theta \in \Theta_0(R,P) \\ t \not\in \tilde{K}_{R,P}(\theta) + U}} v_P(\theta, t) + \gamma_{R,P}(\theta, t) < 0. 
\end{align*}
\item There is a sequence $b_n = o(1)$ for which $f_v(b_n) = o(r_n^{-1})$ and, for all $\ve > 0$, 
\begin{align*}
\lim_{n \ra \infty} \sup_{(R,P) \in \mathcal{S}} \P{r_n \inf_{d_{\mathfrak{B}}(\theta, \Theta_0(R,P)) \le b_n}  \sup_{t \in \mathcal{T}} v_{n,P} (\theta, t) \le  r_n \inf_{\theta \in R} \sup_{t \in \mathcal{T}} v_{n,P}(\theta, t)  + \ve} = 1
\end{align*} 
\textbf{or} there is some open $V \subset \mathfrak{B}$ containing $0$ such that, for all $(R,P) \in \mathcal{S}$, $v_P(\cdot, t)$ is convex in the neighborhood $(\theta + V) \times \tilde{K}_{R,P}(\theta)$ for all $\theta \in \Theta_0(R,P)$.  
\end{enumerate}
\end{asm}

\begin{theorem} \label{T:U}
Let Assumptions \ref{AU:1}, \ref{AU:2}, \ref{AU:3}, and \ref{AU:4} hold. Then, $K_{R,P}(\theta)$ is nonempty for all $\theta \in \Theta_0(R,P)$ and, uniformly in $(R,P) \in \mathcal{S}$, 
\begin{align}
r_n \inf_{\theta \in R} \sup_{t \in \mathcal{T}} v_{n,P}(\theta, t) &\le \inf_{\theta \in \Theta_0(R,P)} \sup_{t \in K_{R,P}(\theta)} \mathbb{G}_{n,P}(\theta, t) + o_p(1) \nonumber \\
& \rs \inf_{\theta \in \Theta_0(R,P)} \sup_{t \in K_{R,P}(\theta)} \mathbb{G}_P(\theta, t).\label{E:u4}
\end{align}
If Assumption \ref{AU:5}.1 also holds, then the preceding holds with $K_{R,P}(\theta)$ replaced by $\tilde{K}_{R,P}(\theta)$. If all of Assumption \ref{AU:5} also holds, the preceding is an equality with $K_{R,P}(\theta)$ replaced by $\tilde{K}_{R,P}(\theta)$. 
\end{theorem}

Under some modifications to our bootstrap assumptions, the asymptotic distribution implied by \eqref{E:u4} may be estimated in a uniformly consistent way by the bootstrap. This requires us to invoke a uniform notion of bootstrap convergence whose utility is discussed in \cite{LS2010}. Suppose that, for $P \in \mathcal{P}$, $Z_{n,P}^*$ is a bootstrap analogue for empirical process $Z_{n,P} \rs Z_{P}$. Then, we write that $Z_{n,P}^* \rs Z_P$ \textit{uniformly in $P \in \mathcal{P}$} if $\lim_{n \ra \infty} \sup_{P \in \mathcal{P}} \PP{P}{d_{\mathrm{BL}_1}(Z_{n,P}^*, Z_P) > \ve} = 0$ for all $\ve > 0$. Uniform convergence for processes indexed by $R \in \mathcal{R}, P \in \mathcal{P}$ is defined similarly. Lemma A.2 of \cite{LS2010} demonstrates that the conditions for uniform bootstrap convergence are essentially the same as those guaranteeing an empirical process is Donsker uniformly in $P \in \mathcal{P}$. 

The main assumption we impose for bootstrap consistency mirrors Assumption \ref{A:five}: 
\begin{asm}[Bootstrap Consistency I]
\label{AU:6} The following hold:
\begin{enumerate}
\item There exists an asymptotically measurable bootstrap analogue $\mathbb{G}_{n,P}^*$ of $\mathbb{G}_{n,P}$ satisfying $\mathbb{G}_{n,P}^*(\theta, t) \rs \mathbb{G}_P$ uniformly in $P \in \mathcal{P}$.

\item For all $(R,P) \in \mathcal{S}$, $|\gamma_{R,P}|$ is uniformly bounded over $\Theta \times \mathcal{T}$ by a constant $C_\gamma < \infty$. For all $\theta \in R$, there exist a sequence $(q_n) = o(1)$ and a sequence of estimators $(\hat{\psi}_{n,R,P}(\theta, \cdot))$ for $\gamma_{R,P}$ satisfying 
\begin{align*}
\sup_{\substack{\theta \in \Theta_0(R,P) \\ t \in \mathcal{T} }} |\hat{\psi}_{n,R,P}(\theta, t) - \gamma_{R,P}(\theta, t) | = O_p(q_n)
\end{align*}
uniformly in $(R,P) \in \mathcal{S}$. 

\item For all $U \in \mathcal{U}$ containing $0$, there is a $\delta > 0$ such that $K_{R,P}(\theta')$ intersects $t + U$ whenever $t \in K_{R,P}(\theta)$ and $d_{\mathfrak{B}}(\theta, \theta') < \delta$ \textbf {or} there is a known nondecreasing $f_\ell: \R_+ \ra \R_+$ satisfying $\sup_{\substack{(R,P) \in \mathcal{S} \\ \theta \in \R}} \frac{d_{\mathfrak{B}} (\theta, \Theta_0(R,P)) }{f_\ell(\ell_{R}(\theta))} < \infty$ and $\lim_{x \downarrow 0} f_\ell(x) = 0$, and there is some $\delta > 0$ such that $\gamma_{R,P}$ satisfies the Lipschitz condition 
\begin{align*}
\inf_{(R,P) \in \mathcal{S}} \inf_{\substack{d_{\mathfrak{B}}(\theta', \Theta_0(R,P)) < \delta \\ t \in K_{R,P}(\theta)}} \frac{\gamma_{R,P}(\theta', t)}{d_{\mathfrak{B}}(\theta, \theta') } > -\infty. 
\end{align*}
\end{enumerate}
\end{asm}
The first case of Assumption \ref{AU:6}.3 is a uniform version of the lower hemicontinuity condition of Assumption \ref{A:five}. As it treats the strong (norm) topology on the dual of a Banach space, Lemmas \ref{L:duality} and \ref{L:positive} may be employed to develop sufficient conditions for the former when $\mathcal{U}$ is held to be, say, the more pliable weak-* topology. Assumption \ref{AU:6} may optionally be augmented with a direct analogue of Assumption \ref{A:five'}:

\begin{asm} [Bootstrap Consistency II] \label{AU:6'}
For all $U \in \mathcal{U}$ containing $0$, there is a $\delta > 0$ such that $\tilde{K}_{R,P}(\theta')$ intersects $t + U$ whenever $t \in \tilde{K}_{R,P}(\theta)$ and $d_{\mathfrak{B}}(\theta, \theta') < \delta$ \textbf{or} there is some $\delta > 0$ such that $\gamma_{R,P}$ and $v_P$ satisfy the Lipschitz condition 
\begin{align*}
\inf_{(R,P) \in \mathcal{S}} \inf_{\substack{d_{\mathfrak{B}}(\theta', \Theta_0(R,P)) < \delta \\ t \in K_{R,P}(\theta)}} \frac{\gamma_{R,P}(\theta', t)}{d_{\mathfrak{B}}(\theta, \theta') }, \inf_{(R,P) \in \mathcal{S}} \inf_{\substack{d_{\mathfrak{B}}(\theta', \Theta_0(R,P)) < \delta \\ t \in K_{R,P}(\theta)}} \frac{v_P(\theta', t)}{d_{\mathfrak{B}}(\theta, \theta') }  > -\infty.
\end{align*}
\end{asm}

\begin{theorem}
\label{T:U2} 
Make Assumptions \ref{AU:1}, \ref{AU:2}, \ref{AU:3}, \ref{AU:4}, \ref{AU:5}.1, and \ref{AU:6}. Let $(\lambda_n), (\mu_n), (\tilde{\mu}_n)$ be as in Theorem \ref{T:two}. Then, one has 
\begin{align} \label{E:ub1}
\inf_{\theta \in R} (\mu_n \ell_{n,P}(\theta) + (\sup_{t \in \mathcal{T}} \mathbb{G}_{n,P}^*(\theta, t) + \lambda_n \hat{\psi}_{n,R,P}(\theta, t))) \overset{\mathrm{P}}{\rs} \inf_{\theta \in \Theta_0(R,P)} \sup_{t \in K_{R,P}(\theta)} \mathbb{G}_P(\theta, t)
\end{align}
uniformly in $(R,P) \in \mathcal{S}$. If Assumption \ref{AU:6'} also holds, then uniformly one also has
\begin{align} \label{E:ub9}
\inf_{\theta \in R} (\mu_n \ell_{n,P}(\theta) + (\sup_{t \in \mathcal{T}} \mathbb{G}_{n,P}^*(\theta, t) + \lambda_n \hat{\psi}_{n,R,P}(\theta, t) + \tilde{\mu}_n v_{n,P}(\theta, t))) \overset{\mathrm{P}}{\rs} \inf_{\theta \in \Theta_0(R,P)} \sup_{t \in \tilde{K}_{R,P}(\theta)} \mathbb{G}_P(\theta, t).
\end{align}
\end{theorem}

\section{Proofs} \label{S:proofs}
\begin{proof}[Proof of Proposition \ref{P:motivate2}]
This proof makes use of arguments employed to establish Theorems \ref{T:one} and \ref{T:two} below, and defers some details thereto. 
The second line of \eqref{E:convex2} still applies, so we may rewrite the left side of \eqref{E:motivate2} as $\sup_{t \in \mathcal{T}} (\inner{\mathbb{G}_{n,P}(\theta), t } + r_n(\inner{m_P(\theta), t} - \delta^*(t|C)))$. As the pointwise supremum of a collection of weak-* continuous maps, $\delta^*(t|C)$ is weak-* lower semicontinuous in $t$, so by assumption $(P,\theta, t) \mapsto \inner{m_P(\theta), t} - \delta^*(t|C))$ is upper semicontinuous. Note that this map is nonpositive and vanishes precisely for $t \in N_{m_P(\theta)}C$. The argument employed in Lemma \ref{L:sufficientlhc} now implies that, for any weak-* open set $U$ containing $0$, one has 
\begin{align} \label{E:ps132}
\sup_{\substack{P \in \mathcal{P} \\ t \in \mathcal{T} \setminus(\mathcal{T} \cap N_{m_P(\theta)}C  + U) }} ( \inner{m_P(\theta), t} - \delta^*(t|C)) < 0
\end{align} 
Let $\ve > 0$ be arbitrary. By uniform convergence of $\mathbb{G}_{n,P}$ to $\mathbb{G}_P$ in the bounded Lipschitz metric and uniform pre-tightness of the latter in $\mathfrak{X}$, there is some totally bounded set $S \subset \mathfrak{X}$ which satisfies 
$
\liminf_{n \ra \infty} \inf_{P \in \mathcal{P}} \PP{P}{\mathbb{G}_{n,P}(\theta) \in S,\, \forall \theta \in \Theta} > 1 - \ve. 
$
Arguing as in Theorem \ref{T:two} below, this fact and \eqref{E:ps132} imply 
\small
\begin{align}   \nonumber
&\liminf_{n \ra \infty} \inf_{P \in \mathcal{P}} \PP{P}{ \sup_{t \in \mathcal{T}} (\inner{\mathbb{G}_{n,P}(\theta), t } + r_n(\inner{m_P(\theta), t} - \delta^*(t|C))) = \sup_{t \in \mathcal{T} \cap(\mathcal{T} \cap N_{m_P(\theta)}C + U) } \inner{\mathbb{G}_{n,P}(\theta), t }, \, \forall \theta \in \Theta } \\
&\qquad > 1 - \ve.  \label{E:ps133}
\end{align}
\normalsize
for arbitrary weak-* open $U$ containing $0$. Let $\{x_1, \ldots, x_k\} \subset S$ be a subset such that the balls $B(x_1, \ve), \ldots, B(x_k, \ve)$ cover $S$. Each $x \in \mathfrak{X}^*$ induces a weak-* continuous map $x(t) \mapsto \R$, so that the set $U \equiv \bigcap_{k = 1}^K x_k^{-1}(-\ve, \ve)$ is weak-* open in $\mathfrak{X}^*$. Then, for every $P$, $\theta$, and $t \in \mathcal{T} \cap (\mathcal{T} \cap N_{m_P(\theta)}C + U)$, there is some $t' \in \mathcal{T} \cap N_{m_P(\theta)}C$ satisfying $\sup_{1 \le k \le K} |t'(x_k) - t(x_k)| < \ve$. By boundedness of the maps $t'$ and $t$, conclude that $\sup_{x \in S} |t'(x) - t(x)| < 3\ve$. Hence,
\begin{align} 
&\liminf_{n \ra \infty} \inf_{P \in \mathcal{P}}\PP{P}{\sup_{t \in  \mathcal{T} \setminus(\mathcal{T} \cap N_{m_P(\theta) }C  + U)} \inner{\mathbb{G}_{n,P}(\theta), t } \le \sup_{t \in \mathcal{T} \cap N_{m_P(\theta)}C } \inner{\mathbb{G}_{n,P}(\theta), t } + 3\ve, \, \forall \theta \in \Theta}\nonumber\\
&\qquad \ge  \liminf_{n \ra \infty} \inf_{P \in \mathcal{P}} \PP{P}{\mathbb{G}_{n,P}(\theta) \in S, \, \forall \theta \in \Theta},\label{E:ps134}
\end{align}
which is bounded below by $1 - \ve$. As $\ve$ was arbitrary, \eqref{E:ps133} and \eqref{E:ps134}, together with the union bound, imply the convergence in \eqref{E:motivate2}. The equality appearing in \eqref{E:motivate2} is a consequence of Lemma \ref{L:positive}. 
\end{proof}

\begin{proof}[Proof of Theorem \ref{T:one}]
In this and subsequent proofs, we simplify our exposition by treating outer expectations and probabilities as their measurable counterparts, appealing to asymptotic measurability as we pass to the limits. This proof draws from \cite{S2012}, \cite{hong2017}, and \cite{CNS2023}.

Let $\Theta_0$ be nonempty, and
let $\htheta_n \in \Theta$ be any choice approximately minimizing $\sup_{t \in \mathcal{T}} v_n(\theta, t)$ in the sense that 
\[
\sup_{t \in \mathcal{T}} v_n(\htheta_n, t) \le \inf_{\theta \in \Theta} \sup_{t \in \mathcal{T}} v_n(\theta, t)+ O_p(r_n^{-1}).
\]
By the continuous mapping theorem (\cite{VW1996}, \S 1.3) and tightness of $\mathbb{G}$, Assumption \ref{A:one} implies that 
\begin{align*}
\sup_{\theta \in \Theta} \sup_{t \in \mathcal{T}} |v_n(\theta, t) - v(\theta, t)| = O_p(r_n^{-1}).
\end{align*} 
Therefore, 
\begin{align*}
\ell(\htheta_n) & = \sup_{t \in \mathcal{T}} v(\htheta_n , t) \le \inf_{\theta \in \Theta} \sup_{t \in \mathcal{T}} v(\theta, t) + O_p(r_n^{-1}) = O_p(r_n^{-1}).  
\end{align*}
By compactness of $\Theta$ and lower semicontinuity of $\ell$, $\Theta_0$ is compact and
\begin{align}
d_{\mathfrak{B}}(\htheta_n, \Theta_0) = \min_{\theta \in \Theta_0} \norm{\htheta_n - \theta}_{\mathfrak{B}} = o_p(1). \label{E:consistentt}
\end{align}
It will also be convenient to note at this point that the map $G(\theta, t) \mapsto \inf_{\theta \in \Theta} \sup_{t \in \mathcal{T}} G(\theta, t)$ is $1$-Lipschitz continuous from $L^\infty(\Theta \times \R)$ to $\R$. Therefore, by the continuous mapping theorem, 
\begin{align}
r_n \inf_{\theta \in \Theta} \sup_{t \in \mathcal{T}} v_n(\theta, t) & \ge  r_n \inf_{\theta \in \Theta} \sup_{t \in \mathcal{T}} v(\theta, t) - \sup_{(\theta, t) \in \Theta \times \mathcal{T}} r_n|v_n(\theta, t) - v(\theta, t)| \nonumber \\
&  \qquad = r_n \inf_{\theta \in \Theta} \ell(\theta) - O_p(1) = O_p(1). \label{E:below}
\end{align}

Let $(a_n) \ra 0$ be a sequence of constants, chosen to converge slowly enough so that
$
d_{\mathfrak{B}} ( \htheta_n , \Theta_0 ) = o_p(a_n),
$
but otherwise arbitrary. Then note that
\begin{align}
r_n \inf_{\theta \in \Theta}\sup_{t \in \mathcal{T}} v_n(\htheta_n) & = \inf_{\theta \in \Theta_0} \inf_{\ttheta \in \Theta \cap B(\theta, a_n)} \sup_{t \in \mathcal{T}} r_n v_n(\ttheta, t) + o_p(1)\nonumber \\
& = \inf_{\theta \in \Theta_0} \inf_{\ttheta \in \Theta \cap B(\theta, a_n)} \sup_{t \in \mathcal{T}} r_n (v_n(\theta, t) + v(\ttheta, t) - v(\theta, t)) + o_p(1) \label{E:ll}
\end{align}
where the second line follows from Assumptions \ref{A:one} (uniform $\rho$-equicontinuity) and \ref{A:three} (continuity of $\rho$ with respect to $d$) in conjunction with Lemma \ref{L:conv0} below, which show
\begin{align*}
\sup_{\theta, \theta': \norm{\theta - \theta'}_{\mathfrak{B}} = o(1)}\sup_{t \in \mathcal{T}} |r_n (v_n(\ttheta, t) - v(\ttheta, t)) - r_n(v_n(\theta, t) - v(\theta, t))| = o_p(1). 
\end{align*}

By Assumption \ref{A:two}, there is a function $f_v(\delta) = o(\delta)$ such that
\begin{align*}
\sup_{\substack{\theta \in \Theta_0, \theta' \in \Theta\\ t \in \mathcal{T} \\ \norm{ \theta' - \theta}_{\mathfrak{B}} \le \delta}} |v(\theta',t) - v(\theta, t) - D_{\theta; t}v (\theta' - \theta)| = O(f_v(\delta)). 
\end{align*}
As $\frac{f_v(\delta)}{\delta} = o(1)$, there exists some continuous and increasing $h: (0,\infty) \ra \R_+$ which satisfies $\lim_{\delta \ra 0} h(\delta) = 0$ and $\frac{f_v(\delta)}{\delta} = o(h(\delta))$ (for instance, by upper-bounding the increasing function $\sup_{0 < \delta' \le \delta} f_v(\delta)/\delta$ with an increasing continuous function). Let $(b_n) \ra 0$ be a sequence chosen such that $b_n h(b_n) = r_n^{-1}$. Then, note that $r_n^{-1} = o(b_n)$ and also 
\begin{align}
f_v(b_n) = o(b_n h(b_n)) = o(r_n^{-1}). \label{E:bn}
\end{align}
Let $a_n$ converge slowly enough so that $b_n = o(a_n)$. For $\delta > 0$, let $\psi_\delta(\theta, t): \Theta_0 \times \mathcal{T} \ra \R$ denote the map $(\theta,t) \mapsto \inf_{\theta' \in \Theta \cap B(\theta, \delta)} \delta^{-1} D_{\theta; t} v (\theta' - \theta)$ (by definition, $\gamma(\theta, t) = \liminf_{\delta \ra 0} \psi_\delta(\theta,t)$). Then \eqref{E:ll} can be bounded above, up to the $o_p(1)$ term, by
\begin{align}
&\inf_{\theta \in \Theta_0} \inf_{\ttheta \in \Theta \cap B(\theta, b_n)} \sup_{t \in \mathcal{T}} ( r_n v_n(\theta, t) + r_n D_{\theta; t}v (\theta' - \theta)) + r_n O(f_v(b_n)) \nonumber\\
& \quad = \inf_{\theta \in \Theta_0}  \sup_{t \in \mathcal{T}}( r_n v_n(\theta, t) + \inf_{\ttheta \in \Theta \cap B(\theta, b_n)} r_n D_{\theta; t}v (\theta' - \theta)) + o(1) \nonumber  \\
& \quad = \inf_{\theta \in \Theta_0}  \sup_{t \in \mathcal{T}}( r_n v_n(\theta, t) + r_n b_n \psi_{b_n} (\theta, t)) + o(1),  \label{E:lastline}
\end{align}
where the second line applies Sion's minimax theorem (\cite{Sion1958}) using Assumptions \ref{A:zero}, \ref{A:two}, and \ref{A:three}, and the error term is uniform in $\theta, \theta'$, and $t$. 

Now by Assumption \ref{A:zero}, every $\theta \in \Theta_0$ admits a maximal $\delta(\theta) > 0$ such that $\Theta \cap B(\theta, \delta(\theta))$ is convex. Clearly, the function $\delta(\theta)$ is continuous over $\Theta_0$, which is compact by Assumption \ref{A:zero}, so there exists $\underline{\delta} > 0$ which lower bounds $\delta(\theta)$ on $\Theta_0$. Let $\delta$ and $\delta'$ be arbitrary in $(0, \underline{\delta})$ but for the fact that $\delta > \delta'$. Then, by positive homogeneity of $D_{\theta; t}v$, and local convexity of $\Theta$ at $\theta$,
\begin{align}
0 \ge \psi_\delta(\theta, t) & = \inf_{\theta' \in \Theta \cap B(\theta, \delta)} \frac{D_{\theta; t}v (\theta' - \theta)}{\delta } = \inf_{\theta' - \theta \in \Theta \cap B(\theta, \delta) - \theta}\frac{ D_{\theta; t} v( \delta'/\delta (\theta' - \theta)) }{\delta' } \nonumber\\
& = \inf_{\theta' - \theta \in \delta' / \delta (\Theta \cap B(\theta, \delta) - \theta)}\frac{ D_{\theta; t} v( \delta'/\delta (\theta' - \theta)) }{\delta' }  \nonumber \\
& \ge \inf_{\theta' - \theta \in \Theta \cap B(\theta, \delta') - \theta}  \frac{D_{\theta; t} v(\theta' - \theta)}{\delta'} = \psi_{\delta'}(\theta, t). \label{E:gammat}
\end{align}
where the second line follows from convexity of $\Theta \cap B(\theta, \delta)$:
\[
\delta' / \delta (\Theta \cap B(\theta, \delta) - \theta) \subset (\Theta \cap B(\theta, \delta) - \theta) \cap B(0, \delta') \subset \Theta \cap B(\theta, \delta') - \theta .
\]
It follows that $\psi_\delta(\theta, t)$ is pointwise increasing in $\delta$, and constitutes a pointwise decreasing sequence as $\delta \ra 0$, whence we make the following observation:
\begin{lemma}  \label{L:gsc}
Under Assumptions \ref{A:zero}, \ref{A:two}, and \ref{A:three}, $\psi_\delta(\theta,\cdot)$ and $\gamma(\theta, \cdot)$ are $\mathcal{U}$-upper semicontinuous on $\mathcal{T}$ for all $\theta \in \Theta_0$ and $\delta > 0$. 
\end{lemma}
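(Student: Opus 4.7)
The plan is to realize each of $\psi_\delta(\theta, \cdot)$ and $\gamma(\theta, \cdot)$ as a pointwise infimum of a family of $\mathcal{U}$-upper semicontinuous functions, and then invoke the elementary fact that a pointwise infimum of upper semicontinuous functions is upper semicontinuous (the superlevel set $\{t: \inf_i f_i(t) \ge \alpha\}= \bigcap_i \{t: f_i(t) \ge \alpha\}$ is closed as an intersection of closed sets).

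For $\psi_\delta(\theta, \cdot)$, first observe that its very definition
\[
\psi_\delta(\theta, t) = \inf_{\theta' \in \Theta \cap B(\theta, \delta)} \delta^{-1} D_{\theta; t}v(\theta' - \theta)
\]
exhibits it as an infimum indexed by $\theta' \in \Theta \cap B(\theta, \delta)$. By Assumption \ref{A:three}, each map $t \mapsto D_{\theta; t}v(\theta' - \theta)$ is $\mathcal{U}$-upper semicontinuous, so the infimum is upper semicontinuous in $t$, and dividing by the positive constant $\delta$ preserves this.

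For $\gamma(\theta, \cdot)$, I would use the monotonicity already established in \eqref{E:gammat}: on $(0, \underline{\delta})$, $\psi_\delta(\theta, t)$ is pointwise nonincreasing as $\delta \downarrow 0$, so the $\liminf$ appearing in the defining equation \eqref{E:gamma} may be replaced by an infimum, giving
\[
\gamma(\theta, t) = \inf_{0 < \delta < \underline{\delta}} \psi_\delta(\theta, t).
\]
Since each $\psi_\delta(\theta, \cdot)$ is $\mathcal{U}$-upper semicontinuous by the previous paragraph, $\gamma(\theta, \cdot)$ is an infimum of upper semicontinuous functions and hence upper semicontinuous as well.

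There is no real obstacle here; the proof is essentially formal once one notices that the two objects of interest are infima of the functions that Assumption \ref{A:three} was built to control, and that the monotonicity already derived in \eqref{E:gammat} upgrades the $\liminf$ in the definition of $\gamma$ to an infimum.
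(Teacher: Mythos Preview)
Your proposal is correct and follows essentially the same approach as the paper: both argue that $\psi_\delta(\theta,\cdot)$ is a pointwise infimum of the $\mathcal{U}$-upper semicontinuous maps $t \mapsto \delta^{-1}D_{\theta;t}v(\theta'-\theta)$ supplied by Assumption~\ref{A:three}, and that $\gamma(\theta,\cdot)$ is in turn the pointwise (decreasing) infimum of the $\psi_\delta(\theta,\cdot)$. The only cosmetic difference is that you spell out the superlevel-set argument and the monotonicity step from \eqref{E:gammat} explicitly, whereas the paper cites a reference and simply says ``pointwise decreasing limit.''
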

\begin{proof}
First, in a variant of the maximum theorem, we show that $\psi_\delta(\theta, \cdot)$ is $d_\mathcal{T}$-upper semicontinuous. Fix $\theta \in \Theta_0$ and $t \in \mathcal{T}$. By Assumption \ref{A:three}, for any $\theta' \in \Theta$ the map 
\begin{align*}
t \mapsto \frac{D_{\theta; t} v(\theta' - \theta)}{\delta} 
\end{align*}
is upper semicontinuous. Note that $\psi_\delta(\theta, \cdot)$ is the pointwise infimum of the collection of these maps over $\theta' \in \Theta \cap B(\theta, \delta)$. A straightforward consequence of the proof of e.g.\ \cite{Gia2007}, Proposition 11.1 shows that $\psi_\delta$ must be upper semicontinuous over $(\mathcal{T}, \mathcal{U})$. Furthermore, as $\gamma$ is the pointwise decreasing limit of upper semicontinuous functions $\psi_\delta(\theta, \cdot)$, it is also upper semicontinuous. 
\end{proof}

Recall that the last line of \eqref{E:lastline} is bounded below by \eqref{E:ll}, which by \eqref{E:below} is at least some term which is $O_p(1)$. Denote this latter term by $A_n$. Also, for all $\theta \in \Theta_0$, let $t_n(\theta)$ (which is a stochastic term) approximately maximize $r_n v_n(\theta, t)$, in the sense that
\begin{align*}
r_n v_n(\theta, t_n(\theta)) + r_n b_n \psi_{b_n} (\theta, t_n(\theta)) \ge \sup_{t \in \mathcal{T}} (r_n v_n(\theta, t) + r_n b_n \psi_{b_n} (\theta, t))- o(1),
\end{align*}
where the $o(1)$ term is some approximation  error that is chosen uniformly across $\theta$. Then, the following estimate pertains to the terms in the infimum in the last line of \eqref{E:lastline}, uniformly in $\theta \in \Theta_0$:
\begin{align}
0 \ge r_n b_n \psi_{b_n}(\theta, t_n(\theta))  + o(1) & \ge A_n - \sup_{\theta \in \Theta_0} \sup_{t \in \mathcal{T}} r_n v_n(\theta, t) \ge A_n - \sup_{\theta \in \Theta_0} \sup_{t \in \mathcal{T}} \mathbb{G}_n(\theta, t), \label{E:bnan}
\end{align}
where the second line follows because $\sup_{\theta \in \Theta_0} \sup_{t \in \mathcal{T}} v(\theta, t) = 0$, and $A_n - \sup_{\theta \in \Theta_0} \sup_{t \in \mathcal{T}} \mathbb{G}_n(\theta, t) = O_p(1)$. Dividing both sides by $r_n b_n$, we conclude that 
\begin{align*}
\sup_{\theta \in \Theta_0} |\psi_{b_n} (\theta, t_n(\theta))| = O_p(r_n^{-1} b_n^{-1}). 
\end{align*}
In particular, for all $\theta \in \Theta_0$, there exists a subsequence $(n_m)$ for which $\lim_{m \ra \infty} \psi_{b_{n_m}}(\theta, t_{n_m}(\theta)) = 0$ (\cite{D2010}, Theorem 2.3.2). By compactness, we may consider only such subsequences as converge to some limit $t \in \mathcal{T}$. Then, because $\psi_{b_{n_m}}$ is a pointwise decreasing sequence of upper semicontinuous functions, one has for every $m$ that
\begin{align*}
0 \ge \psi_{b_{n_m}}(\theta, t) & \ge \limsup_{m' \ra \infty} \psi_{b_{n_m}}(\theta, t_{n_{m'}} (\theta))  \ge \limsup_{m' \ra \infty} \psi_{b_{n_{m'}}}(\theta, t_{n_{m'}}(\theta)) = 0, 
\end{align*}
which in consideration of the definition of $\gamma$, implies that $K(\theta)$ contains $t$ and is nonempty.

Let $(c_n)$ be any sequence of constants which converges to $0$ more slowly than $r_n^{-1} b_n^{-1}$, in the sense that the latter sequence is $o(c_n)$. Recalling the definition of $t_n(\theta)$, this allows us to rewrite the last line of \eqref{E:lastline} as
\begin{align}
&\inf_{\theta \in \Theta_0} \sup_{t \in \mathcal{T}: |\psi_{b_n}(\theta, t)| \le c_n} (r_n v_n(\theta,t) +  r_n b_n \psi_{b_n} (\theta, t)) + o_p(1)  \le \inf_{\theta \in \Theta_0} \sup_{t \in \mathcal{T}: |\psi_{b_n}(\theta, t)| \le c_n} \mathbb{G}_n(\theta,t) + o_p(1), \label{E:longs}
\end{align}
where the second line follows because $\sup_{\theta \in \Theta_0}\sup_{t \in \mathcal{T}} v(\theta, t) = 0$, and because $r_n b_n \gamma(\theta, t) \le 0$ for all $\theta$ and $t$. Let $\theta_n^* \in \Theta_0 $ (again, a stochastic term) be chosen to approximately minimize the right-hand side of \eqref{E:thm}:
\begin{align} \label{E:longt}
\sup_{t \in K(\theta_n^*)} \mathbb{G}_n(\theta_n^*, t) \le  \inf_{\theta \in \Theta_0} \sup_{t \in K(\theta)} \mathbb{G}_n(\theta, t) + o(1),
\end{align}
where the $o(1)$ term is deterministic. 

We now want to deal with the $\rho$-distance between the set of $t$ which satisfy $\sup_{t \in \mathcal{T}: |\psi_{b_n}(\theta, t)| \le c_n}$ and $K(\theta)$, which is handled by the following auxiliary lemmas. Given two subsets $S$ and $S'$ of a psuedometric space with psuedometric $d$, the Hausdorff distance between $S$ and $S'$ is defined as 
\[
d(S, S') = \max \{ \sup_{x \in S} d(x, S'), \sup_{y \in S'} d(y,S)\}.
\]
For convenience, we set $d(\emptyset, \emptyset) = 0$. We will shortly have occasion to use the following lemmas:
\begin{lemma} \label{L:conv0}
Let $W\times Z$ be compact with respect to the product topology $\mathcal{V} = \mathcal{V}_W \times \mathcal{V}_Z$, and let $\rho: W \times Z \ra \R$ be a pseudometric continuous with respect to $\mathcal{V}$ in the sense that $x_\alpha \ra x$ (for some net $x_\alpha$) implies $\rho(x_\alpha, x) \ra 0$.

Then, $W$ and $Z$ are totally bounded with respect to the pseudometrics 
\begin{align*}
\rho_W: (w, w') \mapsto \sup_{z \in Z} \rho((w,z), (w', z)) \text{ and }
\rho_Z: (z, z') \mapsto \sup_{w \in W} \rho((w,z), (w, z')),
\end{align*}
which are continuous with respect to $\mathcal{V}_W$ and $\mathcal{V}_Z$, repectively.
\end{lemma}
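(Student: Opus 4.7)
The plan is to prove both assertions for $W$ and $\rho_W$; the argument for $Z$ and $\rho_Z$ is entirely symmetric. First, I would observe that $W$ is $\mathcal{V}_W$-compact as the image of the $\mathcal{V}$-compact set $W \times Z$ under the continuous canonical projection. I would also note that the hypothesis on $\rho$, combined with the reverse triangle inequality $|\rho(x_\alpha, y_\alpha) - \rho(x, y)| \le \rho(x_\alpha, x) + \rho(y_\alpha, y)$, upgrades the stated one-sided net continuity of $\rho$ to full joint continuity of $\rho: (W \times Z)^2 \to \R$ with respect to $\mathcal{V} \times \mathcal{V}$. In particular $\rho$ is bounded on $(W \times Z)^2$, so the suprema defining $\rho_W$ and $\rho_Z$ are finite.

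Next I would establish continuity of $\rho_W$ in the sense of the lemma. Suppose toward contradiction that $w_\alpha \to w$ in $\mathcal{V}_W$ but $\rho_W(w_\alpha, w) \not\to 0$. Then for some $\varepsilon > 0$ we may pass to a subnet $(w_\beta)$ with $\rho_W(w_\beta, w) \ge \varepsilon$ for all $\beta$, and by unpacking the supremum defining $\rho_W$ we may choose $z_\beta \in Z$ with $\rho((w_\beta, z_\beta), (w, z_\beta)) \ge \varepsilon/2$. By compactness of $Z$, a further subnet of $(z_\beta)$ converges to some $z^* \in Z$. Along that subnet, both $(w_\beta, z_\beta)$ and $(w, z_\beta)$ converge to $(w, z^*)$ in $\mathcal{V}$, and joint continuity of $\rho$ yields $\rho((w_\beta, z_\beta), (w, z_\beta)) \to \rho((w, z^*), (w, z^*)) = 0$, a contradiction.

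Finally I would derive total boundedness of $W$ under $\rho_W$. Fix $\varepsilon > 0$. The continuity just established implies that for each $w \in W$, the set $\{w' \in W : \rho_W(w, w') < \varepsilon\}$ contains a $\mathcal{V}_W$-open neighborhood of $w$, since otherwise one could construct a net $w_\alpha \to w$ in $\mathcal{V}_W$ along which $\rho_W(w_\alpha, w)$ stays at or above $\varepsilon$. These neighborhoods form a $\mathcal{V}_W$-open cover of $W$, and $\mathcal{V}_W$-compactness produces a finite subcover whose centers form an $\varepsilon$-net for $\rho_W$.

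The only delicate step is the net argument for continuity of $\rho_W$: since $\mathcal{V}_W$ and $\mathcal{V}_Z$ need not be first countable, sequences do not suffice, but the two nested subnet extractions — first fixing $\rho_W(w_\beta, w) \ge \varepsilon$ and then using compactness of $Z$ to make $z_\beta \to z^*$ — dispatch this cleanly. Everything else reduces to the standard facts that continuous images of compact sets are compact and that a continuous pseudometric topology is coarser than the original topology.
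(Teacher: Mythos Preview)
Your proof is correct. The paper proceeds by a different, more constructive route: it fixes $\varepsilon>0$, covers $W\times Z$ by finitely many basic rectangles $V_W(j)\times V_Z(j)$ on each of which $\rho$ stays within $\varepsilon$ of its value at a chosen center, and then for a given $w$ intersects the finitely many $V_W(j)$ containing $w$ to produce a $\mathcal{V}_W$-neighborhood $V_w$ on which $\rho_W(w,\cdot)<\varepsilon$; a finite subcover by such $V_w$ yields total boundedness and continuity simultaneously. This is essentially a tube-lemma argument. Your approach instead upgrades to full joint continuity of $\rho$ via the triangle inequality and then runs a subnet contradiction, using compactness of $Z$ to pin down a limit $z^*$. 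The paper's argument is slightly more direct and avoids iterated subnet extraction, while yours is perhaps more conceptually transparent in separating the continuity step from the total-boundedness step; both are short and neither requires first countability.
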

\begin{proof}
By $\mathcal{V}$-continuity of $\rho$, for any $(w,z) \in W \times Z$ and $\ve > 0$, there is an open neighborhood $V(w,z)$ of $(w,z)$ such that $\sup_{(w', z') \in V} \rho((w,z),(w', z')) < \ve$. The cylinder sets form a basis for the product topology, so we may assume that $V(w,z) = V_W \times V_Z$ for open sets $V_W, V_Z$. By compactness, $W \times Z$ is covered by a finite collection of such sets $V_W(1) \times V_Z(1), \ldots, V_W(k) \times V_Z(k)$. 

Let $w \in W$, and by possibly relabeling, let $V_W(1), \ldots , V_W(m)$ be the subcollection of open sets $V_W(j)$ containing $w$. Then $V_w = \bigcap_{j = 1}^m V_W(j)$ is an open set containing $w$. Moreover, $\bigcup_{j = 1}^m V_W(j) \times V_Z(j)$ contains $\{w\} \times V$.

Let $w' \in V_w$. Then,
\begin{align*}
\rho_W(w,w') &= \sup_{z \in Z} \rho((w,z), (w', z)) = \sup_{1 \le j \le m} \sup_{z \in V_Z(j)} \rho((w, z), (w', z)) \\
& \le \sup_{1 \le j \le m} \sup_{(w', z') \in V_W(j) \times V_Z(j)} \rho((w,z), (w', z')) < \ve. 
\end{align*}
This implies that $\rho_W$ is $\mathcal{V}_W$-continuous. By compactness, $W$ can be covered with finitely many sets $V_w$, so it is $\rho_W$-totally bounded. The proof for $Z$ is the same.
\end{proof}

\begin{lemma} \label{L:conv1}
Make the assumptions of Lemma \ref{L:conv0}, and let $S \subset W \times Z$ with slices $\overline{S}(w) = \{(w', z')\in S: w' = w\}$. 

Then, for arbitrary $\ve > 0$, there exists a finite set $\{w_1, \ldots , w_K\} \subset W$ such that the slices $\{\overline{S}(w_k): 1 \le k \le K \}$ satisfy
\begin{align*}
\sup_{w \in W} \min_{1 \le k \le K} \rho(\overline{S}(w), \overline{S}(w_k)) < \ve.
\end{align*}
\end{lemma}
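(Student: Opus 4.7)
The plan is to reduce the problem to total boundedness of $W \times Z$ under $\rho$ and then to pigeonhole slices by which $\rho$-balls they meet. First I would verify that the diagonal-continuity assumption $x_\alpha \to x \Rightarrow \rho(x_\alpha, x) \to 0$, combined with the triangle inequality applied to a second net $y_\beta \to y$, yields joint continuity of $\rho$ with respect to $\mathcal{V} \times \mathcal{V}$; in particular the $\rho$-balls $B_\rho(x, \delta) = \{y \in W \times Z: \rho(x, y) < \delta\}$ are $\mathcal{V}$-open. Compactness of $W \times Z$ under $\mathcal{V}$ then extracts, for any $\ve > 0$, a finite subcover $B_1, \ldots, B_N$ of $W \times Z$ consisting of $\rho$-balls of radius $\ve/3$.

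For each $w \in W$, I would record the signature
\[
I(w) = \{i \in \{1, \ldots, N\}: \overline{S}(w) \cap B_i \neq \emptyset\},
\]
which is an element of the finite power set $2^{\{1, \ldots, N\}}$. There are at most $2^N$ signatures realized by points of $W$; for each realized signature $I$, pick one representative $w_I \in W$ with $I(w_I) = I$, and let $\{w_1, \ldots, w_K\}$ be the resulting finite collection of representatives. The intuition is that slices sharing a signature occupy the same pattern of $\ve/3$-balls, so they should be close in Hausdorff distance.

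The core step is then the claim that $\rho(\overline{S}(w), \overline{S}(w_{I(w)})) \le 2\ve/3 < \ve$ for every $w \in W$. When $I(w) = \emptyset$ both slices are empty and the distance is $0$ by convention. Otherwise, for $(w, z) \in \overline{S}(w)$ I would pick $i$ with $(w, z) \in B_i$; because $i \in I(w) = I(w_{I(w)})$, there exists $(w_{I(w)}, z'') \in \overline{S}(w_{I(w)}) \cap B_i$, and the two points both sit in a single $\rho$-ball of radius $\ve/3$, so the triangle inequality gives $\rho((w,z), (w_{I(w)}, z'')) < 2\ve/3$. The symmetric bound follows by exchanging the roles of $w$ and $w_{I(w)}$, yielding the Hausdorff estimate.

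The main obstacle is recognizing that the naive route from Lemma \ref{L:conv0} is insufficient: even if one covers $W$ by finitely many $\rho_W$-balls so that $\rho_W(w, w_k) < \ve$, a point $(w, z) \in S$ need not have any partner $(w_k, z') \in S$ at all---the slices $\overline{S}(w)$ can vary discontinuously with $w$, and $\overline{S}(w_k)$ could be empty or supported on entirely different fibers. Controlling the ``index distance'' $\rho_W$ therefore does not control the Hausdorff distance between slices. The signature construction circumvents this by indexing slices according to their actual footprint in $W \times Z$ rather than the location of $w$, and uses only compactness plus joint continuity of $\rho$.
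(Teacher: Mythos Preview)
Your proof is correct and shares the paper's core pigeonhole idea: record, for each $w$, which members of a fixed finite family of small sets the slice $\overline{S}(w)$ meets, and pick one representative per realized signature. The implementations differ in the decomposition. The paper invokes Lemma \ref{L:conv0} to partition $Z$ into finitely many $\rho_Z$-balls and $W$ into finitely many $\rho_W$-balls, defines the signature $\psi(w)$ from which $Z$-balls the fiber $S(w)$ meets, and then, within each $\rho_W$-ball $C_m$, selects representatives covering every signature value occurring there; the final estimate combines a $\rho_Z$-bound (from matching signatures) with a $\rho_W$-bound (from lying in the same $C_m$). You instead observe that the diagonal hypothesis plus the reverse triangle inequality already gives joint $\mathcal{V}$-continuity of $\rho$, so $\rho$-balls in $W\times Z$ are $\mathcal{V}$-open, and you cover the product directly; the signature then lives on the product and no separate control of the $W$-coordinate is needed. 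Your route is slightly more economical for this lemma in isolation, since it bypasses the $\rho_W/\rho_Z$ machinery; the paper's route has the side benefit that it explicitly produces $\rho_W$ and $\rho_Z$, which are used elsewhere in the proof of Theorem \ref{T:one}.
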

\begin{proof}
For $w \in W$, let $S(w) = \{z \in Z: (w,z) \in S\}$ denote the fiber of $S$ at $w$.
By Lemma \ref{L:conv0}, $Z$ must be totally bounded with pseudometric $\rho_Z$, so partition $Z$ into a finite number of $\rho_Z$ $\ve/2$-diameter balls $B_1, \ldots, B_L$. Consider the map $\psi: W \ra \{0,1\}^L$ given by $\psi(w) = (\one_{S(w) \cap B_1 \neq \emptyset}, \ldots , \one_{S(w) \cap B_L \neq \emptyset})$, which returns as its $\ell^\text{th}$ coordinate an indicator of whether or not $S(w)$ and $B_\ell$ have a nonempty intersection.

Furthermore, partition $W$ into $\rho_W$ $\ve/2$-balls $C_1, \ldots , C_M$. For each ball $C_m$, the image $\psi(C_m) \subset \{0,1\}^L$ is a finite set, and there exist 
$
w_{m,1}, \ldots , w_{m, \ell_m} \in C_m
$
which satisfy 
\begin{align*}
\{\psi(w_{m,\ell}): 1 \le \ell \le \ell_m\} = \psi(C_m).
\end{align*} 

The collection $\{w_{m, \ell}: 1 \le m \le M, 1\le \ell \le \ell_m\}$ satisfies our requirements. To see that this is the case, let $w \in W$ be arbitrary, and let $m$ be such that $w \in C_m$. Then there exists $w_{m, \ell}$ which satisfies $\psi(w) = \psi(w_{m,\ell})$. 
Let $z$ be any element of $Z$ such that $(w, z) \in S$. For some $\ell$, we have $z \in B_\ell$, and so $S(w) \cap B_\ell \neq \emptyset$. It follows that $S(w_{m,\ell}) \cap B_\ell \neq \emptyset$ as well, and so there is some point $z'$ in the intersection. The triangle inequality then implies that 
\begin{align*}
\rho((w,z), (w_{m,\ell}, z')) \le \rho_W(w, w_{m,\ell}) + \rho_Z(z, z') < \ve.
\end{align*}
As $z$ was arbitrary, this concludes for the case where $S(w)$ is nonempty. If $S(w)$ is empty, we similarly may find $w_{m,\ell} \in B_\ell$ which has $S(w_{m,\ell}) = \emptyset$, which concludes.
\end{proof}

\begin{lemma} \label{L:monotonedecreasing}
Let $(\mathcal{T}, \mathcal{U})$ be a compact topological space and $\psi_n: \mathcal{T} \ra \R$ a collection of nonpositive and upper semicontinuous functions decreasing pointwise to a function $\gamma$. Let $d: \mathcal{T} \ra \R$ be a $\mathcal{U}$-continuous pseudometric, $K = \gamma^{-1}(0)$ nonempty, and $c_n = o(1)$ be any positive sequence converging to $0$. 

Then, $\lim_{n \ra \infty} d(\{t: |\psi_n(t)| \le c_n \} , K) = 0$.  
\end{lemma}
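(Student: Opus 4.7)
The plan is to first reduce the Hausdorff distance to a one-sided supremum. Note that for $t \in K$, we have $\gamma(t)=0$, so $\psi_n(t) \geq \gamma(t) = 0$ by pointwise monotone convergence, while $\psi_n(t) \leq 0$ by nonpositivity. Thus $\psi_n(t) = 0$ on $K$, and in particular $K \subset A_n := \{t : |\psi_n(t)| \leq c_n\}$. This makes $\sup_{t \in K} d(t, A_n) = 0$, so the Hausdorff distance equals $\sup_{t \in A_n} d(t,K)$, and it suffices to show this supremum tends to zero.

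I would argue by contradiction. Suppose there exists $\varepsilon > 0$ such that along some subsequence $n_k \to \infty$ we can pick $t_{n_k} \in A_{n_k}$ with $d(t_{n_k}, K) \geq \varepsilon$. By compactness of $(\mathcal{T}, \mathcal{U})$, the net $(t_{n_k})$ admits a convergent subnet $t_\alpha \to t^* \in \mathcal{T}$ (we work with nets since $\mathcal{U}$ need not be first-countable, but the $\mathcal{U}$-continuity of $d$ is stated in terms of nets). Continuity of $d$ then gives $d(t_\alpha, K) \to d(t^*, K) \geq \varepsilon$, so $t^* \notin K$ and $\gamma(t^*) < 0$.

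To derive a contradiction, I fix an arbitrary $m \in \mathbb{N}$. Along the subnet, eventually the index satisfies $n(\alpha) \geq m$, at which point monotonicity yields $\psi_{n(\alpha)}(t_\alpha) \leq \psi_m(t_\alpha)$. Passing to the limit, $|\psi_{n(\alpha)}(t_\alpha)| \leq c_{n(\alpha)} \to 0$ forces $\psi_{n(\alpha)}(t_\alpha) \to 0$, while upper semicontinuity of $\psi_m$ yields $\limsup_\alpha \psi_m(t_\alpha) \leq \psi_m(t^*)$. Combining,
\[
0 = \lim_\alpha \psi_{n(\alpha)}(t_\alpha) \leq \limsup_\alpha \psi_m(t_\alpha) \leq \psi_m(t^*).
\]
Since $\psi_m \leq 0$, this gives $\psi_m(t^*) = 0$ for every $m$, and taking $m \to \infty$ yields $\gamma(t^*) = 0$, i.e.\ $t^* \in K$, contradicting $d(t^*,K) \geq \varepsilon$.

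The argument is structurally simple; the only real subtlety I anticipate is the net-theoretic bookkeeping, since $\mathcal{T}$ is an arbitrary compact topological space rather than a metric space. One must ensure that upper semicontinuity of $\psi_m$ in the topological (net) sense is invoked, and that the subnet indices $n(\alpha)$ are cofinal in $\mathbb{N}$ so that monotonicity can be applied eventually. Once this is done correctly, the interaction between monotone pointwise convergence and upper semicontinuity (a dual of Dini's theorem, in spirit) delivers the conclusion directly.
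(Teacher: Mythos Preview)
Your proof is correct. The reduction to the one-sided Hausdorff distance via $K\subset A_n$ matches the paper, and the subnet bookkeeping (cofinality of $n(\alpha)$, net-upper-semicontinuity of $\psi_m$, and $|d(t_\alpha,K)-d(t^*,K)|\le d(t_\alpha,t^*)\to 0$ by the triangle inequality for the pseudometric) is handled properly.

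The route, however, is genuinely different from the paper's. The paper does not argue by contradiction; instead, for fixed $\delta>0$ it builds a continuous minorant $\tilde\gamma$ that equals $0$ on $K^{(\delta)}$ and $-\varepsilon$ outside $(K^{(\delta)})^{(\delta)}$, sets $\tilde\psi_n=\max\{\tilde\gamma,\psi_n\}$, and invokes a Dini-type argument to get uniform convergence of $\tilde\psi_n$ to $\tilde\gamma$. From that it extracts a quantitative bound $\sup_{t\notin K^{(2\delta)}}\psi_n(t)\le -\varepsilon/2$ for all large $n$, which forces $A_n\subset K^{(2\delta)}$. Your compactness-and-subnet contradiction is more direct and avoids constructing the interpolant, at the cost of not producing the explicit uniform bound on $\psi_n$ away from $K$; for the lemma as stated, nothing more is needed, so your argument is the cleaner one.
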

\begin{proof}
Recall that the limit function $\gamma$ must be upper semicontinuous (\cite{Gia2007}, Proposition 11.1). For $\delta > 0$, let $K^{(\delta)} = \{t \in \mathcal{T}: d(t, K) < \delta\}$. By continuity of $d$, $K^{(\delta)}$ contains a $\mathcal{U}$-open set which in turn contains $K$. By upper semicontinuity and compactness, there is some $\ve > 0$ such that $\sup_{t \in (K^{(\delta)})^c} \gamma(t)  \le - \ve$. 

By the triangle inequality, it can readily be verified that $t \mapsto d(t,K^{(\delta)})$ is $\mathcal{U}$-continuous. Let $\tilde{\gamma}$ be a $\mathcal{U}$-continuous function which is $0$ on $K^{(\delta)}$, bounded below by $-\ve$, and $- \ve$ on $((K^{(\delta)})^{(\delta)})^c$ (for instance, let $\tilde{\gamma} = -\ve \chi((t,K^{(\delta)}))$ where $\chi$ is continuous and satisfies $\chi(0) = 0$, $\chi(x) = 1$ for $x \ge \delta$). Then, $\tilde{\gamma}$ upper bounds $\gamma$ and the sequence of pointwise maxima $\tilde{\psi}_n \equiv \max\{\tilde{\gamma}, \psi_n\}$ consists of upper semicontinuous functions majorizing the functions $\psi_n$ and decreasing pointwise to $\tilde{\gamma}$. Then, a straightforward modification of the proof of Dini's theorem for upper semicontinuous functions (see \cite{Rudin1976}, Theorem 7.13) implies that $\tilde{\psi}_n$ converges uniformly to $\tilde{\gamma}$. It follows that there exists some $n'$ such that \[
\sup_{\substack{n \ge n' \\ t \in (K^{(2\delta)})^c}} \psi_n(t) \le \sup_{\substack{n \ge n' \\ t \in ((K^{(\delta)})^{(\delta)})^c}} \psi_n(t) \le -\ve/2.
\]
Let $n'$ be large enough so that $\limsup_{n \ge n'} c_n < -\ve / 2$. Then, for $n \ge n'$, one has $\{t: |\psi_n(t)| \le c_n \} \subset K^{(2\delta)}$, whence \[
\limsup_{n \ra \infty} \sup_{t: |\psi_n(t)| \le c_n } d(t, K) \le 2\delta.
\]
The lemma follows because $\delta$ was arbitrary, and one always has $K \subset \{t: |\psi_n(t)| \le c_n \}$.
\end{proof}

We now return to \eqref{E:longt}, which is clearly bounded above (up to a term which is $o_p(1)$) by the second line of \eqref{E:longs}. Let $\ve> 0$ be arbitrary; then the asymptotic equicontinuity granted by Assumption \ref{A:one} implies that there is some $\delta > 0$ such that 
\begin{align} \label{E:equicont}
\limsup_{n \ra \infty} \P{ \sup_{\rho((\theta, t ), (\theta', t')) < \delta} |\mathbb{G}_n(\theta, t) - \mathbb{G}_n(\theta', t') |> \ve} < \ve.
\end{align}
By Lemma \ref{L:conv1}, we may pick $\{\theta_1, \ldots , \theta_K \}$ that satisfy 
\begin{align*}
\sup_{\theta \in \Theta_0} \min_{1 \le k \le K} \rho(\overline{S}(\theta), \overline{S}(\theta_k)) < \delta,
\end{align*}
where $\overline{S}(\theta) = \{(\theta, t): t \in K(\theta)\}$ is defined as in Lemma \ref{L:conv1}. 

Let $\theta_n^*$ be as in \eqref{E:longs}. Then, there exists a $k$ depending on $\theta_n^*$ such that, for all $t' \in K(\theta_k)$, there exists $t = t(t') \in K(\theta_n^*)$ depending on $t'$ satisfying that $\rho((\theta_n^*, t(t')), (\theta_k, t')) < \delta$. We note that 
\begin{align*}
\mathbb{G}_n(\theta_n^*, t(t')) & \ge \mathbb{G}_n(\theta_k, t') - |\mathbb{G}_n(\theta_n^*, t(t')) - \mathbb{G}_n(\theta_k,t')|.  
\end{align*}
Taking suprema over $t$ in the previous display, the left side of \eqref{E:longs} satisfies the following:
\begin{align*}
\sup_{t \in K(\theta_n^*)} \mathbb{G}_n (\theta_n^*, t)  \ge & \inf_{1 \le k \le K} \sup_{t \in K(\theta_k)} \mathbb{G}_n(\theta_k, t) - \sup_{\rho((\theta, t) , (\theta', t')) < \delta } | \mathbb{G}_n(\theta, t) - \mathbb{G}_n(\theta', t')|.
\end{align*}
In view of \eqref{E:equicont}, this implies 
\begin{align}
\limsup_{n \ra \infty} \P{\sup_{t \in K(\theta_n^*)} \mathbb{G}_n (\theta_n^*, t) \le \inf_{1 \le k \le K} \sup_{t \in K(\theta_k)} \mathbb{G}_n(\theta_k, t) - \ve } \le \ve. \label{E:infG}
\end{align}

Consider the second line of \eqref{E:longs}. For any given $\theta \in \Theta_0$, Lemmas \ref{L:conv0} and \ref{L:monotonedecreasing} imply that 
\begin{align*}
\rho_\mathcal{T} (\{t \in \mathcal{T}: |\psi_{b_n}(\theta, t)| \le c_n \},  K(\theta)) \ra 0,
\end{align*} 
where the $\rho_\mathcal{T}$ is defined in Lemma \ref{L:conv0}. Let $d_n(\theta)$ denote the convergent sequence of Hausdorff distances in the previous display. Assumption \ref{A:one} and asymptotic equicontinuity then imply that, for any particular $\theta \in \Theta_0$, 
\begin{align}
\sup_{t \in K(\theta)}  \mathbb{G}_n(\theta, t) \ge & \sup_{t \in \mathcal{T}: |\psi_{b_n}(\theta, t)| \le c_n} \mathbb{G}_n(\theta, t) - \sup_{\substack{\theta \in \Theta_0 \\ t,t': \rho_\mathcal{T}(t,t') \le d_n(\theta)}} |\mathbb{G}_n(\theta, t) - \mathbb{G}_n(\theta, t')| \nonumber\\
\ge & \sup_{t \in \mathcal{T}: |\psi_{b_n}(\theta, t)| \le c_n} \mathbb{G}_n(\theta, t) - o_p(1).  \label{E:gnpark}
\end{align} 
Conclude that 
\begin{align*}
\inf_{1 \le k \le K} \sup_{t \in K(\theta_k)}  \mathbb{G}_n(\theta_k, t) & \ge \inf_{1 \le k \le K} \sup_{t \in \mathcal{T}: |\psi_{b_n}(\theta_k, t)| \le c_n} \mathbb{G}_n(\theta_k, t) - o_p(1),
\end{align*}
and therefore that
\begin{align}
& \limsup_{n \ra \infty} \P{\inf_{1 \le k \le K} \sup_{t \in K(\theta_k)}  \mathbb{G}_n(\theta_k, t) \le \inf_{\theta \in \Theta_0} \sup_{t \in \mathcal{T}: |\psi_{b_n}(\theta, t)| \le c_n} \mathbb{G}_n(\theta, t) - \ve } \nonumber \\
& \qquad \le \limsup_{n \ra \infty}\P{\inf_{1 \le k \le K} \sup_{t \in K(\theta_k)}  \mathbb{G}_n(\theta_k, t) \le \inf_{1 \le k \le K} \sup_{t \in \mathcal{T}: |\psi_{b_n}(\theta_k, t)| \le c_n} \mathbb{G}_n(\theta_k, t) - \ve } & \nonumber\\
& \qquad = 0. \label{E:last2}
\end{align}
The union bound, \eqref{E:infG}, and \eqref{E:last2} imply 
\begin{align*}
\limsup_{n \ra \infty} \P{\sup_{t \in K(\theta_n^*)} \mathbb{G}_n (\theta_n^*, t) \le \inf_{\theta \in \Theta_0} \sup_{t \in \mathcal{T}: |\psi_{b_n}(\theta, t)| \le c_n} \mathbb{G}_n(\theta, t)  - 2\ve } \le \ve. 
\end{align*}
As $\ve$ was arbitrary, we must have 
\begin{align*}
\sup_{t \in K(\theta_n^*)} \mathbb{G}_n(\theta_n^*, t) \ge \inf_{\theta \in \Theta_0} \sup_{t \in \mathcal{T}: |\psi_{b_n}(\theta, t)| \le c_n} \mathbb{G}_n(\theta, t)  - o_p(1), 
\end{align*}
which, in view of the definition of $\theta_n^*$ in \eqref{E:longt}, implies that the second line of \eqref{E:longs} can be rewritten as 
\begin{align}
\inf_{\theta \in \Theta_0} \sup_{t \in K(\theta)} \mathbb{G}_n(\theta, t) + o_p(1). \label{E:mm}
\end{align}
The second line of \eqref{E:thm} then follows from the continuous mapping theorem applied to the empirical process $\mathbb{G}_n$ (see \cite{VW1996}, \S 2.10). 

We now take Assumption \ref{A:four} into account and prove the last claim of the theorem. Note that, in the process of establishing \eqref{E:thm} by showing that \eqref{E:ll} was upper bounded by \eqref{E:mm}, an inequality (instead of an equality, up to $o_p(1)$ term) is used twice. The first occurrence is in bounding \eqref{E:ll} by \eqref{E:lastline} and the second is in \eqref{E:longs}. Under Assumption \ref{A:four}.1, the second inequality is made into an equality by the following lemma. 

\begin{lemma}\label{L:gn}
Under Assumptions \ref{A:zero}, \ref{A:one}, \ref{A:two}, \ref{A:three}, and \ref{A:four}.1, one has
\[
\inf_{\theta \in \Theta_0} \sup_{t \in \mathcal{T}} (r_n v_n(\theta, t) +  r_n b_n \psi_{b_n} (\theta, t)) = \inf_{\theta \in \Theta_0} \sup_{t \in \tilde{K}(\theta)} \mathbb{G}_n(\theta, t) + o_p(1).
\]
\end{lemma}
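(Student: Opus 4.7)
The plan is to establish the equality in two directions. The ($\ge$) direction is immediate: for $\theta \in \Theta_0$ and $t \in \tilde{K}(\theta)$, we have $v(\theta, t) = 0$ and $\gamma(\theta, t) = 0$. The proof of Theorem \ref{T:one} established that $\psi_\delta(\theta, t)$ is monotonically nondecreasing in $\delta \in (0, \underline{\delta})$ and converges pointwise to $\gamma(\theta, t)$, so $\psi_{b_n}(\theta, t) \ge \gamma(\theta, t) = 0$. Combined with $\psi_{b_n} \le 0$ (take $\theta' = \theta$ in the defining infimum), this forces $\psi_{b_n}(\theta, t) = 0$ identically on $\tilde{K}(\theta)$ once $b_n < \underline{\delta}$. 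Hence on $\tilde{K}(\theta)$ we have $r_n v_n(\theta, t) + r_n b_n \psi_{b_n}(\theta, t) = \mathbb{G}_n(\theta, t)$, and taking supremum over $\tilde{K}(\theta) \subset \mathcal{T}$ and infimum over $\Theta_0$ yields LHS $\ge$ RHS with no $o_p(1)$ slack.

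For the ($\le$) direction, I refine the approximate-maximizer bookkeeping from \eqref{E:bnan}. The derivation there in fact gives, uniformly over $\theta \in \Theta_0$,
\[
r_n v(\theta, t_n(\theta)) + r_n b_n \psi_{b_n}(\theta, t_n(\theta)) \ge B_n - o(1) = O_p(1),
\]
and since $v \le 0$ and $\psi_{b_n} \le 0$ on $\Theta_0 \times \mathcal{T}$, each summand on the left is separately $O_p(1)$. This provides the new uniform bound $\sup_{\theta \in \Theta_0} |v(\theta, t_n(\theta))| = O_p(r_n^{-1})$ in addition to the earlier $\sup_{\theta \in \Theta_0} |\psi_{b_n}(\theta, t_n(\theta))| = O_p(r_n^{-1} b_n^{-1})$. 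Choose $\tilde{c}_n, c_n \to 0$ slowly enough that $r_n \tilde{c}_n \to \infty$ and $r_n b_n c_n \to \infty$, so that with probability tending to one, $t_n(\theta) \in S_n(\theta) := \{t \in \mathcal{T}: |v(\theta, t)| \le \tilde{c}_n, \, |\psi_{b_n}(\theta, t)| \le c_n\}$ for every $\theta \in \Theta_0$ simultaneously. Since $r_n v$ and $r_n b_n \psi_{b_n}$ are nonpositive, we obtain uniformly in $\theta \in \Theta_0$
\[
\sup_{t \in \mathcal{T}}[r_n v_n + r_n b_n \psi_{b_n}](\theta, t) \le \mathbb{G}_n(\theta, t_n(\theta)) + o(1) \le \sup_{t \in S_n(\theta)} \mathbb{G}_n(\theta, t) + o_p(1).
\]

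To complete ($\le$) I must show $\inf_\theta \sup_{t \in S_n(\theta)} \mathbb{G}_n(\theta, t) \le \inf_\theta \sup_{t \in \tilde{K}(\theta)} \mathbb{G}_n(\theta, t) + o_p(1)$. The key input is Lemma \ref{L:monotonedecreasing} applied to the nonpositive, $\mathcal{U}$-upper semicontinuous functions $\zeta_n(\theta, \cdot) = \min\{\psi_{b_n}(\theta, \cdot), v(\theta, \cdot)\}$, which decrease pointwise to $\min\{\gamma(\theta, \cdot), v(\theta, \cdot)\}$ whose zero set is precisely $\tilde{K}(\theta)$; here the $\mathcal{U}$-upper semicontinuity of $v(\theta, \cdot)$ is exactly Assumption \ref{A:four}.1, and upper semicontinuity is preserved under minimum. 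Lemma \ref{L:monotonedecreasing} then yields the pointwise Hausdorff convergence $d_{\rho_\mathcal{T}}(S_n(\theta), \tilde{K}(\theta)) \to 0$. The uniform-in-$\theta$ upgrade proceeds exactly as in the main proof: choose $\theta_n^* \in \Theta_0$ approximately minimizing $\inf_\theta \sup_{t \in \tilde{K}(\theta)} \mathbb{G}_n(\theta, t)$, use Lemma \ref{L:conv1} to cover the graph $\{(\theta, t): t \in \tilde{K}(\theta)\}$ by finitely many slices at $\theta_1, \ldots, \theta_K$, apply the pointwise convergence at each $\theta_k$, and bridge between $\theta_n^*$ and the nearest $\theta_k$ via asymptotic $\rho$-equicontinuity of $\mathbb{G}_n$ (Assumption \ref{A:one}). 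The main obstacle is precisely this uniform-in-$\theta$ passage — stochastic $\theta_n^*$ together with only pointwise Hausdorff control of the constraint sets — and it is resolved by the same finite-covering and equicontinuity tools already deployed for $K(\theta)$ in the main proof, with the single new ingredient being the additional coordinate $|v(\theta, t)| \le \tilde{c}_n$ in $S_n(\theta)$ that forces the limit set to shrink from $K(\theta)$ to $\tilde{K}(\theta)$.
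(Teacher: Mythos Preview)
Your proof is correct and follows essentially the same approach as the paper's own proof. The only cosmetic difference is that you track the two constraints $|v| \le \tilde c_n$ and $|\psi_{b_n}| \le c_n$ separately and feed $\zeta_n = \min\{\psi_{b_n}, v\}$ into Lemma~\ref{L:monotonedecreasing}, whereas the paper combines them into a single constraint via the sum $v + \psi_{b_n}$; since both $v$ and $\psi_{b_n}$ are nonpositive on $\Theta_0 \times \mathcal{T}$, the zero sets of $\min\{v,\gamma\}$ and $v+\gamma$ coincide with $\tilde K(\theta)$ and the two formulations are interchangeable.
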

\begin{proof}
We claim that 
\begin{align} \label{E:gn1}
\inf_{\theta \in \Theta_0} \sup_{t \in \mathcal{T}} (r_n v_n(\theta, t) +  r_n b_n \psi_{b_n} (\theta, t)) \le \inf_{\theta \in \Theta_0} \sup_{t \in \tilde{K}(\theta)} \mathbb{G}_n(\theta, t) + o_p(1).
\end{align}
This is sufficient to prove the lemma, because then one can rewrite the first line of \eqref{E:longs} (using \eqref{E:mm}) as
\begin{align*}
\inf_{\theta \in \Theta_0} \sup_{t \in \tilde{K}(\theta)} \mathbb{G}_n(\theta, t)  &= \inf_{\theta \in \Theta_0} \sup_{t \in \tilde{K}(\theta)} (r_n v_n(\theta, t) + r_n b_n \psi_{b_n} (\theta, t)) \le \inf_{\theta \in \Theta_0} \sup_{t \in \mathcal{T}} (r_n v_n(\theta,t) +  r_n b_n \psi_{b_n} (\theta, t))  \\
& \le \inf_{\theta \in \Theta_0} \sup_{t \in \tilde{K}(\theta)} \mathbb{G}_n(\theta, t)  + o_p(1)
\end{align*}
(noting that, when $\gamma$ vanishes, so must $ r_n b_n \psi_{b_n} (\theta, t)$ for all $b_n \le \underline{\delta}$). 

To establish \eqref{E:gn1}, rewrite its left hand side as 
\begin{align*}
\inf_{\theta \in \Theta_0} \sup_{t \in \mathcal{T}} (\mathbb{G}_n(\theta, t) + r_n v(\theta, t) +  r_n b_n \psi_{b_n} (\theta, t)).
\end{align*}
If we let $t_n(\theta)$ denote an approximate maximizer of the term above, for every $\theta \in \Theta_0$, an estimate similar to \eqref{E:bnan} holds, uniformly in $\theta \in \Theta_0$: 
\begin{align*}
0 & \ge r_n v(\theta, t_n(\theta)) + r_n b_n \psi_{b_n} (\theta, t_n(\theta)) + o(1) \ge A_n - \sup_{\theta \in \Theta_0} \sup_{t \in \mathcal{T}} \mathbb{G}_n(\theta, t).
\end{align*}
Dividing both sides by $r_n b_n$, one has 
\begin{align*}
\sup_{\theta \in \Theta_0} |v(\theta, t_n(\theta)) + \psi_{b_n}(\theta, t_n(\theta))| & = O( \sup_{\theta \in \Theta_0} | b_n^{-1} v(\theta, t_n(\theta)) + \psi_{b_n} (\theta, t_n(\theta)) |)  = O_p(r_n^{-1} b_n^{-1}). 
\end{align*}
As $v(\theta, \cdot) + \psi_{b_n}(\theta, \cdot)$ constitutes a sequence of upper semicontinuous and nonpositive functions decreasing pointwise to $v + \gamma$, $\tilde{K}(\theta) = \{t \in \mathcal{T}: v(\theta, t) + \gamma(\theta, t) = 0 \}$ must be nonempty for all $\theta \in \Theta_0$.
Moreover, one has 
\begin{align*}
\inf_{\theta \in \Theta_0} \sup_{t \in \mathcal{T}} ( r_n v_n(\theta, t) +  r_n b_n \psi_{b_n} (\theta, t)) & = \inf_{\theta \in \Theta_0} \sup_{\substack{ t :|v(\theta, t) +  r_n b_n \psi_{b_n} (\theta, t)| \le c_n}} (r_n v_n(\theta, t) + r_n b_n \psi_{b_n} (\theta, t)) +o_p(1)\\
& \le \inf_{\theta \in \Theta_0} \sup_{\substack{ t :|v(\theta, t) +  r_n b_n \psi_{b_n} (\theta, t)| \le c_n}} \mathbb{G}_n(\theta, t) + o_p(1).
\end{align*}
An argument exactly like the one following \eqref{E:longs}, where $ r_n b_n |\psi_{b_n} (\theta, t)|$ is replaced by $|v(\theta, t) +  r_n b_n \psi_{b_n} (\theta, t)|$, using upper semicontinuity of $v(\theta, \cdot)$ with Lemmas \ref{L:conv0}, \ref{L:conv1}, \ref{L:monotonedecreasing}, shows that the last line of the preceding display can be rewritten as 
$
\inf_{\theta \in \Theta_0} \sup_{t \in \tilde{K}(\theta)} \mathbb{G}_n (\theta, t) + o_p(1),
$
which proves \eqref{E:gn1}. 
\end{proof}

Finally, we consider Assumption \ref{A:four}.2. In the first possible case, we have the following:
\begin{lemma} \label{L:gn2}
Under Assumptions \ref{A:zero}, \ref{A:one}, \ref{A:two}, \ref{A:three}, and the first case of Assumption \ref{A:four}.2, there exists a sequence $(b_n) = o(1)$ such that $f_v(b_n) = o(r_n^{-1})$, $r_n^{-1} = o(b_n)$, and $\P{d_{\mathfrak{B}}(\htheta_n, \Theta_0) \le b_n} \ra 1$. 
\end{lemma}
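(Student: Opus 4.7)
My plan is to combine the two probabilistic rates available for $X_n := d_{\mathfrak{B}}(\htheta_n, \Theta_0)$---the consistency $X_n = o_p(1)$ from \eqref{E:consistentt} and the sharper rate $r_n f_v(X_n) = o_p(1)$ from the first case of Assumption \ref{A:four}.2---with the standard slow-sequence lemma (every $o_p(1)$ quantity lies below some nonrandom null sequence with probability tending to one) to craft a single deterministic envelope $b_n$ satisfying all four of the stated conditions.

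First I would pass to a nondecreasing right-continuous modification of $f_v$. The LHS of \eqref{E:fdeltaeqn} is already nondecreasing in $\delta$, so there is no loss in taking $f_v$ equal to this LHS; a right-continuous regularization then preserves the bound in Assumption \ref{A:two}, the property $f_v(\delta) = o(\delta)$, and (since it only enlarges $f_v$ by jump amounts, which can be absorbed into a vanishing perturbation of $b_n$) the rate $f_v(X_n) = o_p(r_n^{-1})$. Applying the slow-sequence lemma to $r_n f_v(X_n) = o_p(1)$ produces a nonrandom $\tau_n \downarrow 0$ with $\P{r_n f_v(X_n) \le \tau_n} \to 1$. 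I then set
\[
b_n := \sup\{\delta \ge 0 : f_v(\delta) \le \tau_n/r_n\}.
\]
Monotonicity makes the set an interval containing $b_n$ (right-continuity handling attainment of the supremum, except at jump points of $f_v$, which can be sidestepped by a negligible perturbation), so $f_v(b_n) \le \tau_n/r_n$, yielding $f_v(b_n) = o(r_n^{-1})$. On the high-probability event monotonicity of $f_v$ and $f_v(X_n) \le \tau_n/r_n$ give $X_n \le b_n$, and $b_n \downarrow 0$ because $\tau_n/r_n \downarrow 0$ (the degenerate $f_v \equiv 0$ near $0$ case being handled directly from the consistency slow sequence, since the rate hypothesis is then vacuous).

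The main obstacle is verifying $r_n^{-1} = o(b_n)$. Writing $\phi(\delta) := f_v(\delta)/\delta$, the defining relation for $b_n$ gives $r_n b_n = \tau_n/\phi(b_n)$, an indeterminate $0/0$ ratio since both $\tau_n$ and $\phi(b_n)$ vanish. To force divergence I exploit the one-sided freedom of the slow-sequence lemma: $\tau_n$ may always be slowed down---replaced by $\tau_n \vee \tilde\tau_n$ for any $\tilde\tau_n \downarrow 0$ of our choosing, with the probability bound preserved since $\{r_n f_v(X_n) \le \tau_n \vee \tilde\tau_n\} \supseteq \{r_n f_v(X_n) \le \tau_n\}$. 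A diagonal construction supplies the right $\tilde\tau_n$: for each fixed $\beta > 0$, the deterministic sequence $b_n(\beta) := \sup\{\delta : f_v(\delta) \le \beta/r_n\}$ satisfies $b_n(\beta) \downarrow 0$, hence $\phi(b_n(\beta)) \downarrow 0$, hence $r_n b_n(\beta) = \beta/\phi(b_n(\beta)) \to \infty$ as $n \to \infty$. Setting $\tilde\tau_n := 1/k$ on a block $[N_k, N_{k+1})$ with $N_k$ chosen so large that $r_n b_n(1/k) \ge k$ throughout that block produces a nonrandom $\tilde\tau_n \downarrow 0$ with $r_n b_n(\tilde\tau_n) \to \infty$. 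Replacing $\tau_n$ by $\tau_n \vee \tilde\tau_n$ simultaneously secures $r_n b_n \to \infty$, keeps $r_n f_v(b_n) \le \tau_n \vee \tilde\tau_n \to 0$, and preserves $\P{X_n \le b_n} \to 1$, delivering all four required properties of the sequence $(b_n)$ and completing the proof.
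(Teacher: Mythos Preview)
Your approach differs from the paper's in a meaningful way: you \emph{invert} $f_v$ to manufacture $b_n$ from a slow envelope $\tau_n$ on $r_n f_v(X_n)$, whereas the paper works in the other direction, applying $f_v$ to the quantiles $q_n(\ve)$ of $X_n$ and showing $r_n f_v(q_n(\ve)) \to 0$ directly for each fixed $\ve$ (monotonicity of $f_v$ gives $\P{r_n f_v(X_n) \ge r_n f_v(q_n(\ve))} \ge \ve$, which is then compared with $r_n f_v(X_n) = o_p(1)$). The paper then diagonalizes over $\ve$, uses right-continuity to bump $q_n(\ve_n)$ strictly upward to $q_n'$ (securing $\P{X_n \le q_n'} \ge 1-\ve_n$), and obtains $r_n^{-1} = o(b_n)$ simply by taking $b_n = \max\{q_n', \tilde b_n\}$ with the sequence $\tilde b_n$ already built before \eqref{E:lastline}.

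Your inversion route has a genuine gap that the paper's avoids. For nondecreasing right-continuous $f_v$, the set $\{\delta: f_v(\delta) \le \tau_n/r_n\}$ has supremum $b_n$ with $f_v(b_n) = f_v(b_n^+) \ge \tau_n/r_n$, not $\le$: right-continuity forces the level set to be $[0,b_n)$ precisely at a jump, and then $f_v(b_n)$ can exceed $\tau_n/r_n$ by an arbitrary amount, so $f_v(b_n) = o(r_n^{-1})$ does not follow. Your ``negligible perturbation'' does not fix this: shrinking $b_n$ recovers $f_v(b_n) \le \tau_n/r_n$ but forfeits $\P{X_n \le b_n} \to 1$ (since $X_n$ may sit anywhere in $[0,b_n)$), while enlarging $b_n$ only makes $f_v(b_n)$ bigger. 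The paper never inverts $f_v$ --- applying $f_v$ to $q_n(\ve)$ needs only monotonicity, and right-continuity is then used in the direction it actually helps (perturbing $q_n(\ve_n)$ \emph{upward} without disturbing $f_v$). Your diagonal construction for $r_n b_n \to \infty$ is salvageable (indeed $r_n b_n(\beta) \to \infty$ follows from $f_v(\delta)/\delta \to 0$ without the identity $r_n b_n = \tau_n/\phi(b_n)$, which itself fails at jumps), but the paper's max-with-$\tilde b_n$ device is simpler and immune to all of these issues.
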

\begin{proof}
By replacing $f_v$ with, say, $\delta \mapsto \inf_{\delta' \in [\delta, \overline{\delta})} f_v(\delta)$ for $\delta$ sufficiently small, we may assume that $f$ is nondecreasing for $\delta$ sufficiently small (note that this function is bounded above by $f$ but will still satisfy the bound in \eqref{E:fdeltaeqn}, because the left side of \eqref{E:fdeltaeqn} is itself decreasing). Verifying that a right-continuous map $f_v: \R \ra \R$ is still right continuous under this transformation is a straightforward exercise. 

For $\ve > 0$, let $q_n(\ve)$ denote the $(1 -\ve)^\text{th}$ quantile of the distribution of $d_{\mathfrak{B}}(\htheta_n, \Theta_0)$, i.e.\ $q_n(\ve) = \sup \{x: \P{d_{\mathfrak{B}}(\htheta_n, \Theta_0) \le x} < 1-\ve\}$. By the continuity from below of (outer) measures (\cite{VW1996}, \S 1.2), 
\begin{align} \label{E:gn2}
\P{r_nf_v(d_{\mathfrak{B}}(\htheta_n, \Theta_0)) \ge r_nf_v( q_n(\ve))}  \ge \P{d_{\mathfrak{B}}(\htheta_n, \Theta_0) \ge q_n(\ve)} \ge \ve.
\end{align}
 By the first case of Assumption \ref{A:four}.2, for any $\delta > 0$ one has 
\[
\limsup_{n \ra \infty} \P{r_nf_v(d_{\mathfrak{B}}(\htheta_n, \Theta_0)) \ge \delta } = 0.
\]
Comparison of the preceding display with \eqref{E:gn2}, using the fact that $\delta$ was arbitrary, implies that $r_n f_v(q_n(\ve))$ converges to $0$ irrespective of the choice of $\ve$. A similar argument shows that $q_n(\ve_n)$ converges to $0$ (under the standing assumptions, one must have $d_{\mathfrak{B}}(\htheta_n , \Theta_0) = o_p(1)$). 

Now pick a sequence $\ve_n \ra 0$ which satisfies 
\[
\lim_{n \ra \infty}r_n f_v(q_n(\ve_n)) = \lim_{n \ra \infty} q_n(\ve_n) = 0.
\]
For instance, let $\ve_1= \cdots = \ve_{n_1} = \tilde{\ve}_1$, where $\tilde{\ve}_1$ is arbitrary and $n_1$ is such that $\sup_{n > n_1} q_n(\tilde{\ve}_1) \le 1/2$ and $\sup_{n \ge n_1} r_n f_v(q_n(\tilde{\ve}_1)) \le 1/2$. Then, let $\ve_{n_1 }, \ldots, \ve_{n_2}$ be a nonincreasing sequence chosen to satisfy $\sup_{n_1 \le n \le n_2} q_n(\ve_n) \le 1/2$, $\sup_{n_1 \le n \le n_2} r_n f_v(q_n(\ve_n)) \le 1/2$, and $\ve_{n_2} \le \ve_{n_1} /2$. Then, repeat the process with $\frac{1}{4}$ instead of $\frac{1}{2}$, and iterate ad infinitum. 

The sequence $q_n(\ve_n)$ satisfies $q_n(\ve_n) = o(1)$ and $f_v(q_n(\ve_n)) = o(r_n^{-1})$. By right continuity of $f_v$, there is a sequence $q_n' > q_n(\ve_n)$ which has the same properties. By definition of $q_n(\ve)$, $\P{ d_{\mathfrak{B}}(\htheta_n , \Theta_0) \le q_n' } \ra 1$. Let $\tilde{b}_n = o(1)$ be a sequence that satisfies $\tilde{b}_n = o(1)$, $f_v(\tilde{b}_n) = o(r_n^{-1})$, and $r_n^{-1} = o(\tilde{b}_n)$, which exists by the argument preceding \eqref{E:lastline}. Let $b_n = \max\{q_n', \tilde{b}_n\}$. Then, it is straightforward to verify that $b_n$ satisfies the requirements of the lemma.
\end{proof}

Assume that the first case of Assumption \ref{A:four}.2 holds, along Assumptions \ref{A:zero}, \ref{A:one}, \ref{A:two}, \ref{A:three}, and \ref{A:four}.1. Then Lemma \ref{L:gn2} applies, and we may choose the sequence $b_n$ in \eqref{E:lastline} to be equal to the sequence provided thereby. Note that \eqref{E:consistentt} applies by the maintained assumptions. By definition of $\htheta_n$ and the fact that $\P{d_{\mathfrak{B}}(\htheta_n, \Theta_0) \le b_n} \ra 1$, this choice of $b_n$ turns the inequality that exists between \eqref{E:ll} and \eqref{E:lastline} into an equality, up to a term which is $o_p(1)$, and establishes the last claim of the the theorem. 

Finally, we may consider the second case of Assumption \ref{A:four}.2 along with the other assumptions. Given the result of Lemma \ref{L:gn}, it is sufficient to show that 
\begin{align} \label{E:conv1}
r_n \inf_{\theta \in \Theta} \sup_{t \in \mathcal{T}} v_n(\theta, t) \ge \inf_{\theta \in \Theta_0} \sup_{t \in \tilde{K}(\theta)} \mathbb{G}(\theta, t) + o_p(1).
\end{align}
By the arguments leading up to \eqref{E:ll}, for $\delta > 0$ sufficiently small, the left hand side of \eqref{E:conv1} can be rewritten as
\begin{align} \label{E:conv2}
 \inf_{\theta \in \Theta_0} \inf_{\theta' \in B(\theta, \delta)} \sup_{t \in \mathcal{T}} r_n (v_n(\theta, t) + v(\theta', t) - v(\theta, t)) + o_p(1).
\end{align}
Choose $\delta > 0$ so that $\Theta \cap B(\theta, \delta)$ is convex (in particular, such that $\delta < \underline{\delta}$) and $v(\cdot, t)$ is convex on $B(\theta, \delta)$ for all $\theta \in \Theta_0$ and $t \in \tilde{K}(\theta)$. Let $\alpha \in (0,1)$ be arbitrary. Then, for such $\theta, t$, and for $\theta' \in B(\theta, \delta)$, Assumption \ref{A:two} implies
\begin{align} \label{E:ps30}
v(\theta', t) - v(\theta, t) &\ge \alpha^{-1} (v(\alpha \theta' + (1-\alpha) \theta, t) - v(\theta, t))  =  D_{\theta; t}v(\theta' - \theta) + \alpha^{-1} o(\alpha).
\end{align}
where the $o(1)$ term is uniform in $\theta, t$, and $\theta' \in B(\theta, \delta)$. As $t$ is in $\tilde{K}(\theta)$, the last line is at least $o(1)$ in $\alpha$. Letting $\alpha$ tend to zero, one has that $v(\theta', t) - v(\theta, t) \ge 0$, whence \eqref{E:conv2} can be lower bounded by
\begin{align*}
\inf_{\theta \in \Theta_0} \inf_{\theta' \in B(\theta, \delta)} \sup_{t \in \tilde{K}(\theta)} r_n(v_n(\theta, t) + v(\theta', t) - v(\theta, t)) &\ge \inf_{\theta \in \Theta_0} \inf_{\theta' \in B(\theta, \delta)} \sup_{t \in \tilde{K}(\theta)} r_n v_n(\theta, t)  \\
& = \inf_{\theta \in \Theta_0} \sup_{t \in \tilde{K}(\theta)} \mathbb{G}_n(\theta, t),
\end{align*}
up to the $o_p(1)$ term. This establishes \eqref{E:conv1} and concludes the theorem for the case where $\Theta_0$ is nonempty.

Finally, if $\Theta_0$ is empty, then Assumption \ref{A:zero} implies that $\inf_{\theta \in \Theta} \ell(\theta) > 0$, and a standard calculation shows that  
\begin{align*}
\frac{r_n}{T_n } \le \frac{r_n}{ r_n \inf_{\theta \in \Theta} \ell(\theta) - \sup_{\theta \in \Theta} \sup_{t \in \mathcal{T}} |\mathbb{G}_n(\theta, t)|} \le \frac{1}{\inf_{\theta \in \Theta} \ell(\theta) - o_p(1)}.
\end{align*}
\end{proof}

\begin{proof}[Proof of Lemma \ref{C:two}]
\eqref{E:corr2eq0} is established by using \eqref{E:banachbound} to write:
\begin{align*}
\inf_{\theta \in \Theta_0} \inf_{h \in S_\theta} \sup_{t \in \mathcal{T}} t(\mathbb{W}_n(\theta) + \nabla m(h)) &\ge \inf_{\theta \in \Theta_0} \inf_{h \in \overline{T_\theta \Theta}} \sup_{t \in K(\theta)} t( \mathbb{W}_n(\theta) + \nabla m (h)) \\
& = \inf_{\theta \in \Theta_0} \inf_{h \in \overline{T_\theta \Theta}} \sup_{t \in K(\theta)} t( \mathbb{W}_n(\theta)) +t( \nabla m (h)) - t(0) \\
& \ge \inf_{\theta \in \Theta_0} \sup_{t \in K(\theta)}t (\mathbb{W}_n(\theta)),
\end{align*}
where the last line can be rewritten using \eqref{E:ps139} as 
\begin{align*}
t(\mathbb{W}_n(\theta)) & = t(r_n(m_n(\theta) - m(\theta)) = t(r_n (m_n(\theta) - m(\theta))) - t(0) + t(0) \\
& = r_n (t(m_n(\theta)) - t(0)) - r_n(t(m(\theta)) - t(0)) + t(0) \\
& = r_n( v_n(\theta, t) - v(\theta, t)) = \mathbb{G}_n(\theta, t). 
\end{align*}
\eqref{E:corr2eq} is a straightforward consequence of \eqref{E:corr2eq0}. On the other hand, if $S_\theta$ contains $T_\theta \Theta$ and $\mathcal{T}$ is the unit ball in $\mathfrak{X}^*$, 
\begin{align*}
U_n =  \inf_{\theta \in \Theta_0} \inf_{h \in S_\theta} \sup_{t \in \mathcal{T}} t(\mathbb{W}_n(\theta) + \nabla m(h)) \le \inf_{\theta \in \Theta_0} \inf_{h \in T_\theta \Theta} \sup_{t \in \mathcal{T}} t(\mathbb{W}_n(\theta)) + t(\nabla m(h)),
\end{align*}
with
\begin{align}
\inf_{h \in T_\theta \Theta} \sup_{t \in \mathcal{T}} t(\mathbb{W}_n(\theta)) + t(\nabla m(h)) & = - \sup_{h \in T_\theta \Theta} \inf_{t \in \mathcal{T}} - (t(\mathbb{W}_n(\theta)) + t(\nabla m(h)) ). \label{E:elijah}
\end{align}
The function $- (t(\mathbb{W}_n(\theta)) + t(\nabla m(h)) )$ is weak-* continuous in $t$ for all $h$ and $\theta$, and $\mathcal{T}$ is weak-* continuous (the Banach-Alaoglu theorem, e.g.\ \cite{C1994}). It is also continuous for $h \in T_\theta \Theta \subset \mathfrak{B}$, and affine in both $t$ and $h$. We can also verify convexity of $T_\theta \Theta$ with convexity of $\bigcup_{\substack{\lambda > 0 \\ \norm{\theta' - \theta}_{\mathfrak{B}} \le \delta }} \lambda (\theta' - \theta)$, which follows by Assumption \ref{A:zero} and e.g.\ a straightforward extension of \cite{Rock1970}, Corollary 2.6.3. 
Therefore, the Sion minimax theorem (\cite{Sion1958}) applies and we may rewrite \eqref{E:elijah} as 
\begin{align*}
- \inf_{t \in \mathcal{T}}\sup_{h \in T_\theta \Theta}  - (t(\mathbb{W}_n(\theta)) + t(\nabla m(h)) )& = \sup_{t \in \mathcal{T}} \inf_{h \in T_\theta \Theta} t(\mathbb{W}_n(\theta)) + t(\nabla m(h)) \\
& = \sup_{t \in \mathcal{T}} t(\mathbb{W}_n(\theta)) - \infty \cdot \one_{\exists h \in T_\theta \Theta: \, t(\nabla m(h)) < 0}.
\end{align*}
We claim that \eqref{E:banachbound} is a defining relation for $K(\theta)$, in the sense that $\{t: \exists h \in T_\theta \Theta: \, t(\nabla m(h)) < 0\} = K(\theta)^c$ (containment in the forward direction `$\subset$' is established by \eqref{E:banachbound}). In fact, if $t \in K(\theta)^c$ so that $\gamma(\theta, t) < 0$, there is some $\theta' \in \Theta$ arbitrarily close to $\theta$ so that $t(\nabla m (\theta' - \theta)) = D_{\theta; t} v (\theta' - \theta) < 0 $, whence by local convexity of $\Theta$ one has $h = \theta' - \theta \in T_\theta \Theta$ such that $t(\nabla m(h)) < 0$. It follows that \eqref{E:elijah} is in fact 
\begin{align*}
\sup_{t \in \mathcal{T}} t(\mathbb{W}_n(\theta)) - \infty \cdot \one_{t \in K(\theta)^c} = \sup_{t \in K(\theta)} t(\mathbb{W}_n(\theta)),
\end{align*}
which establishes the desired equivalence between $U_n$ and $\inf_{\theta \in \Theta} \sup_{t \in K(\theta)} \mathbb{G}_n(\theta, t)$.
\end{proof}

\begin{proof}[Proof of Theorem \ref{T:two}]
In this proof, we continue to suppress appeals to outer probability when the relevant random variables are asymptotically measurable. 
We make use of the following fact, which is a consequence of the Borel-Cantelli Lemma (\cite{VW1996}, Lemma 1.9.2): a sequence of asymptotically measurable random variables converges in probability to some Borel measurable limit if and only if all of its subsequences admit a further subsequence that converges almost surely to the same limit. Hence, to show the convergence $Z_n^* \overset{\mathrm{P}}{\rs} Z$, it is sufficient to show that every subsequence $(n_m)$ admits a further subsequence $(n'_m)$ on which $Z_n^* \overset{\mathrm{a.s.}}{\rs} Z$.

We prove the convergence in \eqref{E:bootstrapone}, noting that \eqref{E:bootstraptwo} follows from nearly identical (albeit simpler) arguments. Therefore, we assume that Assumption \ref{A:five'} holds in conjunction with Assumption \ref{A:five}.
We begin under the presumption that the first case of Assumption \ref{A:five'}.3 holds. Let $(n_m)$ be any particular increasing sequence. By Assumptions \ref{A:one} and \ref{A:five}, there exists a subsequence $(n_m')$ of $(n_m)$ on which 
\begin{align}
&\mathbb{G}_{n_m'}^*(\theta, t) \overset{\mathrm{a.s.}}{\rs} \mathbb{G}(\theta, t), \nonumber\\
& \mu_{n_m'} r_{n_m'}^{-1}\sup_{\substack{\theta \in \Theta \\ t \in \mathcal{T}}} |\mathbb{G}_{n_m'} (\theta, t) |,\tilde{\mu}_{n_m'} r_{n_m'}^{-1}\sup_{\substack{\theta \in \Theta \\ t \in \mathcal{T}}} |\mathbb{G}_{n_m'} (\theta, t) | \ca 0, \text{ and }\nonumber \\
& \lambda_{n_m'} \sup_{\substack{\theta \in \Theta \\ t \in \mathcal{T} }} |\hpsi_{n_m'}(\theta, t) - \psi_n(\theta, t)| \ca 0. \label{E:combinedas}
\end{align}
Let $Z_n^*$ denote the left-hand side of \eqref{E:bootstrapone} and $Z$ its right hand side. Then the theorem is proved if \eqref{E:combinedas} implies that $Z_{n_m'}^* \overset{\mathrm{a.s.}}{\rs} Z$. To this end, we note that the first line of \eqref{E:combinedas} and tightness of $\mathbb{G}$ imply that, almost surely, $\mathbb{G}_{n_m'}^* \rs \mathbb{G}$ (\cite{VW1996}, \S 1.12), where the left-hand side is regarded as conditional on the sample. Therefore, fixing attention on a particular point in the sample space for which all of the convergences in \eqref{E:combinedas} hold surely, it suffices to prove the following result for subsequences: 
\begin{prop} \label{P:one}
Let $\mathbb{G}'_n$ be a sequence of $L^\infty(\Theta \times \mathcal{T})$-valued random elements and $(\mu_n), (\tilde{\mu}_n),(\lambda_n)$ be diverging sequences of constants which satisfy $\mathbb{G}'_n(\theta, t) \rs \mathbb{G}(\theta, t)$, $\lambda_n = o(\mu_n)$, and 
\begin{align*}
\lim_{n \ra \infty} \max\{\mu_{n}, \tilde{\mu}_n\} \cdot r_n^{-1} \sup_{\substack{\theta \in \Theta \\ t \in \mathcal{T}}} |\mathbb{G}_n(\theta, t)|= 
\lim_{n \ra \infty} \lambda_{n} \sup_{\substack{\theta \in \Theta \\ t \in \mathcal{T} }} |\hpsi_{n}(\theta, t) - \psi_n(\theta, t)| = 0. 
\end{align*}
Then, making the assumptions of Theorem \ref{T:two} under the first case of Assumption \ref{A:five'}, 
\begin{align}
&\inf_{\theta \in \Theta} (\mu_n \ell_n(\theta) + \sup_{t \in \mathcal{T}} ( \mathbb{G}_n'(\theta, t) + \lambda_n \hpsi_n(\theta,t) + \tmu_n v_n(\theta, t) ) ) \rs \inf_{\theta \in \Theta_0} \sup_{t \in \tilde{K}(\theta)} \mathbb{G}(\theta, t). \label{E:lemmabootstrap} 
\end{align}
\end{prop}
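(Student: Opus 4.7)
My plan is to prove Proposition \ref{P:one}, from which Theorem \ref{T:two} follows by the subsequence-almost-sure reduction stated just before it. The argument runs in parallel with the proof of Theorem \ref{T:one}: it is a deterministic sandwich for the modified bootstrap objective against $\inf_{\theta\in\Theta_0}\sup_{t\in\tilde K(\theta)}\mathbb{G}(\theta,t)$, with $\hpsi_n$ and $v_n$ substituted for the population quantities $\gamma$ and $v$ in the two penalty terms.

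I would first localize the outer infimum. Since $\sup_t\mathbb{G}_n'=O_p(1)$, $\hpsi_n\le 0$ up to a uniform $o(1)$ error coming from $\lambda_n q_n=o(1)$, and $v_n\le \ell_n$, plugging any fixed $\theta^\ast\in\Theta_0$ into the objective yields $O_p(1)$; on the complement of any fixed neighborhood of $\Theta_0$, lower semicontinuity and compactness force $\inf\ell>0$, so $\mu_n\ell_n\to\infty$ uniformly and such $\theta$ cannot be near-optimal. For the upper bound, pick $\theta^\ast\in\Theta_0$ nearly attaining the right-hand side and restrict the inner supremum to $t\in\tilde K(\theta^\ast)$. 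On this set $\psi_n(\theta^\ast,\cdot)\uparrow\gamma(\theta^\ast,\cdot)=0$ and $v(\theta^\ast,\cdot)=0$, so $\lambda_n\hpsi_n=\lambda_n(\hpsi_n-\psi_n)+\lambda_n\psi_n\le o(1)$ and $\tilde\mu_n v_n=\tilde\mu_n(v_n-v)=\tilde\mu_n r_n^{-1}\mathbb{G}_n=o_p(1)$; combined with $\mathbb{G}_n'\rs\mathbb{G}$, $\mathcal U$-compactness of $\mathcal T$, and upper semicontinuity in $t$ (Lemma \ref{L:gsc}), this gives $\limsup Z_n^\ast\le\sup_{\tilde K(\theta^\ast)}\mathbb{G}(\theta^\ast,\cdot)+o_p(1)$.

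For the lower bound, let $\hat\theta_n$ be an approximate minimizer and $\hat t_n$ an approximate inner maximizer. From $Z_n^\ast=O_p(1)$ and the upper bound, $\mu_n\ell_n(\hat\theta_n)=O_p(1)$, so $\ell(\hat\theta_n)\to 0$ and $d_{\mathfrak B}(\hat\theta_n,\Theta_0)\to 0$ in probability. Rearranging the objective, one also gets $\lambda_n|\hpsi_n(\hat\theta_n,\hat t_n)|=O_p(1)$ and $\tilde\mu_n v_n(\hat\theta_n,\hat t_n)\le O_p(1)+\lambda_n C_\gamma$, which together with $\lambda_n q_n=o(1)$ and $\lambda_n=o(\mu_n)$ force $\psi_n(\hat\theta_n,\hat t_n)\to 0$ and $v_n(\hat\theta_n,\hat t_n)\to 0$, i.e.\ $\hat t_n$ is essentially in $\tilde K(\hat\theta_n)$. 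Under the first case of Assumption \ref{A:five'}, pass to a subsubsequence with $\hat\theta_{n_k}\to\theta^\dagger\in\Theta_0$; lower hemicontinuity of $\tilde K$ at $\theta^\dagger$ transports any $t^\dagger\in\tilde K(\theta^\dagger)$ to $t_k\in\tilde K(\hat\theta_{n_k})$ with $t_k\to t^\dagger$, on which both penalty terms vanish identically, and $\rho$-asymptotic equicontinuity of $\mathbb{G}_n'$ (as used after \eqref{E:gnpark}) yields $Z_n^\ast\ge\sup_{t\in\tilde K(\theta^\dagger)}\mathbb{G}(\theta^\dagger,t)-o_p(1)$. Under the second case, choose $\theta^\dagger_n\in\Theta_0$ with $d_{\mathfrak B}(\hat\theta_n,\theta^\dagger_n)\le f_\ell(\ell(\hat\theta_n))=O_p(f_\ell(c_\gamma\mu_n^{-1}\lambda_n))$ from the local identification bound (controlled using $\lambda_n|\hpsi_n|=O_p(1)$ to obtain an $O_p(\mu_n^{-1}\lambda_n)$ rate for $\ell(\hat\theta_n)$), and for each $t^\dagger\in\tilde K(\theta^\dagger_n)$ use the Lipschitz estimates on $\gamma$ and $v$ from Assumption \ref{A:five'} to control $\lambda_n\hpsi_n(\hat\theta_n,t^\dagger)$ and $\tilde\mu_n v_n(\hat\theta_n,t^\dagger)$; the tuning constraints $\lambda_n f_\ell(c_\gamma\mu_n^{-1}\lambda_n)=o(1)=\tilde\mu_n f_\ell(c_\gamma\mu_n^{-1}\lambda_n)$ are exactly calibrated to make these contributions $o_p(1)$.

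The main obstacle is handling the inner supremum uniformly: showing that a near-optimal bootstrap $t$ can be swapped for some $t'\in\tilde K(\theta')$ with $\theta'\in\Theta_0$ close to $\hat\theta_n$, without losing more than $o_p(1)$ in $\mathbb{G}_n'$. Assumption \ref{A:five'} is tailored exactly for this transfer, either through norm-lower hemicontinuity of $\tilde K$ paired with $\rho$-equicontinuity, or through explicit Lipschitz control of $\gamma$ and $v$ paired with the local identification rate $f_\ell$. The delicate bookkeeping is arranging the four rates $(q_n,\lambda_n,\mu_n,\tilde\mu_n)$ so that all tradeoffs are simultaneously feasible and uniform in $t$, which is where the tuning conditions of Theorem \ref{T:two} are used in full.
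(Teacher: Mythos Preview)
Your upper bound argument has the inequality running the wrong way. Fixing $\theta^\ast\in\Theta_0$ gives $Z_n^\ast \le \mu_n\ell_n(\theta^\ast)+\sup_{t\in\mathcal{T}}\bigl(\mathbb{G}_n'(\theta^\ast,t)+\lambda_n\hpsi_n(\theta^\ast,t)+\tilde\mu_n v_n(\theta^\ast,t)\bigr)$, but restricting the inner supremum from $\mathcal{T}$ to $\tilde K(\theta^\ast)$ makes that supremum \emph{smaller}, so it yields a lower bound on the bracketed term, not the upper bound you need. Showing that the penalties vanish \emph{on} $\tilde K(\theta^\ast)$ is the easy direction; the substantive step is to show that for $t\notin\tilde K(\theta^\ast)$ the penalties dominate $\mathbb{G}_n'$ uniformly in a $\rho$-sense, so that the unrestricted supremum over $\mathcal{T}$ is asymptotically attained on $\tilde K(\theta^\ast)$. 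The paper does this by introducing an approximate maximizer $t_n(\theta)$, deducing $|\psi_n(\theta,t_n(\theta))+v(\theta,t_n(\theta))|=O_p(\lambda_n^{-1}+\tilde\mu_n^{-1})$ from a comparison with the value at any $t\in\tilde K(\theta)$, and then invoking Lemma~\ref{L:monotonedecreasing} (a Dini-type argument using the monotone decrease $\psi_n+v\downarrow\gamma+v$ and upper semicontinuity) to collapse the level set $\{t:|\psi_n+v|\le c_n\}$ onto $\tilde K(\theta)$ in the Hausdorff-$\rho_\mathcal{T}$ sense. Your citation of Lemma~\ref{L:gsc} and compactness names ingredients but not this mechanism; also note $\psi_n\downarrow\gamma$, not $\psi_n\uparrow\gamma$.

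Your lower bound sketch is closer in spirit, but the subsequential transport argument is delicate because you are proving a weak-convergence statement rather than a pointwise limit, and passing to a subsubsequence with $\hat\theta_{n_k}\to\theta^\dagger$ does not by itself give a uniform $o_p(1)$ bound across the original sequence. The paper avoids this by using Lemma~\ref{L:hemicont}: lower hemicontinuity of $\tilde K$ on the compact $\Theta_0$ furnishes a finite set $\{\theta_1,\ldots,\theta_{k(\delta)}\}\subset\Theta_0$ and an open neighborhood $S_\delta\supset\Theta_0$ such that every $\theta\in S_\delta$ is $\delta$-close (in the sense of $\rho_\Theta+\sup_{\tilde K(\theta_k)}\rho_\mathcal{T}(\cdot,\tilde K(\theta))$) to some $\theta_k$, and then asymptotic $\rho$-equicontinuity of $\mathbb{G}_n'$ yields the uniform comparison in one step. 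Finally, Proposition~\ref{P:one} is stated only under the first case of Assumption~\ref{A:five'}; the second case is treated by a separate lemma in the paper, so your discussion of it belongs to the proof of Theorem~\ref{T:two} proper, not to this proposition.
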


\begin{proof}
Because $\mathbb{G}_n$ is assumed to be asymptotically uniformly $\rho$-equicontinuous, $\mathbb{G}$ is uniformly $\rho$-continuous, and $\mathbb{G}_n'$ is itself asymptotically uniformly $\rho$-equicontinuous (\cite{VW1996}, Theorem 1.5.7 and Addendum 1.5.8). 

The remaining proof resembles that of Theorem \ref{T:one} and makes use of results established therein. First, we note that
\begin{align}
\sup_{\theta \in \Theta} \mu_n |\ell_n(\theta) - \ell(\theta)| & = \sup_{\theta \in \Theta } \mu_n  |\sup_{t \in \mathcal{T}}v_n(\theta, t) - \sup_{t \in \mathcal{T}} v(\theta, t)| \le \sup_{\theta \in \Theta} \sup_{t \in \mathcal{T}} \mu_n |v_n (\theta, t) - v(\theta, t)| \nonumber \\
& =  \mu_n r_n^{-1} \sup_{\substack{ \theta \in \Theta \\ t \in \mathcal{T}}} |\mathbb{G}_n(\theta, t)| = o(1) \label{E:blemma1}
\end{align}
(a similar estimate holds for $\tilde{\mu}_n$). 

Now \eqref{E:blemma1} allows us to bound the left hand side of \eqref{E:lemmabootstrap} above by
\begin{align} \nonumber
&\inf_{\theta \in \Theta_0} (\mu_n \ell_n(\theta) + \sup_{t \in \mathcal{T}} (\mathbb{G}_n'(\theta, t) + \lambda_n \hpsi_n(\theta,t) + \tmu_n v_n(\theta, t) ) ) \\
& \qquad \le \inf_{\theta \in \Theta_0} \sup_{t \in \mathcal{T}}( \mathbb{G}_n'(\theta, t) + \lambda_n \hpsi_n(\theta,t) + \tmu_n v_n(\theta, t) ) + o(1) = O_p(1). \label{E:121}
\end{align}
We wish to show that this upper bound is an equality, up to $o_p(1)$ term. Let $\htheta'_n \in \Theta$ indicate an approximate minimizer of the left hand side of \eqref{E:lemmabootstrap} (up to some error that is, say, $o(1)$). Then, by the upper bound \eqref{E:121}, nonnegativity of $\ell$, and nonpositivity of $\psi_n$ and $\gamma$, 
\begin{align} \nonumber
 \mu_n \ell_n(\htheta'_n )& \le O_p(1)  - \sup_{t \in \mathcal{T}} (\mathbb{G}_n'(\htheta'_n, t) + \lambda_n \hpsi_n(\htheta'_n, t) + \tmu_n v_n(\htheta'_n, t))  \\
 & \le O_p(1) - \sup_{ t\in \mathcal{T}} \tmu_n v(\htheta'_n, t) + \sup_{\substack{\theta \in \Theta \\ t \in \mathcal{T}}} |\mathbb{G}_n'(\theta, t)| + \lambda_n |\hpsi_n (\theta, t)| \nonumber \\
 & \le O_p(1) - \tmu_n \ell(\htheta'_n) + \lambda_n \sup_{\substack{\theta \in \Theta \\ t \in \mathcal{T}}} |\psi_n(\theta, t)| \le O_p(1) + \lambda_n \sup_{\substack{\theta \in \Theta \\ t \in \mathcal{T}}} |\gamma(\theta, t)|. \label{E:examine}
\end{align}
Assumption \eqref{A:five} specifies that $\gamma$ is uniformly bounded over $\Theta \times \mathcal{T}$, and by construction $\lambda_n = o(\mu_n)$. Therefore, by dividing both sides of the previous display by $\mu_n$, one has $\ell_n(\htheta'_n ) = O_p( \mu_n^{-1} \lambda_n) $. By \eqref{E:blemma1}, $ \ell(\htheta'_n) = O_p( \mu_n^{-1} \lambda_n) + o(\mu_n^{-1}) = o_p(1)$. The $\mathcal{V}$-open neighborhoods of $\Theta_0$ are presumed to contain a $d_{\mathfrak{B}}$-open neighborhood of $\Theta_0$, so any $\mathcal{V}$-open neighborhood of $\Theta_0$ eventually contains $\htheta'_n$ with probability approaching $1$. Therefore, $\Theta$ may be replaced by any $\mathcal{V}$-open neighborhood $S_\delta$ of $\Theta_0$ in the left hand side of \eqref{E:lemmabootstrap} with only an $o_p(1)$ perturbation as a consequence:
\begin{align}
&\inf_{\theta \in \Theta} (\mu_n \ell_n(\theta) + \sup_{t \in \mathcal{T}} ( \mathbb{G}_n'(\theta, t) + \lambda_n \hpsi_n(\theta,t) + \tmu_n v_n(\theta, t) ) ) \nonumber\\
&\qquad = \inf_{\theta \in S_\delta} (\mu_n \ell_n(\theta) + \sup_{t \in \mathcal{T}} ( \mathbb{G}_n'(\theta, t) + \lambda_n \hpsi_n(\theta,t) + \tmu_n v_n(\theta, t) ) ) - o_p(1) \nonumber \\
& \qquad \ge \inf_{\theta \in S_\delta} \sup_{t \in \mathcal{T}} ( \mathbb{G}_n'(\theta, t) + \lambda_n \hpsi_n(\theta,t) + \tmu_n v_n(\theta, t) ) ) -o_p(1) \nonumber \\
& \qquad \ge \inf_{\theta \in S_\delta} \sup_{t \in \tilde{K}(\theta)} \mathbb{G}_n'(\theta, t) - o_p(1).
\label{E:rkoa3}
\end{align} 

We now argue using \eqref{E:rkoa3} that the left hand side of \eqref{E:lemmabootstrap} is bounded below, up to $o_p(1)$ term, by $\inf_{\theta \in \Theta_0} \sup_{t \in \tilde{K}(\theta)} \mathbb{G}_n(\theta, t)$. Let $\ve > 0$ be arbitrary. Asymptotic equicontinuity of $\mathbb{G}_n'$ implies that there is a $\delta > 0$ which has the property 
\begin{align} \label{E:rkoa1}
\limsup_{n \ra \infty} \P{\sup_{\rho((\theta, t), (\theta', t') ) < \delta} |\mathbb{G}_n'(\theta, t) - \mathbb{G}_n'(\theta', t')| > \ve} < \ve. 
\end{align}
Theorem \ref{T:one} implies that the correspondence $\theta \mapsto K(\theta)$ has nonempty values on $\Theta_0$, and the first case of Assumption \ref{A:five'} implies that this must also be true in a neighborhood of $\Theta_0$. Moreover, Lemma \ref{L:conv0} implies that $K$ is lower hemicontinuous with respect to the topology on $\mathcal{T}$ induced by $\rho_\mathcal{T}$. Therefore, Lemma \ref{L:hemicont} below implies that there is some open neighborhood $S_\delta$ of $\Theta_0$ which satisfies 
\begin{align} \nonumber
\sup_{\theta \in S_\delta} \inf_{\theta' \in \Theta_0} (d_{\mathfrak{B}} (\theta, \theta') + \sup_{t \in \tilde{K}(\theta')} \rho_\mathcal{T} (t, \tilde{K}(\theta)) ) < \delta,
\end{align}
and such that $\tilde{K}$ has nonempty values on $S_\delta$. By Lemma \ref{L:conv2} below, $d_{\mathfrak{B}}$ may be replaced in the previous display by $\rho_\Theta$. 
In particular, there are maps $\kappa: S_\delta \mapsto \Theta_0$ and $\tau: \bigsqcup_{\theta \in S_\delta}\{\theta\} \times \tilde{K}(\kappa(\theta)) \ra \mathcal{T}$ which have the property 
\begin{align} \nonumber
\tau(\theta, t) \in \tilde{K}(\theta) \text{ and }\sup_{\theta \in S_\delta} (\rho_\Theta (\theta, \kappa(\theta)) + \sup_{t \in \tilde{K}(\kappa(\theta))} \rho_\mathcal{T} (t, \tau(\theta, t)) ) < \delta. \nonumber
\end{align}
By definition of $\rho, \rho_\Theta$, and $\rho_{\mathcal{T}}$, we conclude that, for any $\theta' \in S_\delta$,
\begin{align}
\sup_{t \in \tilde{K}(\theta')} \mathbb{G}'_n(\theta', t) & \ge \sup_{t \in \tilde{K}(\kappa(\theta'))}  \mathbb{G}'_n(\theta', \tau(\theta', t)) \nonumber\\
& \ge \sup_{t \in \tilde{K}(\kappa(\theta'))} ( \mathbb{G}'_n(\kappa(\theta'),t) - \sup_{\rho((\theta, t), (\theta', t') ) < \delta} |\mathbb{G}_n'(\theta, t) - \mathbb{G}_n'(\theta', t')| )\nonumber\\
& \ge \inf_{\theta \in \Theta_0} \sup_{t \in \tilde{K}(\theta)} (\mathbb{G}'_n(\theta, t) -  \sup_{\rho((\theta, t), (\theta', t') ) < \delta} |\mathbb{G}_n'(\theta, t) - \mathbb{G}_n'(\theta', t')|).
 \label{E:rkoa2}
\end{align}
Plainly, the previous expression must also hold if one takes an outer infimum in $\theta' \in S_\delta$ on both sides. Using \eqref{E:rkoa1} and \eqref{E:rkoa2}, we write:
\begin{align}
& \limsup_{n \ra \infty} \P{ \inf_{\theta \in S_\delta} \sup_{t \in \tilde{K}(\theta)} \mathbb{G}_n'(\theta, t) < \inf_{\theta \in \Theta_0} \sup_{t \in \tilde{K}(\theta)} \mathbb{G}'_n(\theta, t) - \ve  } \nonumber\\ 
&\qquad \le \limsup_{n \ra \infty} \P{- \sup_{\rho((\theta, t), (\theta', t') ) < \delta} |\mathbb{G}_n'(\theta, t) - \mathbb{G}_n'(\theta', t')| < - \ve  } < \ve, \label{E:rkoa4}
\end{align}
and we note that by \eqref{E:rkoa3} that 
\begin{align*}
\limsup_{n \ra \infty} \P{\inf_{\theta \in \Theta} (\mu_n \ell_n(\theta) + \sup_{t \in \mathcal{T}} ( \mathbb{G}_n'(\theta, t) + \lambda_n \hpsi_n(\theta,t) + \tmu_n v_n(\theta, t) ) ) < \inf_{\theta \in S_\delta} \sup_{t \in \tilde{K}(\theta)} \mathbb{G}_n'(\theta, t) - \ve } = 0.
\end{align*}
As the choice of $\ve >0$, was arbitrary, the desired lower bound $\inf_{\theta \in \Theta_0} \sup_{t \in \tilde{K}(\theta)} \mathbb{G}'_n(\theta, t)$ for the left hand side of \eqref{E:lemmabootstrap} is proved with an application of the union bound. 

For $\theta \in \Theta_0$, let $t_n(\theta) \in \mathcal{T}$ be an approximate maximizer of the term inside the supremum on the left hand side of \eqref{E:lemmabootstrap} up to an error which is $o(1)$, uniformly in $\theta$. Then by construction, and by Theorem \ref{T:one} (namely, that $\tilde{K}(\theta)$ is nonempty), the following estimate holds with errors uniform in $\theta$:
\begin{align*}
&\mathbb{G}'_n(\theta, t_n(\theta)) + \lambda_n \hpsi_n(\theta, t_n(\theta)) + \tmu_n v_n(\theta, t_n(\theta)) \nonumber\\
& \qquad \ge \sup_{t \in \mathcal{T}}( \mathbb{G}_n'(\theta, t) + \lambda_n \hpsi_n (\theta, t) + \tmu_n v_n(\theta, t))) - o(1) \\
& \qquad \ge \sup_{t \in \tilde{K}(\theta)} (\mathbb{G}_n'(\theta, t) + \lambda_n \psi_n(\theta, t) + \tmu_n v(\theta, t))) - o_p(1) \\
& \qquad \ge  \sup_{t \in \tilde{K}(\theta)} (\mathbb{G}_n'(\theta, t) + \lambda_n \gamma(\theta, t) + \tmu_n v(\theta, t)) - o_p(1) = O_p(1). 
\end{align*}
Subtracting, say, $\sup_{\substack{\theta \in \Theta \\ t \in \mathcal{T}}} \mathbb{G}'_n(\theta, t)$ from both sides of the preceding display and considering the summands separately, we conclude that, uniformly in $\theta \in \Theta_0$,
\begin{align*}
0 &\ge \psi_n(\theta, t_n(\theta)) + v(\theta, t_n(\theta)) \ge \hpsi_n(\theta, t_n(\theta)) + v_n(\theta, t_n(\theta)) - o(\lambda_n^{-1}) - o(\tmu_n^{-1})\\
& \ge O_p(\lambda_n^{-1}) + O_p(\tmu_n^{-1}).
\end{align*}
Letting $c_n$ be any sequence of constants which converges to $0$ more slowly than $\lambda_n^{-1} + \tmu_n^{-1}$, one must then have 
\begin{align}
\sup_{t \in \tilde{K}(\theta)} \mathbb{G}'_n(\theta, t) & \le \sup_{t \in \mathcal{T}}( \mathbb{G}_n'(\theta, t) + \lambda_n \hpsi_n (\theta, t) + \tmu_n v_n(\theta, t))) +o(1) \nonumber\\
& = \sup_{t: |\psi_n(\theta, t) + v(\theta,t)| \le c_n} (\mathbb{G}'_n(\theta, t) + \lambda_n \hpsi_n(\theta, t) + \tmu_n v_n(\theta, t)) + o_p(1) \nonumber\\
&  \le \sup_{t: |\psi_n(\theta, t) + v(\theta,t)| \le c_n} \mathbb{G}'_n(\theta, t) + o_p(1) \label{E:gpeq}
\end{align}
where again the errors are uniform over $\theta \in \Theta_0$. This implies that the upper bound in \eqref{E:121} can be rewritten as
\begin{align*}
\inf_{\theta \in \Theta_0}   \sup_{t: |\psi_n(\theta, t) + v(\theta,t)| \le c_n} \mathbb{G}'_n(\theta, t) + o_p(1).
\end{align*}

The functions $\psi_n(\theta, \cdot)$ and $v(\theta, \cdot)$ are presumed to be lower semicontinuous (Assumptions \ref{A:four} and \ref{A:five}) and their sum decreases pointwise to $\gamma(\theta, \cdot) + v(\theta, \cdot)$, which is upper semicontinuous by Lemma \ref{L:gsc}. 
Moreover, Lemma \ref{L:monotonedecreasing} applies, and we may argue as we did in \eqref{E:gnpark} to conclude that the last line of the previous display is bounded above, up to $o_p(1)$ error, by $\sup_{t \in \tilde{K}(\theta)} \mathbb{G}'_n(\theta, t)$. By the continuous mapping theorem, for any finite subset $\{\theta_1, \ldots , \theta_K\}$ of $\Theta_0$, 
\begin{align*}
&\inf_{1 \le k \le K}  \sup_{t: |\psi_n(\theta, t) + v(\theta,t)| \le c_n} \mathbb{G}'_n(\theta, t)=  \inf_{1 \le k \le K }\sup_{t \in \tilde{K}(\theta_k)} \mathbb{G}'_n(\theta_k, t) + o_p(1). 
\end{align*}
Arguing as in the proof of Theorem \ref{T:one}, with Lemma \ref{L:conv1} in hand, one must have that the upper bound contained in the last line of \eqref{E:121} coincides with the lower bound established in \eqref{E:rkoa4}, up to $o_p(1)$ error:
\begin{align*}
\inf_{\theta \in \Theta_0} \sup_{t \in \tilde{K}(\theta)} \mathbb{G}'_n(\theta, t) & = \inf_{\theta \in \Theta_0}   \sup_{t: |\psi_n(\theta, t) + v(\theta,t)| \le c_n} \mathbb{G}'_n(\theta, t) + o_p(1).
\end{align*}
As these terms bracket the left hand side of \eqref{E:lemmabootstrap} below and above, this concludes.

\end{proof}

\begin{lemma}
\label{L:hemicont} Let $Z$ be totally bounded with respect to pseudometric $d_Z$ and let $(W, \mathcal{W})$ be a topological space with respect to which metric $d_W$ is continuous. Let the correspondence $K: (W, \mathcal{W}) \ra (Z, d_Z)$ be lower hemicontinuous with nonempty values at every point of some compact subset $S$ of $W$. Then, for every $\delta > 0$, there is a $\mathcal{W}$-open neighborhood $S_\delta$ of $S$ on which $K$ has nonempty values on $S_\delta$ and 
\begin{align*}
\sup_{w \in S_\delta} \inf_{w' \in S} (d_W(w,w') + \sup_{z \in K(w')} d_Z(z,K(w))) < \delta. 
\end{align*}
\end{lemma}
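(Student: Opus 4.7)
The plan is to produce a uniform local approximation by combining lower hemicontinuity of $K$ at each point of $S$ with total boundedness of $Z$, then patch these local approximations together using compactness of $S$.

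First, I would fix $\delta > 0$ and work pointwise. For each $s \in S$, since $K(s) \subset Z$ is totally bounded, it admits a finite $\delta/4$-net $z_1^s, \ldots, z_{m_s}^s \in K(s)$ (nonempty because $s \in S$). Lower hemicontinuity of $K$ at $s$ says that for each open set $V$ intersecting $K(s)$, there is an open neighborhood of $s$ on which $K$ still intersects $V$; apply this to the open balls $V_i = \{z : d_Z(z, z_i^s) < \delta/4\}$ to obtain an open neighborhood $U_i(s)$ of $s$ such that $K(w) \cap V_i \ne \emptyset$ for all $w \in U_i(s)$. Let
\[
U(s) = B_W(s, \delta/2) \cap \bigcap_{i=1}^{m_s} U_i(s),
\]
which is an open neighborhood of $s$ and on which $K$ has nonempty values.

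Next, I would verify the local approximation inequality on $U(s)$. For $w \in U(s)$ and arbitrary $z \in K(s)$, pick $z_i^s$ with $d_Z(z, z_i^s) < \delta/4$, then pick $z' \in K(w) \cap V_i$ so that $d_Z(z_i^s, z') < \delta/4$; by the triangle inequality, $d_Z(z, K(w)) \le d_Z(z, z') < \delta/2$. Taking the supremum over $z \in K(s)$ gives $\sup_{z \in K(s)} d_Z(z, K(w)) \le \delta/2$, and by construction $d_W(w, s) < \delta/2$, so the sum is strictly less than $\delta$.

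Finally, I would patch using compactness: the collection $\{U(s) : s \in S\}$ is an open cover of $S$, so extract a finite subcover $U(s_1), \ldots, U(s_K)$ and set $S_\delta = \bigcup_{k=1}^K U(s_k)$, which is an open neighborhood of $S$ on which $K$ has nonempty values. For any $w \in S_\delta$, choose $k$ with $w \in U(s_k)$; the pointwise bound from the previous paragraph applied at $s_k$ yields
\[
d_W(w, s_k) + \sup_{z \in K(s_k)} d_Z(z, K(w)) < \delta,
\]
so taking the infimum over $k$ proves the claim. The only point requiring care is the finite net inside $K(s)$: since $d_Z$ is a pseudometric rather than a metric, I would note that total boundedness passes to subsets and that lower hemicontinuity of $K$ is understood with respect to the (possibly non-Hausdorff) topology induced by $d_Z$, which is precisely what makes the $V_i$ argument go through without change.
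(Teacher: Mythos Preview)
Your proof is correct and is essentially identical to the paper's own argument: both take a finite $\delta/4$-net in $K(s)$, use lower hemicontinuity at $s$ to find a neighborhood on which $K(w)$ hits every net ball, intersect with a small $d_W$-ball around $s$, and then extract a finite subcover by compactness of $S$. The only cosmetic difference is that the paper shrinks $U(s)$ to have $d_W$-diameter $<\delta/2$ after the fact, whereas you build the intersection with $B_W(s,\delta/2)$ into the definition of $U(s)$ from the start.
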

\begin{proof}
Let $s \in S$ be arbitrary. $K(s)$ is a subset of the totally bounded space $Z$ and has a finite cover of $\delta /4$-open balls $\{B(z_1(s), \delta/4), \ldots B(z_{m(s)}(s), \delta /4)\}$, where $\{z_1(s), \ldots, z_{m(s)}(s)\} \subset K(s)$. By lower hemicontinuity of $K$ on $S$, there exists a neighborhood $U(s)$ of $s$ such that $K(w)$ intersects all of these $\delta/4$-balls whenever $w \in U(s)$. Then
\begin{align*}
\sup_{z \in K(s)} d(z, K(w)) \le \sup_{z \in K(s)} \inf_{1 \le m \le m(s)} d(z, z_m(s)) + d(z_m(s), K(w)) \le \delta/2.  
\end{align*}
Assume without loss of generality that $U(s)$ has $d_W$-diameter less than $\delta/2$; if not, intersect it with a $\mathcal{W}$-open set contained in a $\delta/4$ $d_W$-ball around $s$ (which exists, by $\mathcal{W}$-continuity of $d_W$). It suffices to take $S_\delta = \bigcup_{s \in S} U(s)$. 
\end{proof}

\begin{lemma} \label{L:conv2}
Let $(X,d)$ be a compact pseudometric space, and $\rho$ a pseudometric on $X$ which is continuous with respect to $d$, in the sense that $d(x_n, x) \ra 0$ implies $\rho(x_n,x) \ra 0$. Then for every $\ve > 0$, there exists $\delta(\ve) > 0$ such that if $d(x,y) < \delta(\ve)$, then $\rho(x,y) < \ve$.
\end{lemma}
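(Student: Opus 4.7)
The plan is to argue by contradiction and extract a convergent subsequence using the compactness of $(X,d)$. Suppose the conclusion fails, so that there exists some $\ve > 0$ with the property that for every $n \in \N$, one can find $x_n, y_n \in X$ satisfying $d(x_n, y_n) < 1/n$ but $\rho(x_n, y_n) \ge \ve$.

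Since $(X,d)$ is a compact pseudometric space, it is sequentially compact: indeed, the equivalence relation $x \sim y \Longleftrightarrow d(x,y) = 0$ induces a compact metric space $(X/\!\sim, d)$, which is sequentially compact, so any sequence in $X$ admits a $d$-convergent subsequence in $X$. Passing to such a subsequence, we may assume $d(x_n, x_\infty) \ra 0$ for some $x_\infty \in X$. By the triangle inequality,
\begin{align*}
d(y_n, x_\infty) \le d(y_n, x_n) + d(x_n, x_\infty) \ra 0
\end{align*}
as well.

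Now apply the continuity hypothesis: since $d(x_n, x_\infty) \ra 0$ and $d(y_n, x_\infty) \ra 0$, we have $\rho(x_n, x_\infty) \ra 0$ and $\rho(y_n, x_\infty) \ra 0$. Then the triangle inequality for $\rho$ gives
\begin{align*}
\rho(x_n, y_n) \le \rho(x_n, x_\infty) + \rho(x_\infty, y_n) \ra 0,
\end{align*}
which contradicts $\rho(x_n, y_n) \ge \ve$ for all $n$. Hence the uniform continuity claim must hold.

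The only mildly delicate point is the passage from compactness to sequential compactness in the pseudometric setting, but this is handled by quotienting to the induced metric space; every other step is routine use of the triangle inequality and the stated continuity hypothesis.
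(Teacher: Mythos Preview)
Your proof is correct and takes a genuinely different route from the paper's. The paper argues constructively: it first observes that every $\rho$-open set is $d$-open, then covers $X$ with finitely many $\rho$-balls $B_\rho(x_\ell,\ve/4)$, and for each $\ell$ covers the $d$-compact set $\overline{B_\rho(x_\ell,\ve/4)}$ by finitely many $d$-balls contained in $B_\rho(x_\ell,\ve/2)$; the desired $\delta(\ve)$ is then half the smallest radius appearing in this double cover, and a Lebesgue-number style argument finishes. Your argument is the standard sequential-compactness contradiction (adapted to pseudometrics via the quotient), which is shorter and more elementary. The paper's approach has the mild advantage of producing an explicit $\delta(\ve)$ from the covers, but since the lemma is only used qualitatively this buys nothing here; your route is cleaner for the stated purpose.
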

\begin{proof}
Let $\ve > 0$ be arbitrary. We will make use of the fact that every $\rho$-open set must also be open in the $d$-topology, which is easy to verify by contradiction.

By compactness, we may cover $X$ with a finite number of $\ve/4$ $\rho$-balls $B_\rho(x_1, \ve/4), \ldots, B_\rho(x_L, \ve/4)$. For each $\ell$, the set $B_\rho(x_\ell, \ve/2)$ is $d$-open and contains the $d$-closure $\overline{B_\rho(x_\ell, \ve/4)}$, which is $d$-compact. Therefore, for each $x \in   \overline{B_\rho(x_\ell, \ve/4)}$, there is some $\delta(x) > 0$ which satisfies $B_d(x, \delta(x)) \subset B_\rho(x_\ell, \ve/2)$. Again, by compactness, there is a finite subcover 
\[
B_d(x_{\ell, 1}, \delta(x_{\ell,1})/2),\ldots, B_d(x_{\ell, m_\ell}, \delta(x_{\ell,m_\ell})/2)
\]
of $\overline{B_\rho(x_\ell, \ve/4)}$.

Let $\delta(\ve) = \min_{\substack{1\le \ell \le L \\ 1 \le m \le m_\ell}} \delta(x_{\ell, m})/2$. Suppose that $x,y$ satisfy $d(x,y) < \delta(\ve)$. Let $x_{\ell, m}$ be such that $x \in B_d(x_{\ell,m}, \delta( x_{\ell, m})/2)$. Then $y \in B_d(x_{\ell,m} \delta(x_{\ell,m}))$, whence $x,y \in B_\rho(x_\ell, \ve/2)$ and $\rho(x,y) < \ve$, as desired.
\end{proof}

Inasmuch as one can apply Proposition \ref{P:one} with a particular realization of sequence $\mathbb{G}^*_{n'_m}$, conditional on the samples, substituted for $\mathbb{G}_n'$, one has $\sup_{\ell \in \mathrm{BL}_1} \mathrm{E}^*[ \ell(Z^*_{n'_m})] - \E{ \ell(Z)} \ca 0$, which by definition implies $Z^*_{n'_m} \overset{\mathrm{a.s.}}{\rs} Z_{n'_m}$ and concludes the theorem under the first part of Assumption \ref{A:five'}. 

Now we argue that \eqref{E:bootstrapone} holds under the second case of Assumption \ref{A:five'}. The argument follows the proof of Proposition \ref{P:one}. \eqref{E:121} still upper bounds \eqref{E:blemma1}, and a proof following that of Proposition \ref{P:one} shows that \eqref{E:121} is upper bounded up to $o_p(1)$ term by $\inf_{\theta \in \Theta_0} \sup_{t \in \tilde{K}(\theta)} \mathbb{G}'_n(\theta, t)$ (this argument did not depend on hemicontinuity of $\tilde{K}$). 
Thus, to conclude the proof of the theorem, it is sufficient to provide a bound in the other direction, which is accomplished next: 

\begin{lemma} \label{L:fount}
Make the assumptions of Proposition \ref{P:one} but assume that the second case of Assumption \ref{A:five'} holds instead of the first case, and suppose that $\lambda_n f_\ell(c_\gamma \mu_n^{-1} \lambda_n) = o(1) = \tmu_n f_\ell(c_\gamma \mu_n^{-1} \lambda_n)$ for some $c_\gamma > C_\gamma$. Then, in the setting of Proposition \ref{P:one}, one has 
\begin{align} \label{E:bootlemma4}
&\inf_{\theta \in \Theta} (\mu_n \ell_n(\theta) + \sup_{t \in \mathcal{T}} ( \mathbb{G}_n'(\theta, t) + \lambda_n \hpsi_n(\theta,t) + \tmu_n v_n(\theta, t) ) ) \ge \inf_{\theta \in \Theta_0} \sup_{t \in \tilde{K}(\theta)} \mathbb{G}(\theta, t)  - o_p(1). 
\end{align}
\end{lemma}
\begin{proof}
As in the proof of Proposition \ref{P:one}, let $\htheta_n'$ be an approximate minimizer of the left hand side of \eqref{E:lemmabootstrap} (up to an $o(1)$ error). Examination of the last line of \eqref{E:examine} shows us that 
\[
\ell(\htheta_n') = \ell_n(\htheta_n') + o(\mu_n^{-1}) \le O_p(\mu_n^{-1}) + C_\gamma (\mu_n + \tmu_n)^{-1} \lambda_n + o(\mu_n^{-1}) =O_p(\mu_n^{-1})+ C_\gamma \mu_n^{-1} \lambda_n .
\] 
In particular, by the second case of Assumption \ref{A:five'}, we may write
\begin{align*}
c_\gamma \mu_n^{-1} \lambda_n - \ell(\htheta_n') = (c_\gamma - (C_\gamma + O_p(\lambda_n^{-1}))) \mu_n^{-1} \lambda_n,
\end{align*}
so that one has $\P{ \ell(\htheta_n') < c_\gamma \mu_n^{-1} \lambda_n} \ra 1$. Therefore, one can rewrite the left hand side of \eqref{E:bootlemma4} as, up to $o_p(1)$ term, 
\begin{align} 
&\inf_{\theta: \ell(\theta) < c_\gamma \mu_n^{-1} \lambda_n } (\mu_n \ell_n(\theta) + \sup_{t \in \mathcal{T}} ( \mathbb{G}_n'(\theta, t) + \lambda_n \hpsi_n(\theta,t) + \tmu_n v_n(\theta, t) ) ) \nonumber \\
& \qquad \ge \inf_{\theta: f_\ell(\ell(\theta)) \le f_\ell(c_\gamma \mu_n^{-1} \lambda_n) } (\mu_n \ell_n(\theta) + \sup_{t \in \mathcal{T}} ( \mathbb{G}_n'(\theta, t) + \lambda_n \hpsi_n(\theta,t) + \tmu_n v_n(\theta, t) ) ) \nonumber \\
& \qquad \ge \inf_{\substack{\theta: d_\mathfrak{B}(\theta, \Theta_0) \le C_\ell f_\ell(c_\gamma \mu_n^{-1} \lambda_n) \\ \theta \in U}} \sup_{t \in \mathcal{T}} ( \mathbb{G}_n'(\theta, t) + \lambda_n \psi_n(\theta,t) + \tmu_n v_n(\theta, t) ) ) + o_p(1), \label{E:bootlemma2}
\end{align}
where we have let $C_\ell = \sup_{\theta \in \Theta} \frac{ d_{\mathfrak{B}}(\theta, \Theta_0)}{f_\ell(\ell(\theta))}$. 

Let the projection $\Pi: \Theta \ra \Theta_0$ map $\theta$ to a point in $\argmin_{\theta' \in \Theta_0} d_{\mathfrak{B}}(\theta, \theta')$ (which exists, by compactness). Let $\theta \in U$ be a point that satisfies $d_\mathfrak{B}(\theta, \Pi \theta) = d_\mathfrak{B}(\theta, \Theta_0) \le C_\ell f_\ell(c_\gamma \mu_n^{-1} \lambda_n)$. By Assumption \ref{A:five'}, we have 
\begin{align*}
\inf_{t \in \tilde{K}(\Pi\theta)} ( \lambda_n \psi_n(\theta, t) + \tmu_n v_n(\theta, t)) & \ge \inf_{t \in \tilde{K}(\Pi\theta)} (\lambda_n \gamma(\theta, t) + \tmu_n v_n(\theta, t)) \\
& \ge \inf_{t \in \tilde{K}(\Pi \theta)}( \lambda_n C_\gamma + \tmu_n C_v )C_\ell  f_\ell(c_\gamma \mu_n^{-1} \lambda_n) = o(1),
\end{align*}
where $C_\gamma = \inf_{\substack{\theta \in \Theta_0, \theta' \in U \\ t \in \tilde{K}(\theta)}} \frac{\gamma(\theta' , t) }{d_\mathfrak{B} (\theta , \theta') }$ and $C_v$ is defined similarly. As the last term is uniform in $\theta$, the last line of \eqref{E:bootlemma2} is lower bounded by 
\begin{align*}
&\inf_{\substack{\theta: d_\mathfrak{B}(\theta, \Theta_0) \le C_\ell f_\ell(c_\gamma \mu_n^{-1} \lambda_n)}}\sup_{t \in \tilde{K}(\Pi(\theta))} ( \mathbb{G}_n'(\theta, t) + \lambda_n \psi_n(\theta,t) + \tmu_n v_n(\theta, t) )  )\\
&\qquad \ge \inf_{\substack{\theta: d_\mathfrak{B}(\theta, \Theta_0) \le C_\ell f_\ell(c_\gamma \mu_n^{-1} \lambda_n)}}\sup_{t \in \tilde{K}(\Pi(\theta))}  \mathbb{G}_n'(\theta, t) - o(1) \\
& \qquad \ge \inf_{\theta \in \Theta_0}\sup_{t \in \tilde{K}(\theta)}  \mathbb{G}_n'(\theta, t) - \sup_{d_{\mathfrak{B}}(\theta, \theta') \le C_\ell f_\ell(c_\gamma \mu_n^{-1} \lambda_n) }|\mathbb{G}_n'(\theta, t) - \mathbb{G}_n'(\theta', t')|- o(1) \\
& \qquad = \inf_{\theta \in \Theta_0} \sup_{t \in \tilde{K}(\theta)} \mathbb{G}'_n(\theta, t) - o_p(1),
\end{align*}
which was the desired bound.
\end{proof}

\end{proof}

\begin{proof}[Proof of Corollary \ref{C:three}]
We consider \eqref{E:bs1}, noting that one can make the same arguments to establish \eqref{E:bs2}. Let 
\[
Z_n^* = \mu_n \ell_n(\htheta_n) + \sup_{t \in \mathcal{T}} (\mathbb{G}_n^*(\htheta_n, t) + \lambda_n \hpsi_n(\htheta_n, t)) - \inf_{\theta \in \Theta} (\mu_n \ell_n(\theta) + (\sup_{t \in \mathcal{T}} (\mathbb{G}_n^*(\theta, t) + \lambda_n \hpsi_n(\theta,t))). 
\]
Then plainly one has $Z_n^* \ge 0$, and 
\begin{align*}
\sup_{t \in \mathcal{T}} (\mathbb{G}_n^*(\htheta_n, t) + \lambda_n \hpsi_n(\htheta_n, t)) - Z_n^* &= \inf_{\theta \in \Theta} (\mu_n \ell_n(\theta) + (\sup_{t \in \mathcal{T}} (\mathbb{G}_n^*(\theta, t) + \lambda_n \hpsi_n(\theta,t))) - \mu_n \ell_n(\htheta_n) \\
& = \inf_{\theta \in \Theta} (\mu_n \ell_n(\theta) + (\sup_{t \in \mathcal{T}} \mathbb{G}_n^*(\theta, t) + \lambda_n \hpsi_n(\theta,t))) - \mu_n O_p(r_n^{-1}). 
\end{align*}
As $\mu_n r_n^{-1} = o(1)$, \eqref{E:bootstraptwo} now implies \eqref{E:bs1} (an argument like the one preceding Proposition \ref{P:one} applies).

We now impose Assumption \ref{A:corr} and show that 
\begin{align} \label{E:cor11}
\mathrm{P}\left(\sup_{t \in \mathcal{T}} \mathbb{G}_n^*(\htheta_n , t) + \lambda_n \hpsi_n (\htheta_n , t) > \sup_{t \in K(\theta_0)} \mathbb{G}_n^*(\theta_0, t) + \ve \right) \cp 0
\end{align}
for every $\ve > 0$. By the arguments made at the beginning of the proof of Theorem \ref{T:one}, $\htheta_n \cp \theta_0$. To prove \eqref{E:cor11}, it is sufficient to show that every subsequence $(n_m)$ admits a further subsequence $(n_m')$ on which \eqref{E:cor11} holds almost surely (with $n$ replaced by $n_m'$). By the arguments which precede Proposition \ref{P:one}, it is sufficient to show that \eqref{E:cor11} is true when the convergence in probability is replaced with convergence, $\mathbb{G}_n^*$ is replaced by an asymptotically uniformly $\rho$-equicontinuous sequence $\mathbb{G}_n' \rs \mathbb{G}$, $\htheta_n \ra \theta$, and the uniform convergence of $\hpsi_n$ to $\psi$ in Proposition \ref{P:one} holds (uniform convergence of $\tilde{\mu}_n (v_n - v)$ may also be assumed when \eqref{E:bs2} is in view). 

Make the substitutions above, and let $U_t \supset K(\theta_0)$ be any open set in $\mathcal{T}$. By compactness, $\mathcal{N}$ must contain some product neighborhood of $\{\theta_0\} \times K(\theta_0)$ (see the proof of Lemma \ref{L:conv0}). By upper semicontinuity of $\gamma$ in $\mathcal{T}$ and compactness of $\mathcal{T}$, there must be some neighborhood $U_\theta$ of $\Theta_0$ for which $\gamma(\theta, t)$ vanishes for $\theta \in U_\theta$ only if $t \in U_t$. The union bound implies that an upper bound for \eqref{E:cor11} is
\begin{align} 
&\limsup_{n \ra \infty} \P{\sup_{t \in U_t}( \mathbb{G}_n'(\htheta_n , t) + \lambda_n \hpsi_n (\htheta_n , t)) > \sup_{t \in K(\theta_0)} \mathbb{G}_n'(\theta_0, t) + \ve } \nonumber \\
&\qquad  + \limsup_{n \ra \infty} \P{\sup_{t \in \mathcal{T} \setminus U_t} (\mathbb{G}_n'(\htheta_n , t) + \lambda_n \hpsi_n (\htheta_n , t)) > \sup_{t \in K(\theta_0)} \mathbb{G}_n'(\theta_0, t) + \ve}\label{E:cor12}
\end{align}
By upper semicontinuity of $\gamma$, $\sup_{\substack{t \in \mathcal{T} \setminus U_t \\ \theta \in U_\theta}} \gamma(\theta, t) < 0$. Moreover, the assumed convergence $\htheta_n \ra \theta$, the uniform convergence of $\hpsi_n$ to $\psi_n$, and the uniform convergence of $\psi_n$ to $\gamma$, the term $\lambda_n \hpsi_n(\htheta_n, t)$ is decreasing in the limit to $-\infty$, whence the second line of \eqref{E:cor12} may be omitted.

Let $\eta > 0$. As a consequence of asymptotic equicontinuity of $\mathbb{G}_n'$, there is some $\delta_\ve$ for which $\limsup_{n \ra \infty} \P{\sup_{\rho((\theta, t), (\theta', t')) < \delta_\ve} |\mathbb{G}'_n(\theta, t)  - \mathbb{G}'_n(\theta', t')  | > \ve} < \eta$. Moreover, by Lemma \ref{L:conv1} and compactness of $K(\theta_0)$, we may take $U_t$ such that $\sup_{t' \in U_t} \inf_{t \in K(\theta_0)} \rho_{\mathcal{T}}(t, t') < \delta_\ve / 2$. As Lemma \ref{L:conv1} also implies that $\rho_{\Theta}(\htheta_n , \theta_0) \ra 0$, with such a choice of $U_t$, the first line of \eqref{E:cor12} is less than $\eta$. As $U_t$, whence $\eta$, were arbitrary, \eqref{E:cor12} indeed evaluates to $0$, and \eqref{E:cor11} is proved. A similar argument applies with the addition of $\tilde{\mu}_n v_n(\htheta_n, t)$ to the term in \eqref{E:cor11} when $v$, whence $\gamma + v$, are upper semicontinuous on $\mathcal{N}$. 
  
We now prove that \eqref{E:bs1} holds with $Z_n^* = 0$. It is sufficient to prove that every sequence $(n_m)$ admits a further subsequence $(n_m')$ on which 
\begin{align}
\sup_{\ell \in \mathrm{BL}_1} \mathbb{E}^*[\ell(\sup_{t \in \mathcal{T}} (\mathbb{G}_{n_m'}^*(\htheta_n, t) + \lambda_{n_m'} \hpsi_{n_m'}(\htheta_{n_m'}, t)  ))] - \mathrm{E}[\ell(\inf_{\theta \in \Theta_0} \sup_{t \in K(\theta)} \mathbb{G}(\theta, t) )] \ca 0, \label{E:corr13}
\end{align}
where $\mathrm{E}^*$ denotes an expectation taken conditional on the sample. By Theorem 1.12.2 and Addendum 1.12.3 of \cite{VW1996}, to prove \eqref{E:corr13}, it is sufficient to prove that every subsequence admits a further subsequence on which \eqref{E:corr13} holds for a single $\ell \in \mathrm{BL}_1$ mapping $\R \ra \R$, which can be taken to be of bounded variation (note that the collection in 1.12.3 is countable, so if limit in \eqref{E:corr13} is $0$ almost surely for every $\ell$, then almost surely, the limit is $0$ for all $\ell$ in the collection). Fix such a subsequence, take such an $\ell \in \mathrm{BL}_1$, and by absolute continuity write $\ell = \ell_0 - \ell_1$, where $\ell_0$ and $\ell_1$ are nondecreasing, bounded, and $1$-Lipschitz (e.g.\ \cite{SS2005}, \S3). By \eqref{E:bs1} and \eqref{E:cor11}, respectively, one has a subsequence $(n_m')$ for which
\small
\begin{align*}
\liminf_{m \ra \infty} \mathbb{E}^*[\ell_0(\sup_{t \in \mathcal{T}}( \mathbb{G}_{n_m'}^*(\htheta_n, t) + \lambda_{n_m'} \hpsi_{n_m'}(\htheta_{n_m'}, t) ))] &\overset{\mathrm{a.s.}}{\ge} \liminf_{m \ra \infty} \mathbb{E}^*[\ell_0(\sup_{t \in \mathcal{T}} (\mathbb{G}_{n_m'}^*(\htheta_n, t) + \lambda_{n_m'} \hpsi_{n_m'}(\htheta_{n_m'}, t)) - Z_n^* )] \\
& \overset{\mathrm{a.s.}}{=} \E{\ell_0 (\inf_{\theta \in \Theta_0} \sup_{t \in K(\theta)} \mathbb{G}(\theta, t) )}
\end{align*}
\normalsize
and 
\begin{align*}
\limsup_{m \ra \infty} \mathbb{E}^*[\ell_1(\sup_{t \in \mathcal{T}} (\mathbb{G}_{n_m'}^*(\htheta_n, t) + \lambda_{n_m'} \hpsi_{n_m'}(\htheta_{n_m'}, t) ))] &\overset{\mathrm{a.s.}}{\le} \limsup_{m \ra \infty} \mathbb{E}^*[\ell_1(\sup_{t \in K(\theta_0)} \mathbb{G}_{n_m'}^*(\theta_0, t) )] \\
& \overset{\mathrm{a.s.}}{=}\E{\ell_1 (\inf_{\theta \in \Theta_0} \sup_{t \in K(\theta)} \mathbb{G}(\theta, t) )}
\end{align*}
(\eqref{E:cor11} converges to $0$ almost surely for every fixed $\ve$, so by intersecting over a countable sequence of $\ve$ converging to $0$ and passing to an appropriate subsequence, we may ensure that it converges to $0$ for every fixed $\ve$).
By taking the difference of the previous two displays, one has that the $\liminf$ of the left hand side of \eqref{E:corr13} is at least $0$, almost surely. By further refining the sequence $(n_m')$ and reversing the roles of \eqref{E:bs1} and \eqref{E:cor11}, one finds that the $\limsup$ of the left hand side of \eqref{E:corr13} is at most $1$, almost surely, which concludes in the case of \eqref{E:bs1} The proof concerning \eqref{E:bs2} with $Z_n^* = 0$ is similar.  
\end{proof}

\begin{proof}[Proof of Lemma \ref{L:psin}]
The proof follows by writing, using Assumptions \ref{A:one} and \ref{A:two},
\begin{align*}
\sup_{\substack{\theta \in \Theta \\t \in \mathcal{T}}} | \hpsi_n(\theta, t) - \psi_n(\theta, t)| & \le \sup_{\substack{\theta, \theta' \in \Theta \\ t \in \mathcal{T} \\ \norm{\theta' - \theta}_{\mathfrak{B}} \le \delta_n}} |
\frac{v_n(\theta' ,t) - v_n(\theta, t) - D_{\theta; t} v(\theta' - \theta)}{\delta_n}| \\
& \le   \sup_{\substack{\theta, \theta' \in \Theta \\ t \in \mathcal{T} \\ \norm{\theta' - \theta}_{\mathfrak{B}} \le \delta_n}} \frac{|v(\theta', t) - v(\theta, t) - D_{\theta,t} v(\theta' - \theta)| }{\delta_n}  + 2\sup_{\substack{\theta \in \Theta \\ t \in \mathcal{T}}} \frac{|v_n(\theta, t) - v(\theta, t)|}{\delta_n} \\
& = O( \frac{f_v(\delta_n)}{\delta_n} ) + O_p(\frac{r_n^{-1}}{\delta_n}). 
\end{align*}

\end{proof}

\begin{rem} \label{R:one}
Let $\tilde{\psi}_n(\theta, t) = \inf_{\substack{\theta' \in \Theta  \\ \norm{\theta - \theta'}_{\mathfrak{B}}} } \frac{v_n(\theta', t) - v_n(\theta, t)}{\delta_n} + \nu_n \frac{(\norm{\theta' - \theta}_{\mathfrak{B}} - \delta_n)_+}{ \delta_n }$ for some sequence $(\nu_n)$ satisfying $\liminf_{n \ra \infty} \nu_n > L$, and suppose that one has $|v(\theta', t) - v(\theta, t)| \le L \norm{\theta'- \theta}_{\mathfrak{B}}$ for some Lipschitz constant $L$. Then 
\begin{align}
\label{E:spr}
\tilde{\psi}_n(\theta, t) &\le \hpsi_n(\theta,t) \le \inf_{\substack{\theta' \in \Theta \\ \norm{ \theta' - \theta}_{\mathfrak{B}} \le \delta_n}} \frac{v(\theta', t) - v(\theta, t)}{\delta_n} + O_p(r_n^{-1} \delta_n^{-1}).\end{align}
On the other hand, if $\Theta$ is convex and $\norm{\theta' - \theta}_{\mathfrak{B}} \ge \delta_n$, one can let $\tilde{\theta} = \theta + \delta_n\frac{\theta' + \theta}{\norm{\theta' + \theta}_{\mathfrak{B}}}$ to write 
\begin{align*}
v(\theta', t) - v(\theta, t) + \nu_n ( \norm{\theta' - \theta}_{\mathfrak{B}} - \delta_n)_+ & \ge v(\tilde{\theta}, t) - v(\theta, t) - |v(\theta', t) - v(\tilde{\theta},t)| + \nu_n\norm{ \theta' - \tilde{\theta}}_{\mathfrak{B}} \\
& \ge v(\tilde{\theta},t) - v(\theta, t) + (\nu_n - L) \norm{\theta' - \tilde{\theta}}_{\mathfrak{B}} \\
& \ge v(\tilde{\theta}, t) - v(\theta, t) - o(1), 
\end{align*}
where the $o(1)$ term is nonzero only if $L > \nu_n$. As $\norm{\tilde{\theta} - \theta}_{\mathfrak{B}} = \delta_n$, the right hand side of \eqref{E:spr} is bounded above by $\hpsi_n(\theta, t) + O_p(r_n^{-1}\delta_n^{-1})$, and one can use $\tilde{\psi}_n$ in place of $\hpsi_n$ whenever $\Theta$ is convex and $v$ is uniformly Lipschitz in its first coordinate. 
\end{rem}

\begin{proof}[Proof of Lemma \ref{L:duality}] For most functional analytic results, we refer to \cite{C1994}.
\begin{enumerate}[itemindent=56pt, leftmargin=0mm]
\item[(1)$\Rightarrow$(2):]  Suppose that $a \mapsto C(a) \cap B_1$ is weakly upper-hemicontinuous at $a \in \mathcal{A}$. Let $V \subset \mathfrak{X}^*$ be open in the strong topology and intersect $C(a)^\circ$. Then, for some $y \in C(a)^\circ$ and $\ve > 0$, the open ball $B(y, \ve)$ is contained in $V$. 

Note that the set $V = y^{-1}(-\infty, \ve/2) \supset y^{-1}(-\infty, 0] + y^{-1}(-\ve/2, \ve/2)$ strongly contains $C(a) \cap B_1$. By supposition, there is some open neighborhood $U$ of $a$ such that $C(a') \cap B_1 \subset V$ whenever $a' \in U$. Moreover, if $C(a') \cap B_1 \subset V$, then we have $y(c) \le \ve/2 \norm{c}$ for all $c \in C$. It follows from Lemma \ref{L:positive} below that $d(y, C(a')^\circ) \le \ve/2$. Therefore, for all $a' \in U$, $C(a')^\circ \cap V \supset C(a')^\circ \cap B(y, \ve) \neq \emptyset$.  

\item[(2)$\Rightarrow$(3):] Suppose that $a \mapsto C(a)^\circ$ is strongly lower hemicontinuous at $a$, and let $V$ be open in the strong topology and such that $V \cap (C(a)^\circ \cap B_1^*)$ is nonempty. Let $\mathrm{Int}$ refer to the interior relative to the norm topology. Because $C(a)^\circ$ is a convex cone, $V \cap (C(a)^\circ \cap \mathrm{Int}(B_1^*))$ is also nonempty. Hence, $V\cap \mathrm{Int}(B_1^*)$ intersects $C(a)^\circ$, and by (2) there is an open neighborhood $U$ of $a$ such that, if $a' \in U$, $C(a')^\circ$ (whence, certainly $C(a')^\circ \cap B_1^*$) must intersect $V \cap \mathrm{Int}(B_1^*)$ and thus $V$. 

\item[(3)$\Rightarrow$(1):] We prove the contrapositive. Suppose that $a \mapsto C(a)\cap B_1$ fails to be weakly upper-hemicontinuous at $a$, so that there exists some net $a_i$ converging to $a$ (indexed by neighborhoods of $a$ and ordered by reverse inclusion) and some weakly open neighborhood $D$ of $0$ for which $C(a)\cap B_1 \not \subset \bigcup_{c \in C(a) \cap B_1} (c + D)$, for all $i$. Without loss of generality, we may assume that $D = \bigcap_{j = 1}^J y_j^{-1}(-\ve, \ve)$ for some choice of functionals $y_1, \ldots, y_J \in \mathfrak{X}^*$.

Let $\iota: \mathfrak{X} \hookrightarrow  \mathfrak{X}^{**}$ be the canonical embedding. By the Banach-Alaoglu theorem, there is a subnet $a_{i'}$ such that, for all $i'$, there is some $x_{i'} \in \iota(C(a_{i'}) \cap B_1  \setminus \bigcup_{c \in C(a) \cap B_1} (c + D))$ converging (weak-*) to some $x$ in the closed unit ball of $\mathfrak{X}^{**}$. If we define the weak-* open set $\tilde{D} = \bigcap_{j = 1}^J \{z \in \mathfrak{X}^{**}: z(y_j) \in (-\ve, \ve)\}$, the triangle inequality implies that $(x + \tilde{D}/2) \cap (\iota(C(a) \cap B_1)) = \emptyset$. Moreover, $(\mathfrak{X}^*, \sigma(\mathfrak{X}^*, \mathfrak{X}^{**})),  (\mathfrak{X}^{**}, \sigma(\mathfrak{X}^{**}, \mathfrak{X}^{*}))$ form a dual pair by \cite{C1994}, V.1.3. Hence, by the geometric Hahn-Banach theorem for locally convex spaces, there is some $y'$ in the unit ball of $\mathfrak{X}^{*}$ and $\alpha, \alpha' \in \R$ such that $\inner{y', \iota(c)} \le \alpha < \alpha' \le \inner{y', x}$ for all $c \in C(a) \cap B_1$. 

As $C$ is a cone, we must have $\alpha \ge 0$. We have $y'(c) = \iota(c)(y') \le \alpha$ for all $c \in C(a) \cap B_1$, so by Lemma \ref{L:positive}, there must be some $y \in C(a)^\circ$ satisfying $\norm{ y - y'}_{\mathfrak{X}^*} \le \frac{\alpha + \alpha'}{2}$ (say). Moreover, by weak convergence, 
\begin{align*} 
y(\iota^{-1}(x_{i'})) &\ge y'(\iota^{-1}(x_{i'})) - \norm{y - y'}_{\mathfrak{X}^*} = x_{i'} (y') - \norm{y - y'}_{\mathfrak{X}^*} \nonumber \\
& \ra x(y') - \norm{y - y'}_{\mathfrak{X}^*} \ge \frac{\alpha' - \alpha}{2} > 0.   
\end{align*}
Hence, 
\begin{align}
\liminf_i \inf_{\norm{\tilde{y} - y}_{\mathfrak{X}^*} < (\alpha' - \alpha)/2} \tilde{y} (\iota^{-1}(x_{i'}) \ge \frac{\alpha' - \alpha}{4} > 0. \label{E:rest}  
\end{align}
Rescaling $y$ if necessary, we may assume without loss of generality that $y \in C(a)^\circ \cap B_1^*$. From \eqref{E:rest}, we deduce that $C(a_{i'})^\circ$ eventually has empty intersection with the open ball $B(y, \frac{\alpha' - \alpha}{4}) \subset \mathfrak{X}^*$. Hence, $a \mapsto C(a)^\perp \cap B_1^*$ cannot be lower hemicontinuous. 
\end{enumerate} 

\begin{enumerate}[itemindent=56pt, leftmargin=0mm]
\item[(1')$\Rightarrow$(2'):] Suppose that $a \mapsto C^\circ(a) \cap B_1^*$ fails to be weak-* upper hemicontinuous at $a$. Then there is a weak-* open neighborhood $V$ of $C^\circ(a) \cap B_1^*$ and a net $a_i \ra a$ for which $C^\circ(a_i) \cap B_1^* \not\subset V$. By compactness, there is a subnet $a_{i'}$ and corresponding net $y_{i'} \in C^\circ(a_{i'})\cap B_1^*$ for which $y_{i'} \ra y$ (weak-*), $y \in B_1^* \setminus V$. In particular, $y \not\in C^\circ(a)$, so that for some $x \in C(a) \cap B_1$, one has $y(x) > 0$.

Pick $\ve \in (0, y(x))$ and let $B(x, \ve)$ be the open ball around $x$ of radius $\ve$. Then 
\begin{align*}
\inf_{x' \in B(x,\ve)} y_{i'}(x') & \ge y(x) - |y(x) - y_{i'}(x)| - \sup_{x' \in B(x,\ve)} |y_{i'}(x - x')| \ge y(x) - |y(x) - y_{i'}(x)| - \ve,
\end{align*}
which is eventually strictly positive. As $y_{i'}(x') \le 0$ whenever $x' \in C(a_{i'})$, $B(x,\ve) \cap C(a_{i'}) \cap B_1$ is eventually empty, and $a \mapsto C(a) \cap B_1$ cannot be (norm-topology) lower hemicontinuous at $a$. 
\item[(2')$\Rightarrow$(1'):] Let $x \in C(a) \cap B_1$ and let $\ve > 0$ be arbitrary. By weak-* upper hemicontinuity of $C^\circ \cap B_1^*$ at $a$, there is a neighborhood $U$ of $a$ for which $C^\circ(a') \cap B_1^* \subset \{y \in \mathfrak{X}^*: y(x) < \ve\}$ whenever $a' \in U$. By Lemma \ref{L:positive} below, it follows that $d(x, C(a')) < \ve$, so that (say) $d(x, C(a') \cap B_1) < 2 \ve$, whenever $a' \in U$, which concludes. 
\end{enumerate}

\begin{lemma} \label{L:positive}
Let $C$ be a convex cone in a real Banach space $\mathfrak{X}$, and let $y \in \mathfrak{X}^*$. Then $d(x, C) = \sup_{y' \in C^\circ \cap B_1^*} \inner{x, y' }$ and $d(y, C^\circ) = \sup_{x \in C \cap B_1} \inner{x,y} = \sup_{x \in C \setminus \{0\}} \norm{x}^{-1} y(x)$. 
\end{lemma}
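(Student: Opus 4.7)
The plan is to handle both identities by Hahn--Banach duality, with the easy direction in each case coming from the definitions of $C^\circ$ and the dual norm.

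For the first identity, I would observe that $p_1(z) := d(z, C)$ is a sublinear functional on $\mathfrak{X}$: subadditivity follows from the triangle inequality together with $C + C \subseteq C$, positive homogeneity uses that $C$ is a cone, and $p_1 \leq \norm{\cdot}$ since $0 \in C$. By the analytic Hahn--Banach theorem applied to the linear functional $f_0(\lambda x) := \lambda\, d(x, C)$ on $\R x$, there is a linear functional $y'$ on $\mathfrak{X}$ with $y' \leq p_1$ globally and $y'(x) = d(x, C)$. Then $y'(c) \leq d(c, C) = 0$ for $c \in C$ gives $y' \in C^\circ$, and $\pm y'(z) \leq d(\pm z, C) \leq \norm{z}$ gives $\norm{y'}_{\mathfrak{X}^*} \leq 1$. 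The reverse inequality $y'(x) = y'(x-c) + y'(c) \leq \norm{x - c}$ for $y' \in C^\circ \cap B_1^*$ and $c \in C$ is immediate after taking an infimum over $c$.

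For the second identity, let $\alpha := \sup_{x \in C \cap B_1} y(x)$. For any $y'' \in C^\circ$ and $x \in C \cap B_1$, $y''(x) \leq 0$ gives $y(x) \leq (y - y'')(x) \leq \norm{y - y''}$, which yields the easy $\geq$ direction. The harder $\leq$ direction is the main obstacle, since a direct Hahn--Banach separation in $\mathfrak{X}^{**}$ would only recover $C^\circ$ up to a weak-$*$ bipolar closure. Instead, I would introduce the sublinear functional
\begin{align*}
q(z) := \inf_{c \in C}\bigl(\alpha\, \norm{z + c} - y(c)\bigr)
\end{align*}
on $\mathfrak{X}$: positive homogeneity follows via the substitution $c \mapsto \lambda c$; subadditivity via $c \mapsto c_1 + c_2$; and since $y(c) \leq \alpha \norm{c}$ on the cone $C$ (by positive homogeneity of $C$ and the definition of $\alpha$), the reverse triangle inequality gives $q(z) \geq -\alpha \norm{z}$, so $q$ is finite with $q(0) = 0$.

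Hahn--Banach then produces a linear $\tilde{y} \leq q$ on $\mathfrak{X}$. Taking $c = 0$ in $q(\pm z) \leq \alpha\norm{z}$ yields $|\tilde{y}(z)| \leq \alpha\norm{z}$, hence $\norm{\tilde{y}}_{\mathfrak{X}^*} \leq \alpha$. Taking $c = c' \in C$ in the infimum defining $q(-c')$ yields $\tilde{y}(-c') \leq -y(c')$, so $y''(c') := y(c') - \tilde{y}(c') \leq 0$ for all $c' \in C$; that is, $y'' := y - \tilde{y}$ lies in $C^\circ$ with $\norm{y - y''}_{\mathfrak{X}^*} \leq \alpha$. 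Finally, the equality $\sup_{x \in C \cap B_1} y(x) = \sup_{x \in C \setminus \{0\}} \norm{x}^{-1} y(x)$ is a direct consequence of $C$ being a cone: every $x \in C \setminus \{0\}$ may be rescaled onto the unit sphere without changing $y(x)/\norm{x}$, and the unit-ball supremum is attained on the unit sphere or at $0$, consistent with the paper's convention $0/0 = 0$.
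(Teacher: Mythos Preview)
Your proof is correct and takes a genuinely different route from the paper's. The paper handles both identities through a single minimax mechanism: it writes $d(x,C)=\inf_{c\in C}\sup_{\norm{y'}\le 1}\inner{x-c,y'}$, invokes Banach--Alaoglu to make the $y'$-region weak-$*$ compact, and applies Sion's theorem to swap the order. For $d(y,C^\circ)$ the paper truncates to $C^\circ\cap L\cdot B_1^*$ to retain compactness, applies minimax again, reuses the first identity, and then passes to the limit $L\to\infty$ with a short argument controlling how close the approximate maximizer $x_L$ must be to $C$.

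Your approach is more elementary and self-contained: it avoids minimax entirely by exhibiting, in each case, a sublinear gauge tailored so that a single Hahn--Banach extension already produces the optimal dual element. For the first identity this is just $p_1=d(\cdot,C)$; for the second, the functional $q(z)=\inf_{c\in C}(\alpha\norm{z+c}-y(c))$ is the clever step, and your verification that $q$ is sublinear and bounded below by $-\alpha\norm{\cdot}$ (using $y(c)\le\alpha\norm{c}$ on the cone) is correct. This bypasses the paper's truncation-and-limit argument entirely. The paper's approach has the virtue of reflecting the minimax theme that runs through the rest of the article, but yours is shorter and requires no topological compactness hypotheses beyond what is implicit in Hahn--Banach itself.
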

\begin{proof}
We begin by establishing the result for the distance to a convex cone in $\mathfrak{X}$, which is Theorem 7.3 in \cite{Deutsch2012} when $\mathfrak{X}$ is a Hilbert space. By the Hahn-Banach theorem, 
\begin{align*}
d(x, C) & = \inf_{c \in C} \norm{x - c}_\mathfrak{X} = \inf_{c \in C} \sup_{\substack{ y' \in \mathfrak{X}^* \\ \norm{y'}_{\mathfrak{X}^*} \le 1}} \inner{x-c, y'} 
\end{align*}
Arguing as in Lemma \ref{C:two}, with the Banach-Alaoglu and Sion Minimax (\cite{Sion1958}) theorems in hand, the previous expression becomes
\begin{align}
\sup_{ \norm{y'}_{\mathfrak{X}^*} \le 1} (\inner{x, y'} - \sup_{c \in C} \inner{c , y'}) &=  \sup_{ \norm{y'}_{\mathfrak{X}^*} \le 1} ( \inner{x,y'} - \infty \cdot \one_{y' \not \in C^\circ}) \nonumber\\
& = \sup_{\substack{   y' \in C^\circ  \\ \norm{y'}_{\mathfrak{X}^*} \le 1 }} \inner{x , y'}. \label{E:thorns}
\end{align}
Note that $C^\circ$ is weak-* closed (\cite{fabian2011}, \S3.4, with weak-* compact intersection with $B_1^*$). Dually, we may write 
\begin{align}
d(y, C^\circ) &= \lim_{L \ra \infty} d(y, C^\circ \cap L \cdot B_1^*) = \lim_{L \ra \infty} \inf_{\substack{y' \in C^\circ \\ \norm{y'}_{ \mathfrak{X}^*} \le L }}\sup_{\norm{x}_{\mathfrak{X}} \le 1} \inner{x, y - y'} \nonumber \\
& = \lim_{L \ra \infty}\sup_{\norm{x}_{\mathfrak{X}} \le 1} ( \inner{x, y } -  \sup_{\substack{y' \in C^\circ \\ \norm{y'}_{ \mathfrak{X}^*} \le L }} \inner{x, y'} ) \nonumber \\
& = \lim_{L \ra \infty} \sup_{\norm{x}_{\mathfrak{X}} \le 1} ( \inner{x, y } - L \cdot d(x, C) ), \label{E:pilot}
\end{align}
where the last line uses \eqref{E:thorns}. Let $x_L$ maximize \eqref{E:pilot} up to some error that is $o(1)$ as $L \ra \infty$. As \eqref{E:pilot} is bounded below by $0$, and $|\norm{x,y}| \le \norm{y}_{\mathfrak{X}^*}$, one must have $d(x_L, C) = O(L^{-1})$. Letting $\alpha_L = o(1)$ be such that $L^{-1} = o(\alpha_L)$, one has 
\begin{align*}
d(y, C^\circ) & \le \lim_{L \ra \infty} \sup_{\substack{\norm{ x}_{\mathfrak{X}} \le 1 \\ d(x, C) \le \alpha_L} } \inner{x,y} \le \lim_{L \ra  \infty}  \sup_{\substack{c \in C \cap B_1 \\ \norm{v}_{\mathfrak{X}} \le 2 \alpha_L } } \inner{ c + v , y} \\
& \le \sup_{c \in C \cap B_1} \inner{c , y} + \lim_{L \ra \infty} \sup_{\norm{v}_{\mathfrak{X}} \le 2 \alpha_L } \inner{v, y} = \sup_{ c \in C \cap B_1} \inner{c,y}.
\end{align*}
On the other hand,
\begin{align*}
d(y, C^\circ) & = \inf_{\substack{y' \in C^\circ } }\sup_{\norm{x}_{\mathfrak{X}} \le 1} \inner{x, y - y'} \ge \inf_{y' \in C^\perp} \sup_{ x \in C \cap B_1} \inner{x, y - y'} = \sup_{ x \in C \cap B_1} \inner{x, y},
\end{align*}
which concludes.
\end{proof}
\end{proof}

\begin{proof}[Proof of Lemma \ref{C:duality}]
The statement is vacuously true when $D$ is empty, so assume that this is not the case. For convex functions $f_1$ and $f_2$ on a topological vector space $\mathbb{X}$, let the infimal convolution be defined as the mapping $x \mapsto f_1 \square f_2 (x) = \inf_{x' \in \mathbb{X}} f_1(x - x') + f_2(x')$, and let the convex conjugate of $f_1$ be the function mapping $y \mapsto f_1^*(y) = \sup_{x \in \mathbb{X}} \inner{y,x} - f_1(x)$, for all $y \in \mathbb{X}^*$ (\cite{Rock1970}).

Let $\gamma$ be $L$-Lipschitz continuous on $D$, and let $\delta(\cdot | D) = \left\{ \begin{array}{l l}
0 \text{ if }x \in D \\
\infty \text{ if }x \not\in D 
\end{array} 
\right. $ be the indicator function of $D$. Make the definition $\tilde{\gamma} = (\gamma + \delta(\cdot|D)) \square \, L \norm{\cdot}_{\mathfrak{X}}$. For $x \in D$, a consequence of Lipschitz continuity of $\gamma$ is that
\begin{align*}
\tilde{\gamma} & = ((\gamma + \delta(\cdot | D)) \square L \norm{\cdot}_{\mathfrak{X}} ) (x)  = \inf_{x' \in \mathfrak{X}} (\gamma(x - x') + \delta(x - x' |D)) + L \norm{x'}_{\mathfrak{X}}  \\
& = \inf_{x' \in x - D} ( \gamma(x - x') + L\norm{x'}_{\mathfrak{X}})  \ge \inf_{x' \in x - D} ( \gamma(x)  - |\gamma(x) - \gamma(x - x')| + L \norm{x'}_{\mathfrak{X}}) \\
& \ge \gamma(x).
\end{align*}
Clearly $\tilde{\gamma} \le \gamma$ on $D$, whence we conclude that $\tilde{\gamma}(x) = \gamma(x)$ for all $x \in D$. Moreover, Lemma 3.8.3 of \cite{cheridito2013} implies that $\tilde{\gamma}$ is norm-continuous on $\mathfrak{X}$ (indeed, $\tilde{\gamma}$ is finite everywhere). 

The convex conjugate of $\tilde{\gamma}$ can be evaluated according to Lemma 3.8.6 of \cite{cheridito2013} and the Fenchel-Moreau theorem, using the norm lower semi-continuity of $\gamma + \delta(\cdot |D)$ and $L \norm{\cdot}_{\mathfrak{X}}$: 
\begin{align*}
\tilde{\gamma}^* & = (\gamma + \delta(\cdot |D ))^* + (L \norm{\cdot}_{\mathfrak{X}})^* = (\gamma + \delta(\cdot |D))^* + \delta(\cdot|B_L^*),
\end{align*}
where $B_L^* \subset \mathfrak{X}^*$ is the closed ball of radius $L$. We make the observation that $\tilde{\gamma}^*$ is the pointwise supremum of a collection of functions which are individually weak-* continuous over $\mathfrak{X}^*$, so it is lower semi-continuous and has closed epigraph.

By boundedness of $D$,
\begin{align*}
(\gamma + \delta(\cdot|D))^*(y) & = \sup_{x \in \mathfrak{X}} \inner{y, x} - \gamma(x) - \delta(\cdot |D) \\
& = \sup_{x \in D} \inner{y, x} - \gamma(x) < \norm{y}_{\mathfrak{X}^*}\sup_{x \in D} \norm{x}_{\mathfrak{X}}  - \inf_{x \in D} \gamma(x) < \infty. 
\end{align*}
A similar argument lower bounds $(\gamma + \delta(\cdot |D ))^*$ on bounded subsets. Thus, $(\gamma + \delta(\cdot|D))^*$ is uniformly bounded above and below on every norm-bounded subset of $\mathfrak{X}^*$. Let $C$ be a constant at least as big as $\sup_{y \in B_L^*} (\gamma + \delta(\cdot|D))^*(y)$. 

Make the definition 
\begin{align*}
\mathcal{A} \supset \mathcal{T} &\equiv \{ x \mapsto \inner{y, x} - c \, | \, y \in B_L^*, c \in [(\gamma + \delta( \cdot |D))^* (y), C ]\}.
\end{align*}
Then $\mathcal{T}$ is the image of the set 
\begin{align} \label{E:weakstarcont}
\{(y,c)\, | \, y \in B_L^*, c \in [ (\gamma + \delta(\cdot | D))^*(y), C]\} \subset \mathfrak{X}^* \times \R \cong (\mathfrak{X} \times \R)^*
\end{align}
under the map $\zeta: (y,c) \mapsto ( x \mapsto \inner{y,x} - c)$. When $(\mathfrak{X} \times \R)^*$ is endowed with the weak-* topology and $\mathcal{A}$ is endowed with the topology of pointwise convergence, it is clear that $\zeta$ is a continuous map. Moreover, we observe that the left hand side of \eqref{E:weakstarcont} is norm bounded in $(\mathfrak{X} \times \R)^*$, and is also the intersection of the epigraph of $\tilde{\gamma}^*$ with the weak-* closed set $\mathfrak{X}^* \times (-\infty, C]$. Therefore it is both norm bounded and weak-* closed, whence compact in the weak-* topology. This implies that $\mathcal{T}$ is the continuous image of a compact set, and must itself be weakly compact. As $\zeta$ is linear and \eqref{E:weakstarcont} is the intersection of two convex sets, $\mathcal{T}$ is also convex. 

By the Fenchel-Moreau theorem (\cite{cheridito2013}), one has $\tilde{\gamma} = \tilde{\gamma}^{**}$, and thus
\begin{align*}
\tilde{\gamma}(x) & = \sup_{y \in \mathfrak{X}^*} (\inner{y, x} - (\gamma + \delta(y|D))^* - \delta(y|B_L^*)) = \sup_{y \in B_L^*} (\inner{y, x} - (\gamma + \delta(\cdot | D))^*(y))  \\
& = \sup_{t \in \mathcal{T}} t(x) = \max_{t \in \mathcal{T}} t(x),
\end{align*}
where the maximum is attained by weak compactness. 
In particular, $\gamma(x) = \max_{t \in \mathcal{T}} t(x)$ for all $x \in D$. 

We turn now to \eqref{E:subgradient}. Suppose that $m(\Theta_0)$ is contained in the interior of $D$ and fix a point $\theta \in \Theta_0$. By the preceding arguments, $\tilde{\gamma}$ agrees with $\gamma$ on a neighborhood of $m(\theta)$. Now, $\tilde{K}(\theta) = \{t \in \mathcal{T}: t(m(\theta)) = 0 \}\cap \{t \in \mathcal{T}: t - t(0) \in - (\nabla m(\theta)(T_\theta \Theta))^\circ\}$, and the definition of $\Theta_0$ implies $\tilde{\gamma}(m(\theta)) = 0$. If $t$ is in $\tilde{K}(\theta)$, we conclude that $t(m(\theta)) = \tilde{\gamma}(m(\theta))$ and that $t$ is majorized by $\tilde{\gamma}$, whence $\gamma$, in a neighborhood of $m(\theta)$ (hence, $\partial \gamma(m(\theta))$ agrees with $\partial \tilde{\gamma}(m(\theta))$). By a standard convexity argument (e.g.\ \cite{cheridito2013}) this happens if and only if $t - t(0)$ is in $\partial \gamma(m(\theta))$ and $t(m(\theta)) = \gamma(m(\theta))$, so that one can write
\begin{align} \label{E:ktheta2}
\tilde{K}(\theta) = \{t \in \mathcal{T}: t(m(\theta)) = 0 \text{ and } t- t(0)  \in  - (\nabla m(\theta)(T_\theta \Theta))^\circ \cap \partial \gamma (m(\theta))\}.
\end{align}  
Let $t$ be such that $t(m(\theta)) = 0$ and $t - t(0) \in - (\nabla m(\theta)(T_\theta \Theta))^\circ \cap \partial \gamma (m(\theta))$, and let $y = t - t(0)$, so that $y(m(\theta)) = t(m(\theta)) - t(0) = -t(0)$. By a standard convexity argument  (e.g.\ \cite{Rock1970}, \S24), $\partial \gamma(m(\theta)) = \partial \tilde{\gamma}(m(\theta))$ is contained in the effective domain of $\tilde{\gamma}^*$, which is contained in $B_L^*$. Thus, $y \in B_L^*$. By direct calculation, $(\gamma + \delta(\cdot|D))^*(y) = y(m(\theta))$. Hence, 
\begin{align} \label{E:ivow}
t = y + t(0) = y - y(m(\theta)) = y - (\gamma + \delta(\cdot|D))^*(y) \in \mathcal{T}. 
\end{align}
From the preceding display, we conclude that \eqref{E:ktheta2} is equivalent to \eqref{E:subgradient}.  

The final assertion involves correspondences $K(\theta)$ and $\tilde{K}(\theta)$.  The former is, equivalently, 
\begin{align*}
K(\theta) = \{t \in \mathcal{T}: t(x') - t(0) \ge 0 \text{ for all }x' \in \nabla m(\theta)( T_\theta \Theta)\}
\end{align*}
If $\theta \mapsto \nabla m(\theta)(T_\theta \Theta) \cap  B_1$ is weakly upper-hemicontinuous, Lemma \ref{L:duality} implies that the correspondence $\theta \mapsto -(\nabla m(\theta) (T_\theta \Theta))^\circ \cap B_L^*$ is norm-topology lower hemicontinuous at every point of $\Theta_0$. Moreover, 
\begin{align}
&\{t \in \mathcal{T}: t(x') - t(0) \ge 0 \text{ for all } x' \in \nabla m(\theta) (T_\theta \Theta) \} \nonumber \\
&\qquad = \{(x \mapsto \inner{y, x} - c)\, | \, y \in -(\nabla m(\theta) (T_\theta \Theta))^\circ \cap B_L^*, c \in [ (\gamma + \delta(\cdot | D))^*(y), C] \} \nonumber \\
& \qquad = \bigcup_{0 \le \lambda \le 1} \{ (x \mapsto \inner{y, x} - ( \lambda (\gamma + \delta(\cdot | D))^*(y) + (1-\lambda) C)) \, | \, y \in -(\nabla m(\theta) (T_\theta \Theta))^\circ \cap B_L^* \} 
\label{E:ps113}
\end{align}
By uniform boundedness on bounded sets, Theorem 3.4.1 of \cite{cheridito2013} implies that $(\gamma + \delta(\cdot | D))^*$ is norm continuous on $\mathfrak{X}^*$. For any fixed $\lambda$, the map from $\mathfrak{X}^*$ to $\mathcal{A}$ defined as 
\[
y \mapsto (x \mapsto \inner{y, x} - ( \lambda (\gamma + \delta(\cdot | D))^*(y) + (1-\lambda) C))
\]
is thus norm continuous, so that \[
\theta \mapsto \{ (x \mapsto \inner{y, x} - ( \lambda (\gamma + \delta(\cdot | D))^*(y) + (1-\lambda) C)) \, | \, y \in -(\nabla m(\theta) (T_\theta \Theta))^\circ \cap B_L^* \}
\]
is the pointwise continuous image of a lower hemicontinuous correspondence, whence itself lower hemicontinuous with respect to the norm topology on $\mathcal{A}$. Therefore, \eqref{E:ps113} is the union of a family of lower hemicontinuous correspondences, and must itself be lower hemicontinuous. If $\theta \mapsto \partial \gamma(m(\theta))$ is also assumed to be lower hemicontinuous, then $\theta \mapsto - (\nabla m(\theta)(T_\theta \Theta))^\circ \cap \partial \gamma (m(\theta))$ is lower hemicontinuous, and lower hemicontinuity of $\theta \mapsto \tilde{K}(\theta)$ follows from \eqref{E:subgradient}, \eqref{E:ivow}, and norm-continuity of $(\gamma + \delta(\cdot|D))^*$.

\end{proof}

\begin{proof}[Proof of Theorem \ref{T:U}]
The proof follows that of Theorem \ref{T:one}. From Assumption \ref{A:two}, one has $\norm{ \mathbb{G}_{n,P}}_{L^\infty} = O_p(1)$ uniformly in $P$, and thus that $
\sup_{\theta, t}|v_{n,P}(\theta, t) - v_P(\theta, t)| = O_p(r_n^{-1})$ uniformly in $P$. Thus, $\sup_{\theta \in \Theta} |\ell_{n,P}(\theta) - \ell_P(\theta)| = O_p(r_n^{-1})$ uniformly in $P$. Standard arguments like those establishing \eqref{E:ll}, with Assumption \ref{AU:1} and Assumption \ref{AU:4}.3 in hand, imply the existence of a sequence $a_n = o(1)$ for which 
\begin{align} \label{E:u1}
r_n \inf_{\theta \in R} v_{n,P}(\theta, t) =  \inf_{\theta \in \Theta_0(R,P)} \inf_{\theta' \in R \cap B(\theta, a_n)} \sup_{t \in \mathcal{T}} r_n (v_{n,P}(\theta, t) + v_P(\theta', t) - v_P(\theta, t)) + o_p(1)
\end{align}
where the $o_p(1)$ term is uniform in $(R,P) \in \mathcal{S}$. Let $(b_n) = o(1)$ be a sequence chosen as in \eqref{E:bn} and, possibly by modifying $a_n$ to converge more slowly, satisfying $b_n \le a_n$. \eqref{E:gammat} and local convexity of the sets $R$ around $\Theta_0(R,P)$ imply that $\gamma_{R,P}$ is the decreasing limit of functions $\psi_{\delta; R,P}$.
Assumption \ref{AU:3} allows us to majorize the previous display, up to the $o_p(1)$ term, by 
\begin{align} 
&\inf_{\theta \in \Theta_0(R,P)} \inf_{\theta' \in R \cap B(\theta, b_n)} \sup_{t \in \mathcal{T}}  r_n (v_{n,P}(\theta, t) + r_n D_{\theta; t} v_P (\theta' - \theta)) + r_n O(f_v(b_n)) \nonumber \\
& \qquad = \inf_{\theta \in \Theta_0(R,P)} \sup_{t \in \mathcal{T}} r_n (v_{n,P}(\theta, t) + r_n b_n \psi_{b_n; R,P}(\theta, t)) + o(1) \nonumber \\
& \qquad =  \inf_{\theta \in \Theta_0(R,P)} \sup_{t \in \mathcal{T}} r_n ( v_{n,P}(\theta, t) + r_n b_n (\gamma_{R,P}(\theta, t) + o(1))) + o(1)  \label{E:u2}
\end{align}
where the second line is another application of the minimax theorem local to $\Theta_0(R,P)$ (Assumptions \ref{AU:1}.1, \ref{AU:3}, \ref{AU:4}.1), the third line results from Assumption \ref{AU:4}.4 with inner $o(1)$ term no bigger than $- \gamma_{R,P}(\theta, t)$, and the $o(1)$ terms are uniform in $(R,P)$ (with the first one being $\sup_{(R,P) \in \mathcal{S} } \sup_{\substack{\theta \in \Theta_0(R,P) \\ t \in \mathcal{T}}} |\psi_{b_n; R, P}(\theta, t) - \gamma(\theta, t)|$). Uniform convergence of $\ell_{n,P}(\theta)$ to $\ell_P(\theta)$ implies that the preceding display may be bounded below by a term which is $O_p(1)$ uniformly in $R$ and $P$. Let $(c_n) = o(1)$ be a sequence chosen to converge slowly enough so that $r_n^{-1} b_n^{-1} = o(c_n)$, and such that the first $o(1)$ term appearing in \eqref{E:u2} is also $o(c_n)$. The argument used to establish \eqref{E:longs}, substituting uniform boundedness of $\mathbb{G}_n$ over $\Theta \times \mathcal{T}$ with uniform boundedness of $\mathbb{G}_{n,P}$ over $\Theta \times \mathcal{T}$ and $P$, implies that an upper bound for \eqref{E:u2} is 
\begin{align} 
\inf_{\theta \in \Theta_0(R,P)} \sup_{t \in \mathcal{T}: |\gamma_{R,P}(\theta, t) | \le c_n} r_n (v_{n,P}(\theta, t) + r_n b_n \gamma_{R,P}(\theta, t) ) &\le \inf_{\theta \in \Theta_0(R,P)} \sup_{t \in \mathcal{T}: |\gamma_{R,P}(\theta, t) | \le c_n} r_n v_{n,P}(\theta, t) \nonumber \\
& = \inf_{\theta \in \Theta_0(R,P)} \sup_{t \in \mathcal{T}: |\gamma_{R,P}(\theta, t)| \le c_n} \mathbb{G}_{n,P}(\theta, t), \label{E:u3}
\end{align}
up to some $o_p(1)$ term that is uniform in $R$ and $P$. Note that $\gamma_{R,P}(\theta, \cdot)$ is $\mathcal{U}$-upper semicontinuous by Assumptions \ref{AU:1}, \ref{AU:3}, and \ref{AU:4}.1 (Lemma \ref{L:gsc}). As the argument to establish \eqref{E:u3} also implies that $\{t: |\gamma_{R,P}(\theta, t)| \le c_n \}$ is nonempty whenever $\theta \in \Theta_0(R,P)$, it also must be true that $K_{R,P}(\theta)$ is nonempty.

Let $\ve > 0$ be arbitrary. By Assumption \ref{AU:2}, there is some $\delta > 0$ such that 
\begin{align*}
\limsup_{n \ra \infty} \sup_{P \in \mathcal{P}} \PP{P}{\sup_{\rho_P((\theta, t), (\theta', t') ) < \delta} | \mathbb{G}_{n,P}(\theta, t) - \mathbb{G}_{n,P}(\theta', t')| > \ve} < \ve.  
\end{align*}
By Assumption \ref{AU:4}.3, there is some open $U \subset \mathcal{T}$ containing $0$ such that $\rho_P((\theta, t), (\theta', t')) < \delta$ whenever $t - t' \in U$, and Assumption \ref{AU:4}.2 implies that there is some $N$ sufficiently large so that 
\begin{align*}
	\{t \in \mathcal{T}: |\gamma_{R,P}(\theta, t) | \le c_n\} \subset K_{R,P}(\theta) + U \text{ for all } n \ge N, (R,P) \in \mathcal{S}, \theta \in \Theta_0(R,P).
\end{align*}
Thus, 
\begin{align*}
\limsup_{n \ra \infty} \sup_{(R,P) \in \mathcal{S}} \PP{P}{ \inf_{\theta \in \Theta_0(R,P)} \sup_{t \in K_{R,P}(\theta)} \mathbb{G}_{n,P}(\theta, t) - \inf_{\theta \in \Theta_0(R,P)} \sup_{t \in \mathcal{T}: |\gamma_{R,P}(\theta, t)| \le c_n} \mathbb{G}_{n,P}(\theta, t) > \ve} < \ve.  
\end{align*}
As $\ve$ was arbitrary, \eqref{E:u3} may be rewritten (up to a term that is uniformly $o_p(1)$) as the right side of \eqref{E:u4}, which establishes the latter. 

Now, impose Assumption \ref{AU:5}.1. Then, the last line of \eqref{E:u2} may be rewritten as, up to the additive uniformly $o(1)$ term,
\begin{align} \label{E:u5}
\inf_{\theta \in \Theta_0(R,P)}  \sup_{t \in \mathcal{T}} (\mathbb{G}_n(\theta, t) + r_n v_P (\theta, t) + r_n b_n(\gamma_{R,P}(\theta, t) + o(1))).   
\end{align}
One must have $r_n^{-1} = o(c_n)$, so that the arguments establishing \eqref{E:u3} allow for \eqref{E:u5} to be upper bounded by 
$
\inf_{\theta \in \Theta_0(R,P)}  \sup_{t\in \mathcal{T}: | v_P(\theta, t) + \gamma_{R,P}(\theta, t)| \le c_n} \mathbb{G}_n(\theta, t).
$
Upper semicontinuity of $v_P(\theta, \cdot) + \gamma_{R,P}(\theta, \cdot)$ once again implies that $\tilde{K}_{R,P}(\theta)$ is nonempty. Assumption \ref{AU:5}.1 and Assumption \ref{AU:4}.3 then imply, as above, that an upper bound for \eqref{E:u5} is actually 
\begin{align} \label{E:u6}
\inf_{\theta \in \Theta_0(R,P)}  \sup_{t\in \tilde{K}_{R,P}(\theta)} \mathbb{G}_n(\theta, t)
\end{align}
up to a uniform $o_p(1)$ term (c.f.\ the proof of Lemma \ref{L:gn}). As \eqref{E:u6} is bounded above by \eqref{E:u2} (again, as in the proof of Lemma \ref{L:gn}), \eqref{E:u2} is an equality when its right hand side is replaced by \eqref{E:u6}. 

Finally, impose Assumption \ref{AU:5}.2 in addition to the standing assumptions. Suppose that the first case of Assumption \ref{AU:5}.2 is true for some sequence $(b_n)$. Then, \eqref{E:u1} holds with $a_n = b_n$, which forces the equality of \eqref{E:u1} and \eqref{E:u2}. On the other hand, suppose that there is some open ball $V$ containing $0$ such that $v_P(\cdot , t)$ is convex in the neighborhood $(\theta + V) \times \tilde{K}_{R,P}(\theta)$ for all $\theta \in \Theta_0(R,P)$. For any $\theta \in \Theta_0(R,P)$, a calculation following \eqref{E:ps30} implies that $v_P(\theta', t) - v_P(\theta, t) \ge 0$ in \eqref{E:u1} whenever $a_n$ is sufficiently small so that $\theta' \in \theta + V$. For such $a_n$, 
\begin{align*}
\inf_{\theta \in \Theta_0(R,P)} \inf_{\theta' \in R \cap B(\theta, a_n)} \sup_{t \in \mathcal{T}} r_n (v_{n,P}(\theta, t) + v_P(\theta', t) - v_P(\theta, t)) &\ge \inf_{\theta \in \Theta_0(R,P)} \sup_{t \in \mathcal{T}} r_n v_{n,P}(\theta, t) \\
& \ge \inf_{\theta \in \Theta_0(R,P)} \sup_{t \in \tilde{K}_{R,P}(\theta)} r_n v_{n,P}(\theta, t)
\end{align*}
The last line of the preceding display is \eqref{E:u6}, so that \eqref{E:u1} must equal \eqref{E:u6} up to some uniformly $o_p(1)$ term. 
\end{proof}

\begin{proof}[Proof of Theorem \ref{T:U2}]
Let $Z_{n,R,P}^*$ denote the left hand side of \eqref{E:ub1} and $Z_{R,P}$ the right hand side. By Assumptions \ref{AU:3} and \ref{AU:6}, $\mu_n |\ell_{n,P}(\theta) - \ell_P(\theta)|$ and $\lambda_n |\hat{\psi}_{n,R,P}(\theta, t) - \gamma_{R,P}(\theta, t)|$ are $o_p(1)$ uniformly in $(R,P) \in \mathcal{S}$, $\theta \in \Theta_0(R,P)$, and $t \in \mathcal{T}$. 

Let $P_n$ denote the distribution of bootstrap analogue $\mathbb{G}_{n,P}^*$ conditional upon a particular sample of the data. Assumption \ref{AU:2} and Assumption \ref{AU:6}.1 imply that, for every $\ve > 0$, there is some $K > 0$ so large that 
\begin{align} \label{E:fount0}
\limsup_{n \ra \infty} \sup_{(R,P) \in \mathcal{S} } \PP{P}{ P_n(\sup_{\substack{\theta \in R \\ t \in \mathcal{T}}}| \mathbb{G}_{n,P}^*| \ge  K )}<  \ve. 
\end{align}
On the other hand, by boundedness of $\gamma_{R,P}$, construction of $\mu_n$ and $\lambda_n$, and Assumption \ref{AU:1}.2
\begin{align} \label{E:fount-1}
\lim_{n \ra \infty} \sup_{(R,P) \in \mathcal{S}} \PP{P}{\inf_{\substack{d_{\mathfrak{B}}(\theta, \Theta_0(R,P) ) > \delta \\ t \in \mathcal{T}}} ( \mu_n \ell_{n,P}(\theta) + \lambda_n \hat{\psi}_{n,R,P}(\theta, t)) \le C} = 0  
\end{align}
for any choice of $\delta > 0$ and constant $C$. Conclude, as in \eqref{E:examine}, that one may replace $R$ with any $d_{\mathfrak{B}}$-open neighborhood $S_{\delta}(R,P) \subset R$ of $\Theta_0(R,P)$ on the left hand side of \eqref{E:ub1} and incur only an asymptotically uniformly negligible error in that
\begin{align} 
\limsup_{n \ra \infty}  \sup_{(R,P) \in \mathcal{S}} \mathrm{E}_{P}\Big[ P_n( &\inf_{\theta \in R} (\mu_n \ell_{n,P}(\theta) + (\sup_{t \in \mathcal{T}} \mathbb{G}_{n,P}^*(\theta, t) + \lambda_n \hat{\psi}_{n,R,P}(\theta, t)) \nonumber \\
&< \inf_{\theta \in S_\delta(R,P)} (\mu_n \ell_{n,P}(\theta) + (\sup_{t \in \mathcal{T}} \mathbb{G}_{n,P}^*(\theta, t) + \lambda_n \hat{\psi}_{n,R,P}(\theta, t)))\Big] = 0.  \label{E:fount1}
\end{align} 
Assume the first case of Assumption \ref{AU:6}.3. Then $S_\delta(R,P)$ may also be chosen to be small enough so that $K_{R,P}(\theta)$ is nonempty for all $\theta \in S_\delta(R,P)$.
Arguing as in \eqref{E:rkoa3}, \eqref{E:fount1} implies that, for $\ve > 0$ arbitrary, 
\begin{align}  \nonumber
\limsup_{n \ra \infty}  \sup_{(R,P) \in \mathcal{S}} \mathrm{E}_{P}\Big[ P_n(& \inf_{\theta \in R} (\mu_n \ell_{n,P}(\theta) + (\sup_{t \in \mathcal{T}} (\mathbb{G}_{n,P}^*(\theta, t) + \lambda_n \hat{\psi}_{n,R,P}(\theta, t)))) \\
&< \inf_{\theta \in S_\delta(R,P)} \sup_{t \in K_{R,P}(\theta)} \mathbb{G}_{n,P}^*(\theta, t) - \ve)\Big] = 0 \label{E:fount2}
\end{align}

By Assumption \ref{AU:4}.3, $S_{\delta}(R,P)$ may be chosen so that 
\begin{align*}
\sup_{(R,P) \in \mathcal{S}}( \sup_{\theta' \in S_\delta(R,P)} (\inf_{\theta \in \Theta_0(R,P) } \rho_{P, \Theta} (\theta, \theta') + \sup_{t \in K_{R,P}(\theta)} \rho_{P,\mathcal{T}}(t, K_{R,P}(\theta')))) < \delta,
\end{align*}
where $\rho_{P, \Theta}$ and $\rho_{P,\mathcal{T}}$ are as in Lemma \ref{L:conv0} with the indicated added dependence on $P \in \mathcal{P}$. In particular, for every $(R,P) \in \mathcal{S}$ and $\theta' \in S_\delta(R,P)$, there is some $\theta \in \Theta_0(R,P)$ for which $\sup_{\substack{t \in K_{R,P}(\theta) }} \inf_{t' \in K_{R,P}(\theta')}  \rho_P((\theta, t), (\theta', t')) < \delta$. 

Now, inasmuch as the map $\mathbb{G}_{n,P}^*(\theta, t) \mapsto \sup_{\rho_P((\theta,t),(\theta, t')) < \delta} | \mathbb{G}_{n,P}^*(\theta, t) - \mathbb{G}_{n,P}^*(\theta', t')|$ is Lipschitz from $L^\infty(\Theta \times \mathcal{T})$ to $\R$, $\mathbb{G}_{n,P}^*$ inherits asymptotic equicontinuity uniformly in $P$ from $\mathbb{G}_{n,P}$ and $\mathbb{G}_P$ (Assumption \ref{AU:2}) in the sense that, for any $\ve > 0$, there is some $\delta > 0$ satisfying 
\begin{align} \label{E:fount7}
\limsup_{n \ra \infty} \sup_{(R,P) \in \mathcal{S}} \EE{P}{ P_n (  \sup_{\rho_P((\theta,t),(\theta', t')) < \delta} | \mathbb{G}_{n,P}^*(\theta, t) - \mathbb{G}_{n,P}^*(\theta', t')| > \ve ) } < \ve
\end{align}
(indeed, the indicator function on the inside of the $P_n$ term can be majorized by a bounded Lipschitz function of $\sup_{\rho_P((\theta,t),(\theta, t')) < \delta} | \mathbb{G}_{n,P}^*(\theta, t) - \mathbb{G}_{n,P}^*(\theta', t')|$ which vanishes in a neighborhood of $0$, whose expectation converges uniformly in probability to the expectation under the distribution of $\mathbb{G}_{n,P}$ by Assumption \ref{AU:6}.1; the latter expectation can be arbitrarily small in $\delta$ by Assumption \ref{AU:2}). For such a $\delta>0$, conclude as in \eqref{E:rkoa4} that 
\begin{align} \label{E:fount3}
\limsup_{n \ra \infty}  \sup_{(R,P)  \in \mathcal{S}} \mathrm{E}_{P}\Big[P_n(&\inf_{\theta \in S_\delta(R,P)} \sup_{t \in K_{R,P}(\theta)} \mathbb{G}_{n,P}^*(\theta, t) < \inf_{\theta \in \Theta_0(R,P)} \sup_{t \in K_{R,P}(\theta)} \mathbb{G}_{n,P}^*(\theta, t) -\ve)\Big] < \ve. 
\end{align} 
\eqref{E:fount1}, \eqref{E:fount2}, and \eqref{E:fount3} jointly show that $\inf_{\theta \in \Theta_0(R,P)} \sup_{t \in K_{R,P}(\theta)} \mathbb{G}_{n,P}^*(\theta, t)$ is a lower bound for the left hand side of \eqref{E:ub1} up to uniform $o_p(1)$ term. On the other hand, an upper bound for the left hand side is $\inf_{\theta \in \Theta_0(R,P)} \sup_{t \in \mathcal{T}} \mathbb{G}_{n,P}^*(\theta, t) + \lambda_n \gamma_{R,P}(\theta, t)$ (as in \eqref{E:121}). By Assumption \ref{AU:4}.3 and another application of \eqref{E:fount0}, the upper bound can be rewritten as 
\begin{align} \label{E:fount4}
\inf_{\theta \in \Theta_0(R,P)} \sup_{t \in K_{R,P}(\theta) + U} (\mathbb{G}_{n,P}^*(\theta, t) + \lambda_n \gamma_{R,P}(\theta, t)) \le \inf_{\theta \in \Theta_0(R,P)} \sup_{t \in K_{R,P}(\theta) + U} \mathbb{G}_{n,P}^*(\theta, t)
\end{align}
for any $U \in \mathcal{U}$ containing $0$. Finally, by assumptions \ref{AU:4}.3 and \ref{AU:2}, this upper bound is equivalent up to asymptotically negligible error to $\inf_{\theta \in \Theta_0(R,P)} \sup_{t \in K_{R,P}(\theta)} \mathbb{G}_{n,P}^*(\theta, t)$, which is the same as the preceding lower bound, and implies \eqref{E:ub1} by Assumption \ref{AU:6}.1.

Now, we assume the second case of Assumption \ref{AU:6}.3. $\inf_{\theta \in \Theta_0(R,P)} \sup_{t \in K_{R,P}(\theta)} \mathbb{G}_{n,P}^*(\theta, t)$ still pertains as an upper bound for the left hand side of \eqref{E:ub1}. Let $c_\gamma > C_\gamma$ be fixed. Then 
\[
\lim_{n \ra \infty}\inf_{(R,P) \in \mathcal{S}} \inf_{\substack{\theta: \ell_P(\theta) \ge c_\gamma\mu_n^{-1} \lambda_n \\ t \in \mathcal{T}}} (\mu_n \ell_{P}(\theta) + \lambda_n \gamma_{R,P}(\theta, t)) = \infty.
\]
An estimate mirroring \eqref{E:fount-1} implies that 
\begin{align*}
\lim_{n \ra \infty} \sup_{(R,P) \in \mathcal{S}} \PP{P}{\inf_{\substack{\theta: \ell_P(\theta) \ge c_\gamma\mu_n^{-1} \lambda_n \\ t \in \mathcal{T}}} (\mu_n \ell_{n,P}(\theta) + \lambda_n \hat{\psi}_{n,R,P}(\theta, t)) \le C} = 0. 
\end{align*}
for any $C$. Thus, one may argue as in \eqref{E:fount1} to show that 
\begin{align} \label{E:fount5}
\inf_{\theta \in R: \ell_P(\theta) < c_\gamma \mu_n^{-1} \lambda_n } \sup_{t \in \mathcal{T}} ( \mathbb{G}_{n,P}^*(\theta, t) + \lambda_n \hpsi_{n,R,P}(\theta, t) )
\end{align}
is a lower bound for the left hand side of \eqref{E:ub1} in the probabilistic sense of \eqref{E:fount1}. As in the proof of Lemma \ref{L:fount}, the second case of Assumption \ref{AU:6}.3 implies that \eqref{E:fount5} may be rewritten as 
\begin{align} \label{E:fount50}
\inf_{\theta \in R: d_{\mathfrak{B}}(\theta, \Theta_0(R,P)) \le C_\ell f_\ell(c_\gamma \mu_n^{-1} \lambda_n)} \sup_{t \in \mathcal{T}} ( \mathbb{G}_{n,P}^*(\theta, t) + \lambda_n \gamma_{R,P}(\theta, t) )
\end{align}
for any $C_\ell$ sufficiently large. For $\theta \in R$ close enough to $\Theta_0(R,P)$, let $\Pi_{R,P} \theta$ denote a projection onto $\Theta_0(R,P)$ (which again exists, by compactness). By the presumed Lipschitz continuity of $\gamma_{R,P}$ and uniform equicontinuity of $\mathbb{G}_{n,P}^*$, a lower bound for \eqref{E:fount50} is 
\begin{align*} 
\inf_{\theta \in R: d_{\mathfrak{B}}(\theta, \Theta_0(R,P)) \le C_\ell f_\ell(c_\gamma \mu_n^{-1} \lambda_n)} \sup_{t \in K_{R,P}(\Pi_{R,P}(\theta))} \mathbb{G}_{n,P}^*(\theta, t) + C_\ell \lambda_n f_\ell(c_\gamma \mu_n^{-1} \lambda_n) 
\end{align*}
with asymptotically uniformly negligible error. As $\lambda_n f_\ell(c_\gamma \mu_n^{-1} \lambda_n) = o(1)$ by assumption, \eqref{E:fount7} implies that the previous line is equivalent to $\inf_{\theta \in \Theta_0(R,P)} \sup_{t \in K_{R,P}(\theta)} \mathbb{G}_{n,P}^*(\theta, t)$ up to asymptotically negligible error. 

The proof of \eqref{E:ub9} under the additional imposition of Assumption \ref{AU:6'} is similar and closely follows that of Theorem \ref{T:two}.  
\end{proof}

\bibliographystyle{apalike}

\bibliography{Thesisbib.bib}

\end{document}